
\documentclass[preprint,11pt]{elsarticle}
\usepackage{}

\usepackage{amsfonts}
\usepackage{amsmath}
\usepackage{amssymb}
\usepackage{latexsym}
\usepackage{graphicx}
\usepackage{ntheorem}
\usepackage[usenames]{color}

\usepackage{epsfig}


\input amssym.def
\newsymbol\rtimes 226F
\newfont{\nset}{msbm10}

\newtheorem{theo}{Theorem}[section]

\newtheorem{lemma}[theo]{Lemma}
\newtheorem{definition}[theo]{Definition}

\newtheorem{coro}[theo]{Corollary}

\theoremstyle{empty}
\theorembodyfont{\rmfamily}

\journal{Theoretical Computer Science}

\begin{document}

\begin{frontmatter}

\title{Maximum matchings and minimum dominating sets in Apollonian networks and extended Tower of Hanoi graphs}

\author[lable1,lable2]{Yujia jin}

\author[lable1,lable3]{Huan Li}

\author[lable1,lable3]{Zhongzhi Zhang}

\ead{zhangzz@fudan.edu.cn}

\address[lable1]{Shanghai Key Laboratory of Intelligent Information
Processing, Fudan
University, Shanghai 200433, China}
\address[lable2]{School of Mathematical Sciences, Fudan University, Shanghai 200433, China}
\address[lable3]{School of Computer Science, Fudan
University, Shanghai 200433, China}
\begin{abstract}
The Apollonian networks display the remarkable power-law and small-world properties as observed in most realistic networked systems. Their dual graphs are extended Tower of Hanoi graphs, which are obtained from the Tower of Hanoi graphs by adding a special vertex linked to all its three extreme vertices. In this paper, we study analytically maximum matchings and minimum dominating sets in Apollonian networks and their dual graphs, both of which have found vast applications in various fields, e.g. structural controllability of complex networks. For both networks, we determine their matching number, domination number, the number of maximum matchings, as well as the number of minimum dominating sets.
\end{abstract}

\begin{keyword}
Maximum matching, Minimum dominating set, Apollonian network,  Tower of Hanoi graph, Matching number, Domination number, Complex network
\end{keyword}

\end{frontmatter}

\section{Introduction}

Let $\mathcal{G}=(\mathcal{V},\,\mathcal{E})$ be an unweighted connected graph with vertex set $\mathcal{V}$ of $V$ vertices and edge set $\mathcal{E}$ of $E$ edges. A matching of graph $\mathcal{G}$ is a subset $\mathcal{M}$ of $\mathcal{E}$, in which no two edges are incident to a common vertex. A vertex linked to an edge in  $\mathcal{M}$ is said to be covered by  $\mathcal{M}$, otherwise it is vacant. A maximum matching is a matching of maximum cardinality, with a perfect matching being a particular case covering all the $V$ vertices. The cardinality of a maximum matching is called matching number of graph $\mathcal{G}$. A dominating set of a graph $\mathcal{G}$ is a subset $\mathcal{D}$ of $\mathcal{V}$, such that every vertex in $\mathcal{V}\setminus \mathcal{D}$ is connected to at least one vertex in $\mathcal{D}$.  For a set $\mathcal{D}$ with the smallest cardinality, we call it a minimum dominating set (MDS), the cardinality of  which is called domination number of graph $\mathcal{G}$.

Both the maximum matching problem~\cite{LoPl86} and MDS problem~\cite{HaHeSl98} have found numerous practical applications in different areas. For example, matching number and the number of maximum matchings are relevant  in lattice statistics~\cite{Mo64}, chemistry~\cite{Vu11}, while MDS problem is closely related to routing on ad hoc wireless networks~\cite{Wu02} and multi-document summarization in sentence graphs~\cite{ShLi10}. Very recently, it was shown that maximum matchings and MDS are important analysis tools for structural controllability of complex networks~\cite{LiBa16}, based on vertex~\cite{LiSlBa11} or edge~\cite{NeVi12} dynamics. In the context of vertex dynamics, the minimum number of driving vertices necessary to control the whole network and the possible configurations for sets of driving vertices are, respectively, determined by matching number and the number of maximum matchings in a bipartite graph derived from the original network~\cite{LiSlBa11}. In the aspect of edge dynamics, the structural controllability problem can be reduced to determining domination number and the number of MDSs in the original network~\cite{NaAk12,NaAk13}, which has an advantage of needing relatively small control energy~\cite{KlShSo17}.

Due to the wide applications, it is of theoretical and practical significance to study matching number and domination number in networks, as well as the number of maximum matchings and MDSs. In the past years, concerted efforts have been devoted to developing algorithms for the problems related to maximum matchings~\cite{Er04,YaYeZh05,YaZh05,YaZh08,TeSt09,ChFrMe10,Yu13, ZhWu15, Me16, LiZh17} and minimum dominating sets~\cite{FoGrPySt08, HeIs12,dadedeMa14, GaHaK15, CoLeLi15,ShLiZh17} from  the community of mathematics and theoretical computer science. However, solving these problems is a challenge and often computationally intractable. For example, finding a MDS of a graph is NP-hard~\cite{HaHeSl98}; enumerating maximum matchings in a generic graph is arduous~\cite{Va79TCS,Va79SiamJComput}, which is \#P-complete even in a bipartite graph~\cite{Va79SiamJComput}. At present, the maximum matching and MDS problems continue to be an active research object~\cite{AsKhLi17, GoHeKr16}.

Since determining matching number and domination number, especially counting all maximum matchings and minimum dominating sets in a general graph are formidable, it is therefore of much interest to construct or find particular graphs for which the maximum matching and MDS problem can be solved exactly~\cite{LoPl86}. In this paper, we study maximum matchings and minimum dominating sets in Apollonian networks~\cite{AnHeAnDa05} and their dual graphs---extended Tower of Hanoi graphs, which are obtained from the Tower of Hanoi graphs by adding a special vertex connected to all its three extreme vertices~\cite{KlMo05}. The Apollonian networks are very useful models capturing simultaneously the remarkable scale-free~\cite{BaAl99} small-world~\cite{WaSt98} properties in realistic networks~\cite{Ne03}, and have found many applications~\cite{AnHeAnDa05}. The deterministic construction of Apollonian networks and their dual graphs allows to treat analytically their topological and combinatorial properties. For both networks, by using the decimation technique for self-similar networks~\cite{ChChY07}, we find their matching number and domination number, and determine the number of different maximum matchings and minimum dominating sets.

\section{Constructions and properties of Apollonian networks and extended Tower of Hanoi graphs}

In this section, we give a brief introduction to constructions and properties of Apollonian networks and extended Tower of Hanoi graphs---the dual of  Apollonian networks.

\subsection{Construction means and structural properties of Apollonian networks }

The Apollonian networks are generated by an iterative approach~\cite{AnHeAnDa05}.

\begin{definition}\label{Def:AP01}
Let $\mathcal{A}_n=(\mathcal{V}(\mathcal{A}_n),\mathcal{E}(\mathcal{A}_n))$, $n \geq 0$, denote the Apollonian network after $n$ iterations, with $\mathcal{V}(\mathcal{A}_n)$ and $\mathcal{E}(\mathcal{A}_n)$ being the vertex set and the edge set, respectively. Then, $\mathcal{A}_n$ is constructed as follows:

For $n=0$, $\mathcal{A}_0$ is an equilateral triangle.

For $n\geq 1$, $\mathcal{A}_n$ is obtained from $\mathcal{A}_{n-1}$ by performing the following operation: For each  existing triangle in $\mathcal{A}_{n-1}$ that was generated at the $(n-1)$th iteration, a new vertex is created and connected to the three vertices of this triangle.
\end{definition}
Fig.~\ref{FigApo01} illustrates the construction process of Apollonian networks for the first several iterations.
 \begin{figure}[htbp]
 \centering
 \includegraphics[width=0.6\textwidth]{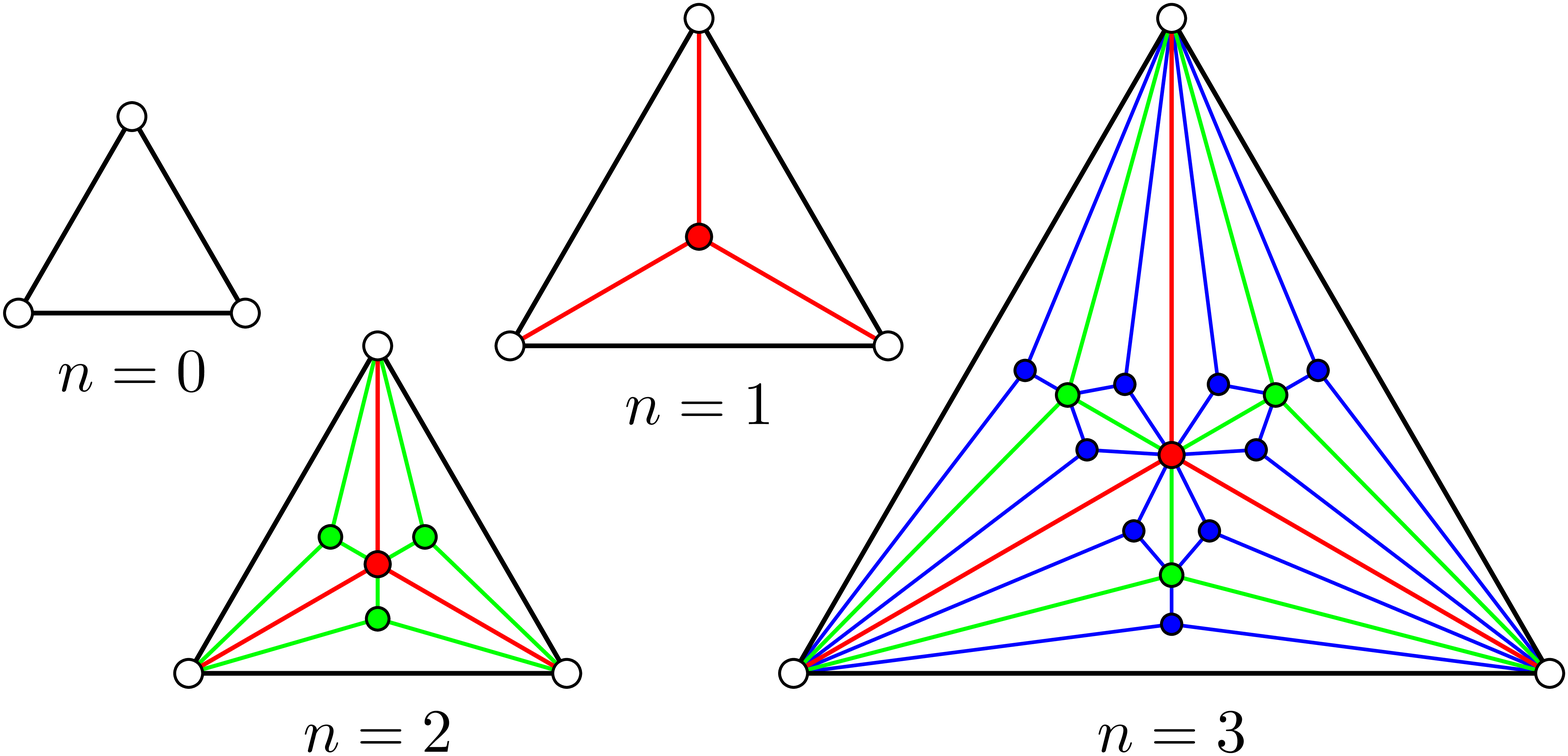}
 \caption{The iteration process for  Apollonian networks.}
 \label{FigApo01}
 \end{figure}

The  Apollonian networks are self-similar, which
suggests an alternative construction approach~\cite{ZhChZhFaGuZo08}.  For network $\mathcal{A}_{n}$, we call the three initial vertices in $\mathcal{A}_{0}$  the outmost vertices denoted by $X$, $Y$, and $Z$, respectively,  while call the vertex generated at the first iteration  the center vertex, denoted by $O$. Then, the Apollonian networks can be generated in another way as shown in Fig.~\ref{FigApo02}.

\begin{figure}
\begin{center}
\includegraphics[width=0.55\textwidth]{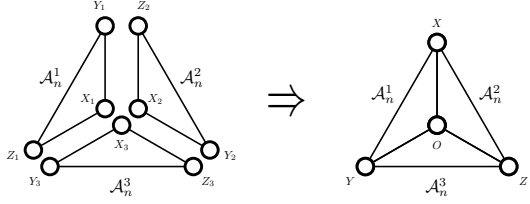}
\caption{Second construction means for  Apollonian networks. } \label{FigApo02}
\end{center}
\end{figure}

\begin{definition}\label{Def:AP02}
Given  $\mathcal{A}_{n}$, $n \geq 0$,  $\mathcal{A}_{n+1}$  is obtained from  $\mathcal{A}_{n}$  by performing the following merging operations:

(i)  Joining three copies of $\mathcal{A}_{n}$, denoted by $\mathcal{A}_{n}^{i}$, $i=1,2,3$, the three outmost vertices of which are denoted by $X_i$,  $Y_i$, and $Z_i$ ($i=1,2,3$), with $X_i$,  $Y_i$, and $Z_i$ in $\mathcal{A}_{n}^{i}$ corresponding to $X$, $Y$, and $Z$ in $\mathcal{A}_{n}$.

(ii)  Identifying $Y_1$ and $Z_2$,  $Y_3$ and $Z_1$, and $Y_2$ and $Z_3$  as the outmost vertex $X$ , $Y$,  and $Z$ of $\mathcal{A}_{n}$; and identifying  $X_1$, $X_2$ and $X_3$ as the center vertex $O$ of $\mathcal{A}_{n+1}$.
\end{definition}

Let $V_n$ and $E_n$ denote, respectively, the number of vertices and edges in $\mathcal{A}_n$.
By construction, $V_n$ and $E_n$ obey relations $V_n = 3V_{n-1}-5$ and $E_n=3E_{n-1}-3$.
With the initial conditions $V_1=4$ and $E_1=6$, we have $V_n=\frac{3^n+5}{2}$ and $E_n=\frac{3^{n+1}+3}{2}$.  Thus, $V_n$ is odd for odd $n$, and  $V_n$ is even for even $n$.

The Apollonian networks display the striking power-law and small-world characteristics of real  networks~\cite{AnHeAnDa05}. Their vertex degree obeys a power-law distribution $P(k) \sim k^{-(1+\ln 3/\ln 2)}$. 
Their mean shortest distance of all pairs of vertices increases logarithmically with the number of vertices~\cite{ZhChZhFaGuZo08}. Thus, the Apollonian networks are good models mimicking real networks, and have received considerable attention from the scientific community~\cite{ZhSuXu13,ZhWuCo14,ZhMa16}.



\subsection{Constructions and  properties of  Tower of Hanoi graphs and their extension}

In addition to the remarkable scale-free small-world properties,  other intrinsic interest of  the Apollonian networks is the relevance to the Tower of Hanoi graphs, that is, the dual graphs of Apollonian networks are extended Tower of Hanoi graphs~\cite{ZhSuXu13}.

Before introducing the dual of Apollonian networks, we first introduce the construction of the Tower of Hanoi graphs, which are derived from the Tower of Hanoi puzzle with $n$ discs~\cite{HiKlMiPeSt13}.  The puzzle consists of $n$ disks of different sizes and three pegs (towers), labeled by 0, 1, and 2, respectively. Initially, all $n$ disks are stacked on peg 0 ordered by size, with the largest on the bottom and the smallest on the top. The task is to move all disks to peg 2 complying with the following Hanoi rules:  (i) At each time step only one topmost disk may be moved; (ii) At any moment, no disk can stack on a smaller one.

A convenient and direct representation of the Tower of Hanoi problem is the graph representation. The set of $3^{n}$ configurations of the puzzle corresponds to the set of vertices. And the edges specify the legal moves of the game that obey the above rules. The resultant graphs are called the Tower of Hanoi graphs, commonly abbreviated to the Hanoi graphs. Each vertex $\xi$ in the graphs can be labeled by an $n$-tuple $\xi=(\xi_1 \xi_2 \ldots \xi_{n-1} \xi_n)$ or in the regular expression form $\xi=\xi_1 \xi_2 \ldots \xi_{n-1} \xi_n$, where $0\leq \xi_i \leq 2$ represents the peg on which the $i$th smallest disk resides. Then the labels of any pair of adjacent vertices have the Gray property, that is, they differ in exactly one position~\cite{Sa97}. We use $\mathcal{H}_n=(\mathcal{V}(\mathcal{H}_n), \mathcal{E}(\mathcal{H}_n))$  to denote  the Hanoi
graph of $n$ discs, where  $\mathcal{V}(\mathcal{H}_n)$  is the vertex set, and $\mathcal{E}(\mathcal{H}_n)$ is the edge set. Figure~\ref{FigHanoi01} shows  $\mathcal{H}_1$, $\mathcal{H}_2$ and $\mathcal{H}_3$.

\begin{figure}[htb]
\begin{center}
\includegraphics[width=0.8\textwidth]{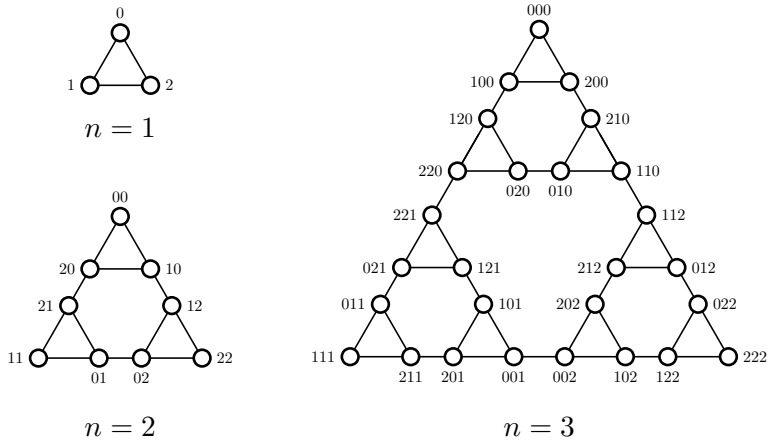}
\end{center}
\caption{Illustration of Hanoi graphs $\mathcal{H}_n$, $n=1,2,3$.}\label{FigHanoi01}
\end{figure}


In  graph  $\mathcal{H}_n$, there are $3^{n}$ vertices.  The three vertices  $00\ldots0$,   $11\ldots 1$,  and $22\ldots 2$,  are called extreme vertices, each of which corresponds to a state/configuration with all $n$ disks on one of the $3$ pegs. 
Henceforth we use   $a_n$,   $b_n$,  and $c_n$ to represent, respectively, the three extreme vertices $00\ldots0$,   $11\ldots 1$,  and $22\ldots 2$ in   $\mathcal{H}_n$.  Every extreme vertex  has $2$ neighbors, while the degree of the remaining $3^{n}-3$ vertices is 3. Thus, the number of edges in $\mathcal{H}_n$ is $\frac{3^{n+1}-3}{2}$. 

Note that $\mathcal{H}_{n+1}$, $n\ge 1$, can be obtained from three copies of $\mathcal{H}_{n}$ joined by three edges, each one connecting a pair of extreme vertices from two different replicas of $\mathcal{H}_{n}$ denoted  by
$\mathcal{H}^1_{n}$, $\mathcal{H}^2_{n}$ and $\mathcal{H}^3_{n}$, see Fig.~\ref{FigHanoi02}. The properties of the Hanoi graphs have been extensively studied~\cite{HiKlMiPeSt13,HiklSa17}.

\begin{figure}[htb]
\begin{center}
\includegraphics[width=0.5\textwidth]{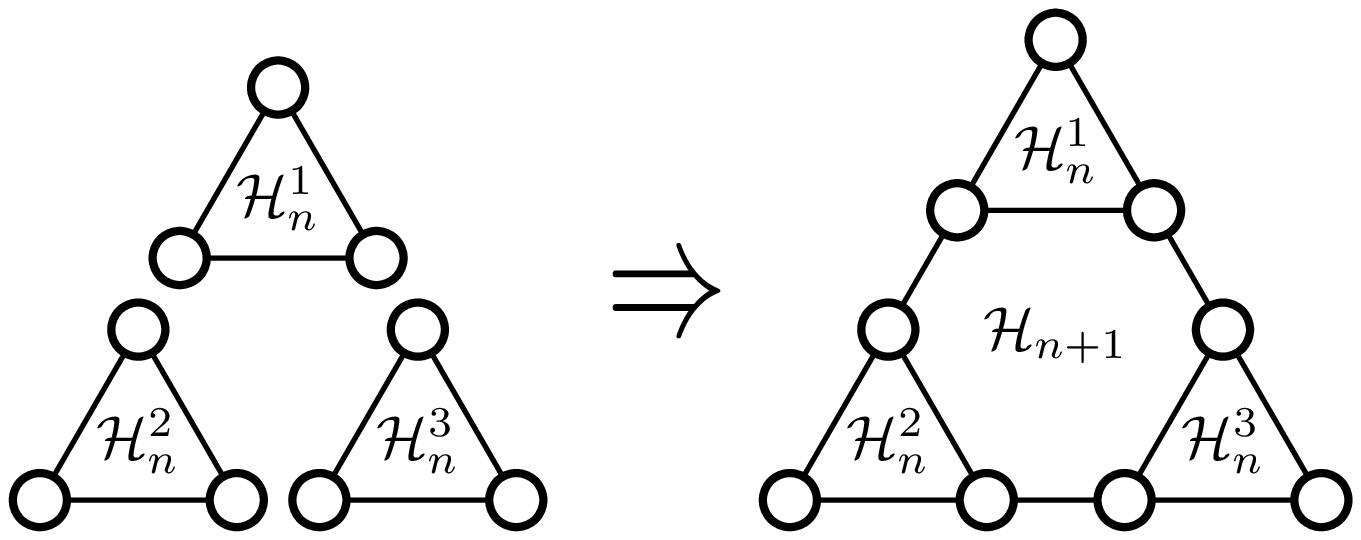}
\end{center}
\caption{Recursive construction for Hanoi graphs. }\label{FigHanoi02}
\end{figure}

The extended Tower of Hanoi graph  $\mathcal{S}_n^+=(\mathcal{V}(\mathcal{S}_n^+),\mathcal{E}(\mathcal{S}_n^+) )$, $n\geq 1$ is obtained from  $\mathcal{H}_{n}$  by adding a special vertex $s$ and three edges linking $s$ to the three extreme vertices of $\mathcal{H}_{n}$.
In graph $\mathcal{S}^+_n$,  the degree of every vertex is 3. Thus, the number of vertices and edges in  $\mathcal{S}^+_n$ is $V_n=3^n+1$ and $E_n=\frac{3^{n+1}+3}{2}$, respectively. For odd $n$,  $3^n+1$ is a multiple of 4. Fig.~\ref{FigExHanoi01} shows an  extended Tower of Hanoi graph $\mathcal{S}_{3}^+$.  It has been shown that  extended Tower of Hanoi graph  $\mathcal{S}_n^+$, $n \geq 1$, is isomorphic to the dual of  Apollonian network $\mathcal{A}_n$~\cite{ZhSuXu13}.


\begin{figure}
\centering
    \includegraphics[width=0.55\textwidth]{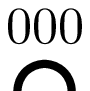}
	\caption{\label{FigExHanoi01} Illustration of an extended Tower of Hanoi graph $\mathcal{S}_{3}^+$.}
\end{figure}

\section{Maximum matchings in Apollonian networks and extended Tower of Hanoi graphs }

In this section, we study the matching number and the number of maximum matchings in Apollonian networks and extended Tower of Hanoi graphs. We will show that there exist no perfect matchings in  Apollonian networks $\mathcal{A}_n$ when $n\geq 3$, but there always exist  perfect matchings in extended Tower of Hanoi graphs $\mathcal{S}_{n}^+$ for all $n\geq 1$.   Moreover,  we will determine the number of distinct maximum matchings in both networks.

\subsection{Matching number and the number of maximum matchings in Apollonian networks }

We first study the matching number and the number of maximum matchings in Apollonian networks. 

\subsubsection{Matching number }






Although for a general graph, its matching number is not easy to determine, for Apollonian network $\mathcal{A}_n$, we can obtain it by using its self-similar structure. Let $\beta_n$ be the matching number for $\mathcal{A}_n$. In order to find $\beta_n$, we introduce some useful quantities. According to the number of covered outmost vertices, all matchings of $\mathcal{A}_n$ can be sorted into four classes: $\Omega_n^k$, $k=0,1,2,3$, with $\Omega_n^k$ representing the set of matchings covering exactly $k$ outmost vertices.  Let $\Phi_n^k$, $k=0,1,2,3$ be subset of $\Omega_n^k$ consisting of all elements with  maximum size (cardinality).  Let $\beta^k_n$, $k=0,1,2,3$, denote the size of each matching in   $\Phi_n^k$.  

\begin{lemma}\label{lem01}
The matching number of network $\mathcal{A}_n$, $n\geq1$, satisfies\par
$\beta_n = \max \{\beta^0_n,\beta^1_n,\beta^2_n,\beta^3_n\}$.
\end{lemma}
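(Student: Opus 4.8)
The plan is to show that the four classes $\Omega_n^0,\Omega_n^1,\Omega_n^2,\Omega_n^3$ form a partition of the set of all matchings of $\mathcal{A}_n$, and then argue that the global matching number is simply the largest of the four class-wise maxima. The partition claim is immediate from the definitions: every matching $\mathcal{M}$ of $\mathcal{A}_n$ covers a well-defined number $k$ of the three outmost vertices $X$, $Y$, $Z$, and this number lies in $\{0,1,2,3\}$; hence $\mathcal{M}$ belongs to exactly one $\Omega_n^k$. Thus $\bigcup_{k=0}^3 \Omega_n^k$ is precisely the collection of all matchings of $\mathcal{A}_n$, and the union is disjoint.

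Given this partition, I would invoke the definition of $\beta_n$ as the maximum size of any matching of $\mathcal{A}_n$. First I would observe that for each fixed $k$, the quantity $\beta_n^k$ is by definition the maximum cardinality among matchings in $\Omega_n^k$ (this is exactly what it means for $\Phi_n^k$ to consist of the maximum-size elements of $\Omega_n^k$). Consequently $\max\{\beta_n^0,\beta_n^1,\beta_n^2,\beta_n^3\}$ equals the maximum matching size taken over the union of all four classes, which is the set of all matchings. That maximum is $\beta_n$, giving the asserted equality.

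Formally, the argument splits into two inequalities. For $\beta_n \ge \max_k \beta_n^k$: every matching in $\Phi_n^k$ is a matching of $\mathcal{A}_n$ of size $\beta_n^k$, so $\beta_n \ge \beta_n^k$ holds for each $k$, and hence $\beta_n$ dominates the maximum over $k$. For the reverse $\beta_n \le \max_k \beta_n^k$: take any maximum matching $\mathcal{M}$ of $\mathcal{A}_n$, so $|\mathcal{M}| = \beta_n$; since $\mathcal{M}$ lies in some class $\Omega_n^{k_0}$, we have $|\mathcal{M}| \le \beta_n^{k_0} \le \max_k \beta_n^k$ by maximality of $\beta_n^{k_0}$ within its class. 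Combining the two inequalities yields the equality.

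There is essentially no hard obstacle here: the statement is a bookkeeping lemma that sets up the notation $\beta_n^k$ for the later recursion, and its proof is the elementary observation that the maximum over a finite partition of a set equals the largest of the maxima over the blocks. The only point that needs a word of care is confirming that each class $\Omega_n^k$ is nonempty (or at least that the convention for an empty class does not spoil the maximum), but for $n \ge 1$ one can exhibit explicit matchings realizing each value of $k$, so every $\beta_n^k$ is well-defined and the overall maximum is attained by a genuine matching of $\mathcal{A}_n$.
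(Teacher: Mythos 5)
Your proof is correct, and it follows exactly the reasoning the paper leaves implicit: the paper states this lemma without proof, treating it as an immediate consequence of the fact that the classes $\Omega_n^0,\Omega_n^1,\Omega_n^2,\Omega_n^3$ partition the set of all matchings of $\mathcal{A}_n$, which is precisely the partition-plus-two-inequalities argument you spell out. Your additional remark about nonemptiness of each class is a sensible bit of care that the paper silently assumes (and which holds for $n\geq 1$, as the tabulated values $(\beta^0_1,\beta^1_1,\beta^2_1,\beta^3_1)=(0,1,1,2)$ confirm).
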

After reducing the problem of determining $\beta_n$ to computing $\beta^0_n,\beta^1_n,\beta^2_n$ and $\beta^3_n$, we next evaluate their values.  

\begin{lemma}\label{lem02}
For two successive Apollonian networks $\mathcal{A}_n$ and   $\mathcal{A}_{n+1}$,  $n\geq1$,  the following relations hold.
\begin{equation}\label{M1}
\beta^0_{n+1}=\max\{3\beta^0_n,2\beta^0_n+\beta^1_n\},
\end{equation}
\begin{equation}\label{M2}
\beta^1_{n+1} = \max\{2\beta^0_n+\beta^1_n,2\beta^0_n+\beta^2_n,\beta^0_n +2\beta^1_n\},
\end{equation}
\begin{equation}\label{M3}
\beta^2_{n+1}= \max\{2\beta^0_n+\beta^3_n,\beta^0_n+\beta^1_n+\beta^2_n,3\beta^1_n,\beta^0_n +2\beta^1_n,2\beta^0_n+\beta^2_n\},
\end{equation}
\begin{equation}\label{M4}
\beta^3_{n+1} = \max\{\beta^0_n+\beta^1_n+\beta^2_n,\beta^0_n+\beta^1_n+\beta^3_n,\beta^0_n+2\beta^2_n,3\beta^1_n,2\beta^1_n+\beta^2_n\}.
\end{equation}
\end{lemma}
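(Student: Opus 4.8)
The plan is to exploit the self-similar decomposition of Definition~\ref{Def:AP02}. Writing $\mathcal{A}_{n+1}$ as the union of the three copies $\mathcal{A}_n^1,\mathcal{A}_n^2,\mathcal{A}_n^3$, these meet only in the four interface vertices $O,X,Y,Z$ and share exactly the three edges $OX$, $OY$, $OZ$ (a sharing pattern consistent with $V_{n+1}=3V_n-5$ and $E_{n+1}=3E_n-3$); under the labelling of Definition~\ref{Def:AP02} the outmost triples are $\{O,X,Y\}$, $\{O,Z,X\}$, $\{O,Y,Z\}$ for $i=1,2,3$, so each of $X,Y,Z$ is shared by two copies while $O$ is shared by all three. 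Given a matching $\mathcal{M}$ of $\mathcal{A}_{n+1}$, I would set $\mathcal{M}_i=\mathcal{M}\cap\mathcal{E}(\mathcal{A}_n^i)$; each $\mathcal{M}_i$ is a matching of a copy of $\mathcal{A}_n$, and since only the shared edges are counted twice, $|\mathcal{M}|=|\mathcal{M}_1|+|\mathcal{M}_2|+|\mathcal{M}_3|-|S|$ with $S=\mathcal{M}\cap\{OX,OY,OZ\}$.

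First I would reduce the four identities to an optimization over coverage indices. For a maximum-size matching in a prescribed class, each restriction $\mathcal{M}_i$ may be taken maximum among matchings of $\mathcal{A}_n$ covering the same subset of that copy's outmost vertices, so that $|\mathcal{M}_i|=\beta_n^{k_i}$, where $k_i$ is the number of outmost vertices of the $i$th copy covered by $\mathcal{M}_i$. The essential bookkeeping is that an interface vertex is covered by $\mathcal{M}$ precisely when it is covered in at least one restriction, and --- because $\mathcal{M}$ is a matching --- in at most one restriction unless a shared edge is used; thus the triple $(k_1,k_2,k_3)$ must be compatible with a single global coverage of $O,X,Y,Z$. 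In particular $O$ is either left vacant or covered through a non-shared edge (necessarily an internal edge) of exactly one copy, while each of $X,Y,Z$ is covered in at most one of its two copies.

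Next I would run the enumeration, using the symmetry among $X,Y,Z$ to fix which $k$ of them are to be covered. Ranging over the finitely many consistent ways of covering these vertices and $O$ through non-shared edges yields a triple $(k_1,k_2,k_3)$ and the value $\beta_n^{k_1}+\beta_n^{k_2}+\beta_n^{k_3}$; I expect the distinct sums obtained to be exactly the entries of the four lists in (\ref{M1})--(\ref{M4}). The identity for each $k$ then follows from two inequalities: achievability, by exhibiting for every listed term an explicit matching assembled from maximum matchings in the relevant classes $\Phi_n^{k_i}$ of the three copies; and optimality, by showing every matching in the class has size at most the largest listed term.

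The hard part will be the three shared edges $OX,OY,OZ$. When such an edge lies in $\mathcal{M}$ it covers two interface vertices at once, contributes the $-|S|$ correction, and forces both incident copies to realize their coverage through one prescribed edge, so their restrictions need not attain $\beta_n^{k_i}$. I must therefore prove that every shared-edge configuration is dominated by one in which $O,X,Y,Z$ are covered only through non-shared edges, so that $|S|=0$ and each $\mathcal{M}_i$ freely attains $\beta_n^{k_i}$. This domination step, together with the per-copy maximality, is what collapses the analysis onto the non-shared-edge enumeration and produces precisely the stated maxima; I anticipate it will lean on elementary monotonicity relations among $\beta_n^0,\beta_n^1,\beta_n^2,\beta_n^3$ (e.g.\ that raising the number of covered outmost vertices changes the matching number by a bounded amount), which would be established alongside the recursions.
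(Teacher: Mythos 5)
Your proposal follows the same route as the paper: the paper's proof of this lemma is precisely the graphical enumeration, via Figs.~\ref{fig:01}--\ref{fig:04}, of the coverage patterns of $O,X,Y,Z$ by the three copies, and your $|S|=0$ enumeration reproduces the listed terms verbatim. What you add is an explicit treatment of the point the paper suppresses here: its figures implicitly assume each restriction $\mathcal{M}_i$ attains $\beta_n^{k_i}$ and that no matching edge joins two outmost vertices of a copy, and it only justifies the latter in the proof of Theorem~\ref{Thm02}, for $n\geq3$, by appeal to Lemma~\ref{lem04}. Two cautions about how you plan to close your domination step. First, you cannot borrow the monotonicity relations from Lemma~\ref{lem04}: that lemma is deduced from the present recursions (circularity) and holds only for $n\geq3$, whereas the recursions are asserted for $n\geq1$. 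Second, the naive bound $|\mathcal{M}_i|\leq\beta_n^{k_i}$ is genuinely insufficient in the shared-edge case: for instance, in the class of Eq.~\eqref{M4} with $OX\in\mathcal{M}$, $Y$ covered inside copy 1 and $Z$ inside copy 2, it gives $2\beta_n^3+\beta_n^0-1$, which for $n\geq3$ strictly exceeds the stated maximum, so a sharper bound is mandatory rather than optional. The sharp bound is elementary and non-circular: if $OX\in\mathcal{M}$, delete it from the two restrictions containing it; this shows $|\mathcal{M}_i|\leq\beta_n^{0}+1$ or $|\mathcal{M}_i|\leq\beta_n^{1}+1$ according to whether $\mathcal{M}_i$ leaves the third outmost vertex of that copy vacant or covers it. Running your enumeration with these bounds, every shared-edge configuration is bounded above by one of $3\beta_n^0+1$, $2\beta_n^0+\beta_n^1+1$, $\beta_n^0+2\beta_n^1+1$, $2\beta_n^0+\beta_n^2+1$, and each of these is dominated by a term already present in the corresponding list using only the inequality $\beta_n^2\geq\beta_n^0+1$, which is immediate for all $n\geq1$: add the edge $XY$ to any maximum matching that leaves $X,Y,Z$ vacant. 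With that replacement for your anticipated "monotonicity relations", your plan closes completely (the achievability direction being exactly the paper's figures), and is in fact more rigorous than the paper's own proof of the lemma.
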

\begin{proof}
This lemma  can be  proved graphically.    
Note that  each set $\Omega_{n+1}^k$, $k=0,1,2,3$, can be further classified into two groups of matchings: one group includes those matchings covering the center vertex in $\mathcal{A}_{n+1}$, while the other group contains those matchings with the center vertex vacant.  According to the classification, Figs.~\ref{fig:01},~\ref{fig:02},~\ref{fig:03} and~\ref{fig:04}  show, respectively, configurations of   matchings
for $\mathcal{A}_{n+1}$ belonging  to $\Omega_{n+1}^k$, $k=0,1,2,3$, that comprise all the  matchings in $\Phi_{n+1}^k$.   In Figs.~\ref{fig:01},~\ref{fig:02},~\ref{fig:03} and~\ref{fig:04}, only the outmost and center vertices are shown explicitly, with filled circles being covered vertices and empty circles  being vacant vertices.  From Figs.~\ref{fig:01},~\ref{fig:02},~\ref{fig:03} and~\ref{fig:04}, we can establish the relations in
 Eqs.~\eqref{M1},~\eqref{M2},~\eqref{M3} and~\eqref{M4}.
\end{proof}

\begin{figure}
\centering
 \includegraphics[width=0.6\textwidth]{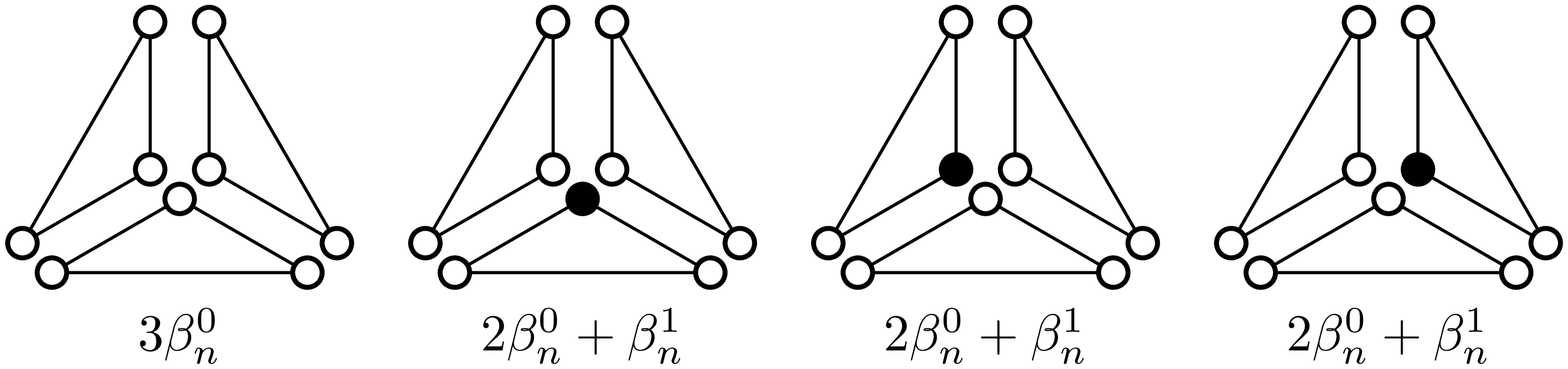}
\caption{\label{fig:01}  Configurations of matchings for $\mathcal{A}_{n+1}$   belonging to  $\Omega_{n+1}^0$, which contain all matchings  in $\Phi_{n+1}^0$.}
\end{figure}
\begin{figure}
\centering
\includegraphics[width=\textwidth]{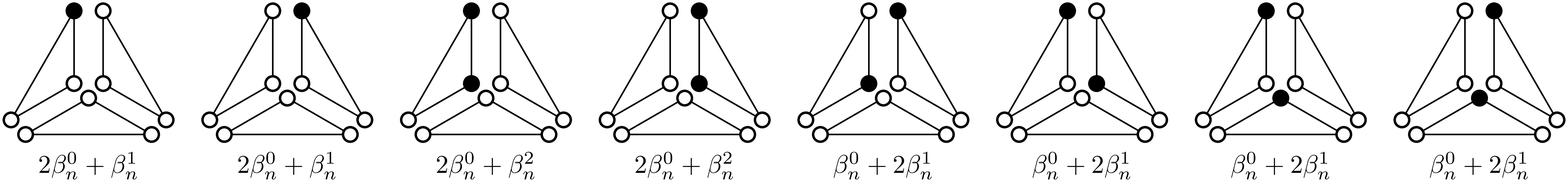}
\caption{\label{fig:02} Configurations of matchings for $\mathcal{A}_{n+1}$   belonging to  $\Omega_{n+1}^1$, which contain all matchings in $\Phi_{n+1}^1$.  In view of the symmetry, we only illustrate those matchings of $\mathcal{A}_{n+1}$, for each of which $X$ is covered, while $Y$ and $Z$  are vacant, but omit  matchings that cover only $Y$ (or $Z$), but not cover  $X$ and $Z$ (or $X$ and $Y$). }
\end{figure}
\begin{figure}
\centering
\includegraphics[width=\textwidth]{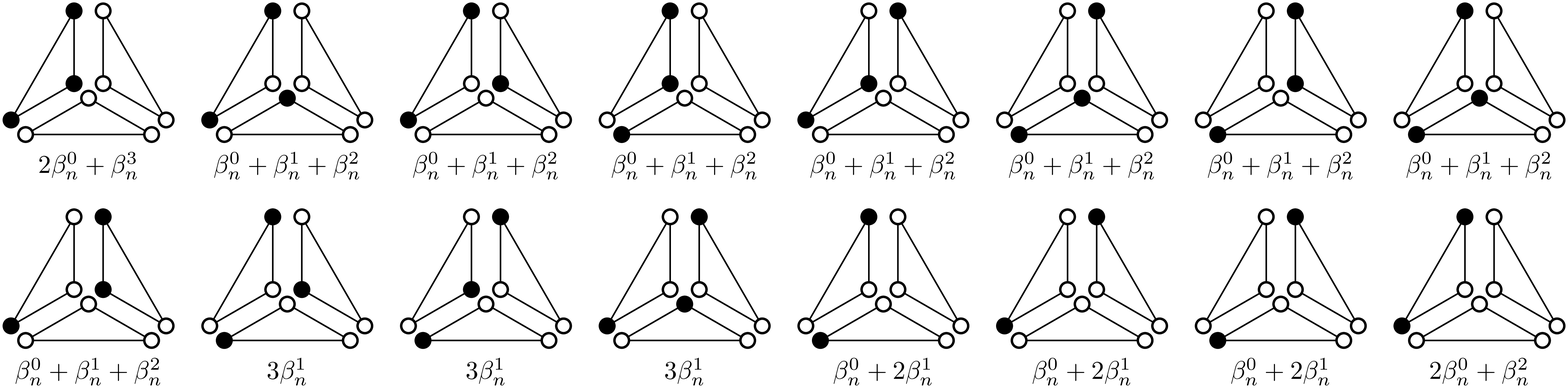}
\caption{\label{fig:03} Configurations of matchings for $\mathcal{A}_{n+1}$   belonging to  $\Omega_{n+1}^2$, which contain all matchings in $\Phi_{n+1}^2$. By symmetry, as in Fig.~\ref{fig:02}, we only illustrate those matchings of $\mathcal{A}_{n+1}$, for each of which $X$ and $Y$ are covered, while  $Z$  is vacant.}
\end{figure}
\begin{figure}
\centering
    \includegraphics[width=\textwidth]{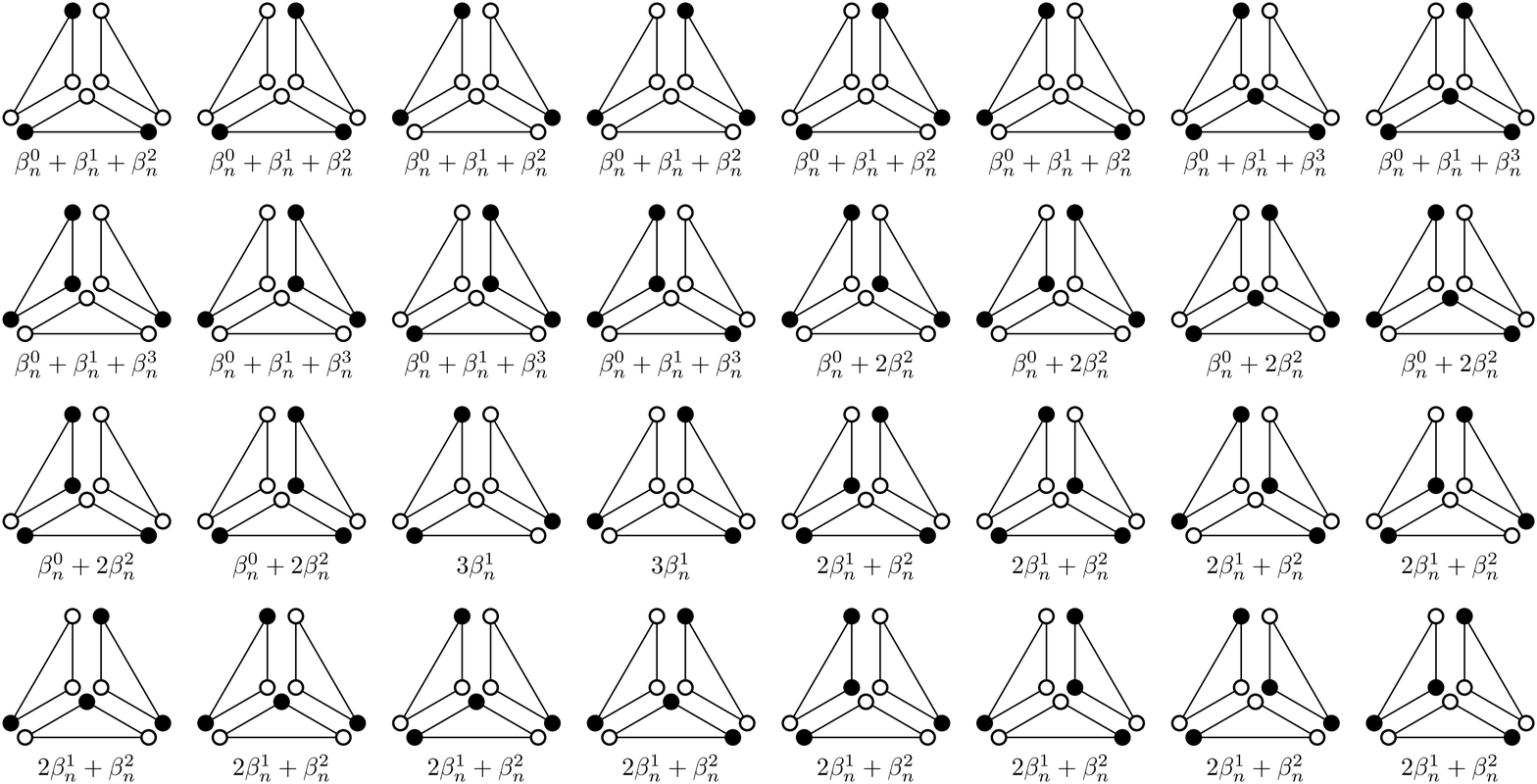}
	\caption{\label{fig:04} Configurations of matchings for $\mathcal{A}_{n+1}$   belonging to  $\Omega_{n+1}^3$, which contain all matchings in $\Phi_{n+1}^3$.}
\end{figure}

For small $n$,   $\beta^0_n$, $\beta^1_n$ ,  $\beta^2_n$ and $\beta^3_n$ can by directly computed by hand.  For example, for
$n=1,2,3$,  we have $(\beta^0_1,\beta^1_1,\beta^2_1,\beta^3_1)=(0,1,1,2)$, $(\beta^0_2,\beta^1_2,\beta^2_2,\beta^3_2)=(1,2,3,3)$, and $(\beta^0_3,\beta^1_3,\beta^2_3,\beta^3_3)=(4,5,6,7)$, respectively. For $n\geq 3$, the following lemma establishes  the relations between $\beta^0_n$, $\beta^1_n$ ,  $\beta^2_n$ and $\beta^3_n$.


\begin{lemma}\label{lem04}
For Apollonian network $\mathcal{A}_n$, $n \geq 3$, quantities $\beta^0_n,\beta^1_n,\beta^2_n,\beta^3_n$, satisfy the following relations:
\begin{equation}\label{M5}
\beta^0_n+3=\beta^1_n+2=\beta^2_n+1=\beta^3_n.
\end{equation}
\end{lemma}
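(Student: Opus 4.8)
The plan is to prove Eq.~\eqref{M5} by induction on $n$, feeding the recurrences of Lemma~\ref{lem02} into an inductive hypothesis that encodes the claimed arithmetic-progression structure. The base case $n=3$ is already supplied by the hand-computed values $(\beta^0_3,\beta^1_3,\beta^2_3,\beta^3_3)=(4,5,6,7)$, which satisfy $\beta^0_3+3=\beta^1_3+2=\beta^2_3+1=\beta^3_3=7$. For the inductive step I would assume Eq.~\eqref{M5} at some level $n\geq 3$, equivalently $\beta^1_n=\beta^0_n+1$, $\beta^2_n=\beta^0_n+2$, and $\beta^3_n=\beta^0_n+3$. Writing $b:=\beta^0_n$ for brevity, the hypothesis collapses the four level-$n$ quantities into the progression $b,\,b+1,\,b+2,\,b+3$.

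The key step is then a direct substitution of this progression into the four recurrences~\eqref{M1}--\eqref{M4} and evaluation of each maximum. For instance, Eq.~\eqref{M1} gives $\beta^0_{n+1}=\max\{3b,\,3b+1\}=3b+1$; carrying out the analogous substitution in Eqs.~\eqref{M2},~\eqref{M3},~\eqref{M4} yields $\beta^1_{n+1}=3b+2$, $\beta^2_{n+1}=3b+3$, and $\beta^3_{n+1}=3b+4$. The crucial structural observation making these maxima trivial is that, after substitution, every competing term inside each $\max$ is an affine function of $b$ with the \emph{same} slope $3$, so selecting the maximum reduces to comparing additive constants alone. The winning offset in each successive recurrence turns out to be exactly one larger than in the previous one, producing the new progression $3b+1,\,3b+2,\,3b+3,\,3b+4$. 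Reading off the differences gives $\beta^1_{n+1}=\beta^0_{n+1}+1$, $\beta^2_{n+1}=\beta^0_{n+1}+2$, and $\beta^3_{n+1}=\beta^0_{n+1}+3$, which is Eq.~\eqref{M5} at level $n+1$, closing the induction.

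Since the inductive step is mechanical once the hypothesis is substituted, I do not expect a genuine analytic obstacle; the only real care is bookkeeping. The mildly delicate point is verifying that the inductive hypothesis is strong enough: each $\beta^k_{n+1}$ depends on all four quantities at level $n$, so one must confirm that knowing merely their mutual differences (and not the absolute value $b$) suffices to resolve every $\max$. This is precisely what the common-slope observation guarantees, since the shared unknown $b$ cancels out of every pairwise comparison and therefore never affects which term attains the maximum. I would nonetheless check each of the four maxima term by term, to be sure that no tie or alternative maximizer disturbs the pattern, but the substitution confirms that the progression with common difference $1$ is reproduced exactly at each stage.
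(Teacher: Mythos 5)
Your proposal is correct and follows essentially the same route as the paper: induction on $n$ with base case $(\beta^0_3,\beta^1_3,\beta^2_3,\beta^3_3)=(4,5,6,7)$ and an inductive step that feeds the hypothesis into the recurrences of Lemma~\ref{lem02}; the paper merely states this step without writing out the substitutions, which you carry out explicitly (and correctly, yielding $3b+1,\,3b+2,\,3b+3,\,3b+4$). Your common-slope observation is exactly why the paper's terse inductive step goes through.
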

\begin{proof}
By induction. When $n=3$, the result is true.

Let us suppose that the relations hold true for $n=k$, i.e., $\beta^0_k+3=\beta^1_k+2=\beta^2_k+1=\beta^k_n$. For $n=k+1$, by Lemma~\ref{lem02} we get $\beta^0_{k+1}+3=\beta^1_{k+1}+2=\beta^2_{k+1}+1=\beta^3_{k+1}$. This completes the proof.
\end{proof}

\begin{theo}\label{Thm01}
For Apollonian network $\mathcal{A}_n$, $n \geq 3$,  its matching number is
\begin{equation}\label{M6}
\beta_n=\frac{3^{n-1}+5}{2}.
\end{equation}
\end{theo}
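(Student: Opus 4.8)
The plan is to assemble the three preceding lemmas into a short computation. First I would use Lemma~\ref{lem04}: for $n \geq 3$ the identity $\beta^0_n+3=\beta^1_n+2=\beta^2_n+1=\beta^3_n$ forces the strict ordering $\beta^0_n < \beta^1_n < \beta^2_n < \beta^3_n$, so the maximum in Lemma~\ref{lem01} is attained at the last entry. This immediately reduces the theorem to establishing a closed form for $\beta^3_n$, namely $\beta_n=\beta^3_n$.

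Next I would derive a single-term recurrence for $\beta^3_n$. Writing $t=\beta^3_n$ and substituting the relations of Lemma~\ref{lem04}, i.e.\ $\beta^0_n=t-3$, $\beta^1_n=t-2$, $\beta^2_n=t-1$, into the recurrence~\eqref{M4} for $\beta^3_{n+1}$, each of the five arguments of the max turns into an affine function of $t$. A direct evaluation shows three of them equal $3t-5$ and the remaining two equal $3t-6$, so the maximum is $3t-5$. This yields the linear recurrence $\beta^3_{n+1}=3\beta^3_n-5$, valid for all $n\geq 3$. Finally I would solve it using the base value $\beta^3_3=7$ recorded in the text: the map $x\mapsto 3x-5$ has fixed point $x=\tfrac{5}{2}$, whence $\beta^3_n-\tfrac{5}{2}=3^{\,n-3}\bigl(\beta^3_3-\tfrac{5}{2}\bigr)=3^{\,n-3}\cdot\tfrac{9}{2}=\tfrac{3^{n-1}}{2}$, giving $\beta^3_n=\frac{3^{n-1}+5}{2}$ and hence the formula~\eqref{M6}.

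The only step requiring genuine care is the substitution into~\eqref{M4}: I must confirm that among the five candidate values the genuine maximum is $3t-5$ and not $3t-6$, since an incorrect branch would change the recurrence. This is not really an obstacle, though, because the same substitution underlies the inductive step already carried out in Lemma~\ref{lem04}, so the selected branch is automatically consistent with the maintained relations. Everything else — verifying the base case $n=3$ and solving the affine recurrence — is routine, so I expect the entire argument to be short once the ordering $\beta^3_n=\max_k\beta^k_n$ is in hand.
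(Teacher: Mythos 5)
Your proposal is correct and follows essentially the same route as the paper: both reduce $\beta_n$ to $\beta^3_n$ via Lemmas~\ref{lem01} and~\ref{lem04}, then substitute the relations $\beta^0_n=\beta^3_n-3$, $\beta^1_n=\beta^3_n-2$, $\beta^2_n=\beta^3_n-1$ into the recurrence~\eqref{M4} to obtain $\beta^3_{n+1}=3\beta^3_n-5$, and solve this affine recurrence with the initial value $\beta^3_3=7$. Your explicit check that the maximum among the five candidates is $3\beta^3_n-5$ (attained by three of them, with the other two equal to $3\beta^3_n-6$) is a detail the paper states without verification, so your write-up is if anything slightly more careful.
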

\begin{proof}
Lemma~\ref{lem04} combined with Lemma~\ref{lem01} shows that $\beta_n=\beta^3_n$.  From Lemma~\ref{lem02}, we obtain the recursive relation governing $\beta^3_{n+1}$ and   $\beta^3_n$ as  $\beta^3_{n+1} = \max\{\beta^0_n+\beta^1_n+\beta^2_n,\beta^0_n+\beta^1_n+\beta^3_n,\beta^0_n+2\beta^2_n,3\beta^1_n,2\beta^1_n+\beta^2_n\}=3(\beta^3_n-3)+4$, which, together with  the initial value $\beta^3_3=7$, is solved to yield $\beta_n=\beta^3_n=\frac{5+3^{n-1}}{2}$.   
\end{proof}

The matching number $\beta_n=\frac{3^{n-1}+5}{2}$ for Apollonian network $\mathcal{A}_n$,  $n\geq3$, is much smaller than half the number of vertices $\frac{V_n}{2}=\frac{3^{n}+5}{4}$,  indicating that no perfect matching exists in $\mathcal{A}_n$.

\begin{coro}
For Apollonian network $\mathcal{A}_n$, $n \geq 3$, the maximum size of a matching  covering exactly $0$,  $1$, and $2$ outmost vertices, is
\begin{equation}\label{M7}
\beta^0_n = \frac{3^{n-1}-1}{2},
\end{equation}
\begin{equation}\label{M8}
\beta^1_n = \frac{3^{n-1}+1}{2}
\end{equation}
and
\begin{equation}\label{M9}
\beta^2_n = \frac{3^{n-1}+3}{2},
\end{equation}
respectively.
\end{coro}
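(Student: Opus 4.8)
The plan is to treat this Corollary as an immediate consequence of the two preceding results rather than to redo any combinatorial analysis. Theorem~\ref{Thm01} already identifies the matching number with $\beta^3_n$ and supplies its closed form $\beta^3_n = \frac{3^{n-1}+5}{2}$ for $n\geq 3$, while Lemma~\ref{lem04} gives the relations $\beta^0_n + 3 = \beta^1_n + 2 = \beta^2_n + 1 = \beta^3_n$ over the same range $n\geq 3$. These relations say exactly that $\beta^0_n,\beta^1_n,\beta^2_n,\beta^3_n$ are four consecutive integers with largest value $\beta^3_n$, so once the top value is pinned down, the other three are determined.

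First I would read off $\beta^3_n$ from Theorem~\ref{Thm01}. Then, rewriting Lemma~\ref{lem04} as $\beta^2_n = \beta^3_n - 1$, $\beta^1_n = \beta^3_n - 2$ and $\beta^0_n = \beta^3_n - 3$, I would substitute the closed form and simplify each subtraction. This yields $\beta^2_n = \frac{3^{n-1}+5}{2} - 1 = \frac{3^{n-1}+3}{2}$, $\beta^1_n = \frac{3^{n-1}+5}{2} - 2 = \frac{3^{n-1}+1}{2}$, and $\beta^0_n = \frac{3^{n-1}+5}{2} - 3 = \frac{3^{n-1}-1}{2}$, which are precisely Eqs.~\eqref{M9},~\eqref{M8} and~\eqref{M7}. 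No induction or further case analysis is needed, since all the work was absorbed into Lemma~\ref{lem04} and Theorem~\ref{Thm01}.

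There is essentially no obstacle in the derivation itself; the only point deserving care is the range of validity. Lemma~\ref{lem04} is established by induction with base case $n=3$, and Theorem~\ref{Thm01} is stated for $n\geq 3$, so the three formulas hold exactly on $n\geq 3$, matching the hypothesis of the Corollary. As a final sanity check I would compare with the hand-computed values $(\beta^0_3,\beta^1_3,\beta^2_3,\beta^3_3)=(4,5,6,7)$ recorded before Lemma~\ref{lem04}: setting $n=3$ gives $\frac{3^{2}-1}{2}=4$, $\frac{3^{2}+1}{2}=5$ and $\frac{3^{2}+3}{2}=6$, confirming consistency of the closed forms with the base case.
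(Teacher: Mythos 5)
Your proposal is correct and follows exactly the paper's own route: the published proof likewise deduces the Corollary immediately from Lemma~\ref{lem04} (Eq.~\eqref{M5}) and Theorem~\ref{Thm01} (Eq.~\eqref{M6}), and your substitutions and the $n=3$ consistency check are just a fuller write-up of that same one-line argument.
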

\begin{proof}
By Eqs.~\eqref{M5} and~\eqref{M6}, the result is concluded immediately.
\end{proof}


\subsubsection{Number of maximum matchings}

Let $\tau_n$ be the number of maximum matchings of $\mathcal{A}_n$.  To determine $\tau_n$, we introduce three more quantities.  Let $\varphi_n$ be the number of matchings of $\mathcal{A}_n$ that are maximum among those matchings for which  all the three outmost vertices $X$, $Y$, and $Z$ are vacant.  Let $\theta_n$ be the number of matchings of $\mathcal{A}_n$ that are maximum among those matchings for which  $X$ is covered, while $Y$ and $Z$  are vacant. By symmetry,   $\theta_n$ equals the number of matchings of $\mathcal{A}_n$ that are maximum among those matchings for which $Y$ ($Z$) is covered, while $X$ and $Z$   ($X$ and $Y$) are vacant.  Let $\phi_n$ be the number of matchings of $\mathcal{A}_n$ that are maximum among those matchings for which $X$ is vacant, while $Y$ and $Z$  are covered.  Obviously,  $\phi_n$ is equal to the number of matchings of $\mathcal{A}_n$ that are maximum among those matchings for which $Y$  ($Z$) is vacant, while $X$ and $Z$  ($X$ and $Y$)  are covered. For small $n$, quantities $\varphi_n$, $\theta_n$, $\phi_n$, and $\tau_n$ can be easily determined by using a computer.  For example,  in the case of  $n=3$, $\varphi_3=108$, $\theta_3=246$, $\phi_3=480$, and $\tau_3=738$.  For large $n$, they can be determined recursively as follows.



\begin{theo}\label{Thm02}
For Apollonian network $\mathcal{A}_n$, $n \geq3$, the four quantities $\varphi_n$, $\theta_n$, $\phi_n$, and $\tau_n$ can be recursively determined according to  the following relations:
\begin{equation}\label{M10}
\varphi_{n+1} = 3\theta_n\varphi_n^2,
\end{equation}
\begin{equation}\label{M11}
\theta_{n+1} = 2 \phi_n\varphi_n^2+4\theta_n^2\varphi_n,
\end{equation}
\begin{equation}\label{M12}
\phi_{n+1} = \tau_n\varphi_n^2+8\phi_n\theta_n\varphi_n+3\theta_n^3,
\end{equation}
\begin{equation}\label{M13}
\tau_{n+1}=6\tau_n\theta_n\varphi_n+6\varphi_n\phi_n^2+12\theta_n^2\phi_n,
\end{equation}
with initial values $\varphi_3=108$, $\theta_3=246$, $\phi_3=480$, and $\tau_3=738$.
\end{theo}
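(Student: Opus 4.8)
The plan is to exploit the self-similar decomposition of Definition~\ref{Def:AP02}, under which $\mathcal{A}_{n+1}$ is assembled from three copies $\mathcal{A}_n^1,\mathcal{A}_n^2,\mathcal{A}_n^3$ that overlap only at four vertices: the center $O=X_1=X_2=X_3$ and the three outmost vertices $X=Y_1=Z_2$, $Y=Z_1=Y_3$, $Z=Y_2=Z_3$. Since the edge set of $\mathcal{A}_{n+1}$ is the disjoint union of the three copies' edge sets, every matching $M$ of $\mathcal{A}_{n+1}$ splits uniquely as $M=M_1\cup M_2\cup M_3$ with $M_i$ a matching of $\mathcal{A}_n^i$, the only coupling being that each shared vertex may be covered by at most one copy. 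First I would encode the boundary state of copy $i$ by indicators $(x_i,y_i,z_i)$ recording whether $X_i,Y_i,Z_i$ are covered; the coupling then reads $x_1+x_2+x_3\le 1$ at $O$, together with $y_1+z_2\le 1$, $z_1+y_3\le 1$, $y_2+z_3\le 1$ at $X,Y,Z$.

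Next I would pin down which boundary states occur in a maximum matching of each prescribed outmost class. A short exchange argument shows that in a maximum matching each restriction $M_i$ must itself be a maximum matching of $\mathcal{A}_n^i$ among matchings with the same covered outmost set, so $|M_i|=\beta_n^{k_i}$ with $k_i=x_i+y_i+z_i$. Invoking Lemma~\ref{lem04} in the form $\beta_n^{k}=\beta_n^3-3+k$ gives $|M|=3\beta_n^3-9+\sum_i x_i+m$, where $m$ is the number of outmost vertices of $\mathcal{A}_{n+1}$ that $M$ covers. Because $\sum_i(y_i+z_i)=m$ is forced by the three outmost constraints, maximizing over a class of fixed $m$ amounts to maximizing $\sum_i x_i$ subject to $\sum_i x_i\le 1$; hence every maximum matching of the class has exactly one copy covering $O$, a condition that is clearly achievable. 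This recovers $\beta_{n+1}^m=3\beta_n^3-8+m$, which is strictly increasing in $m$, so the global maximum matchings are precisely those of the $m=3$ class counted by $\tau_{n+1}$, and it reduces each recursion to a sum over finitely many admissible boundary configurations.

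I would then turn the count into a product rule: by the $S_3$ symmetry of $\mathcal{A}_n$ permuting $\{X,Y,Z\}$, the number of maximum matchings of a copy whose covered outmost set has size $0,1,2,3$ is $\varphi_n,\theta_n,\phi_n,\tau_n$ respectively, independent of which vertices are covered. Since the three copies' matchings may be chosen independently once the boundary is fixed, and distinct choices yield distinct matchings (the edge sets being disjoint), the number of maximum matchings realizing a given configuration is the product of the corresponding copy counts. It then remains to enumerate the configurations for each class --- $X,Y,Z$ all vacant for $\varphi_{n+1}$, exactly $X$ covered for $\theta_{n+1}$, exactly $Y,Z$ covered for $\phi_{n+1}$, all three covered for $\tau_{n+1}$ --- always imposing $\sum_i x_i=1$, and to add up the products. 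For $\varphi_{n+1}$ this forces all $y_i=z_i=0$ and leaves only the choice of which copy covers $O$, giving $3\theta_n\varphi_n^2$; the classes for $\theta_{n+1}$ and $\phi_{n+1}$ involve $6$ and $12$ configurations, and the fully covered class for $\tau_{n+1}$ involves $24$ (three choices for the copy at $O$ times the $2^3$ ways of resolving the three outmost constraints), which I would organize in a table recording each $(k_1,k_2,k_3)$ and collecting like monomials.

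The main obstacle I anticipate is purely the bookkeeping of the $\tau_{n+1}$ case: verifying that the $24$ admissible configurations collapse to exactly $6\tau_n\theta_n\varphi_n+6\varphi_n\phi_n^2+12\theta_n^2\phi_n$ after grouping, while checking that no configuration is counted twice and that every global maximum matching is produced exactly once by the decomposition. The conceptual ingredients --- local maximality forcing $|M_i|=\beta_n^{k_i}$, the symmetry identification of the copy counts with $\varphi_n,\theta_n,\phi_n,\tau_n$, and the independence justifying the product --- are each short, and the figures for the outmost classes make the enumeration transparent. Finally, the base values $\varphi_3=108$, $\theta_3=246$, $\phi_3=480$, $\tau_3=738$ are obtained by direct computation, anchoring the recursion at $n=3$.
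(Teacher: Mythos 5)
Your overall strategy is the same as the paper's: decompose $\mathcal{A}_{n+1}$ into the three copies $\mathcal{A}_n^i$, use Lemma~\ref{lem04} to pin down the size of each restriction, and sum products of copy counts over admissible boundary configurations (your configuration counts $3$, $6$, $12$, $24$ match the coefficient totals in Eqs.~\eqref{M10}--\eqref{M13}). However, the proposal rests on a false premise: the edge set of $\mathcal{A}_{n+1}$ is \emph{not} the disjoint union of the three copies' edge sets. Under the identifications of Definition~\ref{Def:AP02} ($X=Y_1=Z_2$, $Y=Z_1=Y_3$, $Z=Y_2=Z_3$, $O=X_1=X_2=X_3$), the copies' triangle edges merge in pairs: the edge $OX$ of $\mathcal{A}_{n+1}$ is simultaneously the edge $X_1Y_1$ of $\mathcal{A}_n^1$ and the edge $X_2Z_2$ of $\mathcal{A}_n^2$, and likewise each of $OY$, $OZ$ lies in two copies; this is exactly why $E_{n+1}=3E_n-3$ rather than $3E_n$. (The outer edges $XY$, $YZ$, $ZX$ each lie in a single copy and cause no trouble.) Consequently, for a matching $M$ containing $OX$, your splitting $M=M_1\cup M_2\cup M_3$ is not the claimed unique disjoint decomposition, your indicator bookkeeping yields $x_1=x_2=1$ even though $M$ is a perfectly valid matching, and such matchings are silently excluded from (or would be miscounted by) your product/sum enumeration. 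Your exchange argument inherits the same ambiguity, since ``the restriction to copy $i$'' is ill-defined on these doubled edges.

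So there is a genuine missing step: you must prove that for $n\ge 3$ no matching in $\Phi_{n+1}^k$, $k=1,2,3$, contains any of the edges $OX$, $OY$, $OZ$ (for $k=0$ this is automatic, since each of these edges covers an outmost vertex of $\mathcal{A}_{n+1}$). This is precisely the first part of the paper's proof: supposing such an edge $e$, say joining $X_i$ and $Y_i$, lies in a class-maximum matching $\mathcal{M}$, one compares the edges of $\mathcal{M}$ inside $\mathcal{A}_n^i$ before and after deleting $e$, and derives $\beta_n^0+1\ge\beta_n^2$, contradicting $\beta_n^0+2=\beta_n^2$ from Lemma~\ref{lem04} (with an analogous argument when $Z_i$ is covered, and for the classes $k=2,3$). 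Note that this is where the hypothesis $n\ge 3$ actually does work, since the relations of Lemma~\ref{lem04} fail for smaller $n$. Once this exclusion is established, the three copies' matchings really are edge-disjoint, your coupling constraints and the product rule become valid, and the remainder of your enumeration goes through essentially as in the paper.
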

\begin{proof}
First we show that all matchings listed in Figs.~\ref{fig:01}, ~\ref{fig:02}, ~\ref{fig:03}, and ~\ref{fig:04} are truly disjoint for $n\ge3$. For this purpose, it is sufficient to show that when $n\ge3$, for any matching $\mathcal{M}$ of $\mathcal{A}_{n+1}$ belonging to $\Phi_{n+1}^k$, $k=1,2,3$, no edge $e$ connecting the outmost vertex $X_i$ and another outmost vertex $Y_i$  (or $Z_i$ ) in  $\mathcal{A}^i_{n}$ ($i=1,2,3$) is in  $\mathcal{M}$.  This can be proved as follows.

We only prove the case that such an edge $e$ is inexistent in any matching  $\mathcal{M}$  in $\Phi_{n+1}^1$. Otherwise,  suppose that the edge $e$ linked to $X_i$ and $Y_i$  is in $\mathcal{M}$. We distinguish two cases: (i) $Z_i$ is vacant; (ii) $Z_i$ is covered by an edge in $\mathcal{A}^i_{n}$. For  case (i),  all  edges belonging to  $\mathcal{A}^i_{n}$ and $\mathcal{M}$ at the same time constitute a matching of $\mathcal{A}^i_{n}$ that is in $\Phi_{n}^2$. If we delete $e$, then all the remaining edges simultaneously belonging to  $\mathcal{A}^i_{n}$ and $\mathcal{M}$ form a matching of $\mathcal{A}^i_{n}$ that is in $\Omega_{n}^0$. This means $\beta^0_n+1\ge\beta^2_n$,  in contradiction with the fact $\beta^0_n+2=\beta^2_n$ obtained in Lemma~\ref{lem04}. Analogously, we can prove  case (ii).

In a similar argument, we can prove that  such an edge $e$ does not exist in any matching  $\mathcal{M}$  in $\Phi_{n+1}^2$ or $\Phi_{n+1}^3$.




Then, to prove Eqs.~\eqref{M10},~\eqref{M11},~\eqref{M12}, and~\eqref{M13}, it suffices to count all cases that yield maximum matchings in $\Phi_{n}^k$, $k=0,1,2,3$, listed in Figs.~\ref{fig:01},~\ref{fig:02},~\ref{fig:03} and~\ref{fig:04},  respectively.

We here only prove  Eq.~\eqref{M12}, since the other three  equations can be proved in a similar way.   

According to Eq.~\eqref{M3} and Lemma~\ref{lem04},  all the  possible configurations of maximum matchings with size $\beta^2_{n+1}$ are those having size $2\beta^0_n+\beta^3_n$, $\beta^0_n+\beta^1_n+\beta^2_n$ or $3\beta^1_n$. From Fig.~\ref{fig:03}, we can see that among all maximum matchings of  $\mathcal{A}_n$ covering $X$ and $Y$ but not  $Z$,  the number of those with size $2\beta^0_n+\beta^3_n$, $\beta^0_n+\beta^1_n+\beta^2_n$, $3\beta^1_n$ is equal to $\tau_n\varphi_n^2$, $8\phi_n\theta_n\varphi_n$, and $3\theta_n^3$, respectively. By definition of $\phi_{n+1}$,  Eq.~\eqref{M12} holds.
\end{proof}

Theorem~\ref{Thm02} shows that the number of maximum matchings of Apollonian network $\mathcal{A}_n$ can be calculated in $\mathcal{O}(\ln V_n)$ time.

Applying Eqs.~\eqref{M10}-\eqref{M13}, the values of $\varphi_n$, $\theta_n$, $\phi_n$,  and $\tau_n$ can be determined for small $n$ as listed in Table~\ref{MatchingNo}. 
However, it is very difficult to obtain closed-form expressions for these quantities.

\begin{table*}
\caption{The first several values of $\varphi_n$, $\theta_n$, $\phi_n$,  and $\tau_n$.}\label{MatchingNo}
\normalsize
\centering
\begin{small}
\begin{tabular}{|c|c|c|c|c|c|}
\hline
\raisebox{-0.5ex}{$n$} & \raisebox{-0.5ex}{1} & \raisebox{-0.5ex}{2} & \raisebox{-0.5ex}{3} & \raisebox{-0.5ex}{4} & \raisebox{-0.5ex}{5} \\[0.5ex]
\hline
\hline
\raisebox{-0.5ex}{$V_n$} & \raisebox{-0.5ex}{4} & \raisebox{-0.5ex}{7} & \raisebox{-0.5ex}{16} & \raisebox{-0.5ex}{43} & \raisebox{-0.5ex}{124} \\[0.5ex]
\hline
\raisebox{-0.5ex}{$\varphi_n$} & \raisebox{-0.5ex}{1} & \raisebox{-0.5ex}{3} & \raisebox{-0.5ex}{108} & \raisebox{-0.5ex}{8,608,032} & \raisebox{-0.5ex}{8,300,560,282,271,896,633,344}\\[0.5ex]
\hline
\raisebox{-0.5ex}{$\theta_n$} & \raisebox{-0.5ex}{1} & \raisebox{-0.5ex}{4} & \raisebox{-0.5ex}{246} & \raisebox{-0.5ex}{37,340,352} & \raisebox{-0.5ex}{71,022,198,720,317,181,345,792} \\[0.5ex]
\hline
\raisebox{-0.5ex}{$\phi_n$} & \raisebox{-0.5ex}{1} & \raisebox{-0.5ex}{3} & \raisebox{-0.5ex}{480} & \raisebox{-0.5ex}{155,289,960} & \raisebox{-0.5ex}{601,114,712,194,856,725,217,280} \\[0.5ex]
\hline
\raisebox{-0.5ex}{$\tau_n$} & \raisebox{-0.5ex}{3} & \raisebox{-0.5ex}{23} & \raisebox{-0.5ex}{738} & \raisebox{-0.5ex}{615,514,464} & \raisebox{-0.5ex}{5,030,805,301,520,123,200,352,256} \\[0.5ex]
\hline
\end{tabular}
\end{small}
\end{table*}


\subsubsection{Asymptotic growth constant for the number of maximum matchings}

Table~\ref{MatchingNo}  shows that number of maximum matchings  $\tau_n$ grows exponentially with $V_n$. Let $Z(\mathcal{A}_n)=   \frac{\ln \tau_n}{V_{n}}$,  then we can define a constant describing this exponential growth:
\begin{equation}\label{MMcons}
Z(\mathcal{A})= \lim_{{n}\rightarrow\infty}Z(\mathcal{A}_n)=  \lim_{{n}\rightarrow\infty} \frac{\ln \tau_n}{V_{n}}.
\end{equation}
Below we will show that this limit exists. To evaluate the asymptotic growth constant in Eq.~\eqref{MMcons}, we need the following lemma.

\begin{lemma}\label{lem05}
For $n\geq4$, the quantities $\varphi_n$, $\theta_n$, $\phi_n$,  and $\tau_n$, describing the number of  different maximum matchings in Apollonian network $\mathcal{A}_n$ under certain conditions, obey the following relation:
\begin{equation}
\varphi_n\leq\theta_n\leq\phi_n\leq\tau_n.
\end{equation}
\end{lemma}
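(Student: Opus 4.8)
The plan is to establish the three inequalities $\varphi_n\le\theta_n$, $\theta_n\le\phi_n$, $\phi_n\le\tau_n$ simultaneously by induction on $n$, feeding the inductive hypothesis into the recursions \eqref{M10}--\eqref{M13} of Theorem~\ref{Thm02}. The base case is $n=4$, which can be read off directly from Table~\ref{MatchingNo}, where $\varphi_4\le\theta_4\le\phi_4\le\tau_4$; since the recursions are valid for all $n\ge3$, verifying the step then propagates the chain to every $n\ge4$. Throughout I would use only that the four quantities are positive, so that each summand in the recursions is nonnegative and individual terms may be dropped or compared freely. Thus I fix $n\ge4$, assume $\varphi_n\le\theta_n\le\phi_n\le\tau_n$, and aim to derive the same chain at level $n+1$.

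The first two steps are routine reductions. For $\varphi_{n+1}\le\theta_{n+1}$, combining \eqref{M10} and \eqref{M11} and cancelling the common factor $\varphi_n$ reduces the claim to $3\theta_n\varphi_n\le 2\phi_n\varphi_n+4\theta_n^2$, which follows at once from $2\phi_n\varphi_n\ge 2\theta_n\varphi_n$ (using $\phi_n\ge\theta_n$) together with $4\theta_n^2\ge\theta_n\varphi_n$ (using $\theta_n\ge\varphi_n$). For $\theta_{n+1}\le\phi_{n+1}$ I would discard the two positive summands $\tau_n\varphi_n^2$ and $3\theta_n^3$ from \eqref{M12}, so that it suffices to prove $\theta_{n+1}\le 8\phi_n\theta_n\varphi_n$; after dividing by $2\varphi_n$ this becomes $\phi_n\varphi_n+2\theta_n^2\le 4\phi_n\theta_n$, which again follows from the hypothesis, since $\phi_n\varphi_n\le\phi_n\theta_n$ and $2\theta_n^2\le 2\phi_n\theta_n\le 3\phi_n\theta_n$.

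The main obstacle is the last inequality $\phi_{n+1}\le\tau_{n+1}$, because a naive term-by-term comparison of \eqref{M12} with \eqref{M13} fails: the cross term $8\phi_n\theta_n\varphi_n$ of $\phi_{n+1}$ carries a larger coefficient than the corresponding term $6\varphi_n\phi_n^2\ge 6\phi_n\theta_n\varphi_n$ of $\tau_{n+1}$, leaving a deficit of $2\phi_n\theta_n\varphi_n$. I would close this gap by extracting surplus from the other two terms of $\tau_{n+1}$. Since $\theta_n\ge\varphi_n$, the first term satisfies $6\tau_n\theta_n\varphi_n\ge\tau_n\varphi_n^2+5\tau_n\theta_n\varphi_n$, so it both dominates the leading term $\tau_n\varphi_n^2$ of $\phi_{n+1}$ and releases a surplus $5\tau_n\theta_n\varphi_n\ge 2\phi_n\theta_n\varphi_n$ (here using $\tau_n\ge\phi_n$); meanwhile $12\theta_n^2\phi_n\ge 12\theta_n^3\ge 3\theta_n^3$ absorbs the last term of $\phi_{n+1}$. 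Adding the bound $6\varphi_n\phi_n^2\ge 6\phi_n\theta_n\varphi_n$ to the surplus covers the full term $8\phi_n\theta_n\varphi_n$, and collecting everything yields $\phi_{n+1}\le\tau_{n+1}$, which completes the induction.

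I expect step three to be the only part requiring genuine care, precisely because of the coefficient mismatch $6<8$; the reserve provided by $\tau_n\ge\phi_n$ in the first term of $\tau_{n+1}$ is exactly what makes the bookkeeping work. A purely combinatorial injection of the matchings counted by each quantity into those counted by the next is conceivable but awkward, since consecutive quantities count matchings of different sizes, so the recursion-driven induction above is the cleaner route.
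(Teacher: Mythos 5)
Your proof is correct and takes essentially the same route as the paper: induction on $n$ with base case $n=4$ read from Table~\ref{MatchingNo}, feeding the inductive hypothesis into the recursions \eqref{M10}--\eqref{M13} of Theorem~\ref{Thm02}. The paper states the inductive step in one line without detail, so your explicit treatment of the coefficient mismatch in $\phi_{n+1}\le\tau_{n+1}$ (covering the deficit $2\phi_n\theta_n\varphi_n$ with the surplus from $6\tau_n\theta_n\varphi_n$ via $\tau_n\ge\phi_n$) simply supplies the bookkeeping the paper leaves to the reader.
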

\begin{proof}
We prove this lemma by induction on $n$.  Table~\ref{MatchingNo}  shows  that for $n=4$, the result holds. Let us assume that  the lemma is true for $n=k$, that is, $\varphi_k\leq\theta_k\leq\phi_k\leq\tau_k$. For $n=k+1$, according to induction assumption and  Theorem~\ref{Thm02},  we obtain $\varphi_{k+1}\leq\theta_{k+1}\leq\phi_{k+1}\leq\tau_{k+1}$. This completes the proof.
\end{proof}

In order to estimate the asymptotic growth constant $Z(\mathcal{A}_n)$,  we define several ratios: $\alpha_n=\frac{\varphi_n}{\theta_n}$, $\lambda_n=\frac{\tau_n}{\phi_n}$, and $\eta_n=\frac{\theta_n}{\phi_n}$. 
\begin{lemma}\label{lem06}
Let $\eta_0=\frac{2}{3}$ and $\lambda_0= \frac{12}{11}$. Then for $ n\geq3$, the three ratios $\alpha_n$, $\lambda_n$, and $\eta_n$ obey following relations:
$\alpha_n\downarrow0$,  $\eta_n\leq \eta_0=\frac{2}{3}$,  and $\lambda_n\geq \lambda_0= \frac{12}{11}$.
\end{lemma}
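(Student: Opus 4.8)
The plan is to eliminate the unbounded quantities $\varphi_n,\theta_n,\phi_n,\tau_n$ in favour of the three ratios and then carry out one simultaneous induction on $n$. First I would substitute $\varphi_n=\alpha_n\theta_n$, $\phi_n=\theta_n/\eta_n$ and $\tau_n=\lambda_n\theta_n/\eta_n$ into the recursions~\eqref{M10}--\eqref{M13} of Theorem~\ref{Thm02}. Since those recursions are homogeneous of degree three in the four counts, every factor of $\theta_n$ cancels between numerator and denominator, leaving a closed autonomous system in $(\alpha_n,\eta_n,\lambda_n)$, namely
\begin{equation}\label{ratios1}
\alpha_{n+1}=\frac{3\alpha_n\eta_n}{2\alpha_n+4\eta_n},\qquad
\eta_{n+1}=\frac{2\alpha_n^2+4\alpha_n\eta_n}{\lambda_n\alpha_n^2+8\alpha_n+3\eta_n}
\end{equation}
and
\begin{equation}\label{ratios2}
\lambda_{n+1}=\frac{6\lambda_n\alpha_n\eta_n+6\alpha_n+12\eta_n}{\lambda_n\alpha_n^2\eta_n+8\alpha_n\eta_n+3\eta_n^2}.
\end{equation}
From the values at $n=3$ one reads off the base data $\alpha_3=\tfrac{18}{41}<\tfrac12$, $\eta_3=\tfrac{41}{80}\le\tfrac23$ and $\lambda_3=\tfrac{123}{80}\ge\tfrac{12}{11}$, which anchor the induction.

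The claim $\alpha_n\downarrow0$ decouples completely and is the easy part. Because all four counts are positive, the denominator in the first recursion of~\eqref{ratios1} satisfies $2\alpha_n+4\eta_n>4\eta_n$, so $\alpha_{n+1}<\tfrac34\alpha_n$ holds unconditionally. This single inequality yields both assertions at once: strict monotone decrease, and $\alpha_n\le(\tfrac34)^{\,n-3}\alpha_3\to0$. A by-product I will reuse is the uniform bound $\alpha_n\le\alpha_3<\tfrac12$ for all $n\ge3$.

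The heart of the argument is the coupled induction for $\eta_n\le\tfrac23$ and $\lambda_n\ge\tfrac{12}{11}$; neither bound propagates without the other, so I would advance them in tandem. Assuming $\eta_n\le\tfrac23$, $\lambda_n\ge\tfrac{12}{11}$ and $\alpha_n<\tfrac12$, the inequality $\eta_{n+1}\le\tfrac23$ is equivalent, after clearing the positive denominator in~\eqref{ratios1}, to $(6-2\lambda_n)\alpha_n^2+(12\eta_n-16)\alpha_n-6\eta_n\le0$; using $\lambda_n\ge\tfrac{12}{11}$ to cap the quadratic coefficient at $\tfrac{42}{11}$ and $\eta_n\le\tfrac23$ to force $12\eta_n-16\le-8$, the left side is at most $\alpha_n(\tfrac{42}{11}\alpha_n-8)$, which is negative because $\alpha_n<\tfrac12$. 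Likewise $\lambda_{n+1}\ge\tfrac{12}{11}$ reduces, after clearing denominators in~\eqref{ratios2}, to a polynomial inequality that I would bound term by term: replacing $\lambda_n$ by its lower bound $\tfrac{12}{11}$, using $132\eta_n-36\eta_n^2\ge108\eta_n$ on $(0,\tfrac23]$, and $\alpha_n<\tfrac12$ (hence $\alpha_n^2<\tfrac14$) to absorb the two negative cross terms, leaves a strictly positive remainder $66\alpha_n+\tfrac{1020}{11}\eta_n$. These two estimates close the induction.

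I expect the only genuine obstacle to be the bookkeeping in this coupled step: reducing each ratio inequality to the correct polynomial and then selecting, for every monomial, which of the bounds $\lambda_n\ge\tfrac{12}{11}$, $\eta_n\le\tfrac23$, $\alpha_n<\tfrac12$ to substitute so that the residual has a definite sign. Because $\eta_n$ and $\lambda_n$ occupy each other's denominators the two estimates are genuinely intertwined and cannot be separated. Once each reduced polynomial is shown to be dominated by a manifestly negative (respectively positive) expression, nothing further is needed beyond the geometric decay of $\alpha_n$ already established, so no fixed-point or limiting analysis of $\eta_n$ and $\lambda_n$ enters.
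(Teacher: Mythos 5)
Your proposal is correct: I verified that the three ratio recursions do follow from Eqs.~\eqref{M10}--\eqref{M13} by degree-three homogeneity, that the base values are $\alpha_3=\tfrac{18}{41}$, $\eta_3=\tfrac{41}{80}$, $\lambda_3=\tfrac{123}{80}$, and that both polynomial estimates in your coupled induction step close (including the remainder $66\alpha_n+\tfrac{1020}{11}\eta_n$). However, your route through the key step differs genuinely from the paper's. The paper never forms the autonomous system in $(\alpha_n,\eta_n,\lambda_n)$ and never inducts on the bounds $\eta_n\le\tfrac23$ and $\lambda_n\ge\tfrac{12}{11}$ themselves; instead it invokes the ordering $\varphi_n\le\theta_n\le\phi_n\le\tau_n$ of Lemma~\ref{lem05} to majorize the numerator and minorize the denominator of $\eta_{n+1}$ (and symmetrically for $\lambda_{n+1}$) monomial by monomial in the original count recursions, so that both bounds drop out in two lines for every $n$ simultaneously, the computed initial values serving only to cover $n=3$. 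Your proof, by contrast, is self-contained: it makes no use of Lemma~\ref{lem05}, at the price of introducing the auxiliary invariant $\alpha_n<\tfrac12$ (without which the $\eta$ and $\lambda$ bounds would not propagate through each other) and of heavier polynomial bookkeeping, including the implicit but valid observation that the coefficient $6\alpha_n\eta_n(11-2\alpha_n)$ of $\lambda_n$ is positive so that substituting the lower bound $\tfrac{12}{11}$ is legitimate. What each approach buys: the paper's argument has lighter algebra but rests on the monotonicity lemma, which is itself established by induction from Theorem~\ref{Thm02}, so an induction is hidden one level down; yours trades that dependency for an explicit three-variable dynamical system, which makes transparent exactly why the constants $\tfrac23$ and $\tfrac{12}{11}$ are invariant under the recursion.
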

\begin{proof}
It is easy to see that $\alpha_n, \lambda_n, \eta_n$ are all constantly positive. By definition,
\begin{equation}
\alpha_{n+1}=\frac{\varphi_{n+1}}{\theta_{n+1}}=\frac{3\theta_n\varphi_n^2}{2\phi_n\varphi_n^2+4\theta_n^2\varphi_n}\leq\frac{3}{4}\cdot\frac{\varphi_n}{\theta_n}=\frac{3}{4}\alpha_n. \nonumber
\end{equation}
Thus,  $\alpha_n$ is a decreasing positive sequence that converges to zero, implying  $\alpha_n\downarrow0$ as $n\rightarrow\infty$.


Next we bound $\eta_n$ and $\lambda_n$. By definitions, we have
\begin{equation}
\eta_{n+1}=\frac{\theta_{n+1}}{\phi_{n+1}}= \frac{2 \phi_n\varphi_n^2+4\theta_n^2\varphi_n}{\tau_n\varphi_n^2+8\phi_n\theta_n\varphi_n+3\theta_n^3}
\leq\frac{2 \phi_n\theta_n\varphi_n+2\phi_n\theta_n\varphi_n+2\theta_n^3}{8\phi_n\theta_n\varphi_n+3\theta_n^3}
\leq\frac{2}{3} \nonumber
\end{equation}
and
\begin{align}
\lambda_{n+1}=\frac{\tau_{n+1}}{\phi_{n+1}}= & \frac{6\tau_n\theta_n\varphi_n+6\varphi_n\phi_n^2+12\theta_n^2\phi_n}{\tau_n\varphi_n^2+8\phi_n\theta_n\varphi_n+3\theta_n^3} \geq \frac{6\tau_n\theta_n\varphi_n+6\varphi_n\phi_n^2+12\theta_n^2\phi_n}{\tau_n\theta_n\varphi_n+8\phi_n\theta_n^2+3\theta_n^2\phi_n} \nonumber\\
\geq & \frac{6\tau_n\theta_n\varphi_n+12\theta_n^2\phi_n}{\tau_n\theta_n\varphi_n+11\theta_n^2\phi_n}
\geq  \frac{12}{11}, \nonumber
\end{align}
which, together with  the initial values $\eta_3\approx0.5125\leq\frac{2}{3}$ and $\lambda_3\approx1.5375\geq\frac{12}{11}$,  show that   $\eta_n\leq \eta_0=\frac{2}{3}$,  and $\lambda_n\geq \lambda_0= \frac{12}{11}$.
\end{proof}




The above properties of related ratios  are given in a loose way. One can  provide a tighter bound for $\lambda_n$, for example.  However, they are enough to find the asymptotic growth constant $Z(\mathcal{A})$.
\begin{theo}\label{Thm03}
For Apollonian network $\mathcal{A}_n$,  $Z(\mathcal{A}_n)$ has a limit when $n$ is sufficiently large, and the asymptotic growth constant  for the number of maximum matchings is $Z(\mathcal{A}_n)=0.43017\ldots$.
\end{theo}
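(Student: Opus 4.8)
The plan is to prove that the four logarithms $\ln\varphi_n,\ln\theta_n,\ln\phi_n,\ln\tau_n$ all share the common leading growth $K\cdot 3^n$, and that the constant sought is $Z(\mathcal{A})=2K$, since $V_n=\frac{3^n+5}{2}\sim\frac{3^n}{2}$ forces $\frac{\ln\tau_n}{V_n}$ and $\frac{2\ln\tau_n}{3^n}$ to have the same limit. The first step is to extract a scalar recursion for $\ln\tau_n$ from Eq.~\eqref{M13}. Factoring out $12\theta_n^2\phi_n$ and recalling $\alpha_n=\varphi_n/\theta_n$, $\eta_n=\theta_n/\phi_n$, $\lambda_n=\tau_n/\phi_n$, I would write
\[
\tau_{n+1}=12\,\theta_n^2\phi_n\,(1+\epsilon_n),\qquad \epsilon_n=\tfrac{1}{2}\lambda_n\alpha_n+\tfrac{1}{2}\tfrac{\alpha_n}{\eta_n},
\]
and then, using $\theta_n=\tau_n\eta_n/\lambda_n$ and $\phi_n=\tau_n/\lambda_n$, obtain
\[
\ln\tau_{n+1}=3\ln\tau_n+C_n,\qquad C_n=\ln 12+2\ln\eta_n-3\ln\lambda_n+\ln(1+\epsilon_n).
\]
Dividing by $3^{n+1}$ and telescoping yields $\dfrac{\ln\tau_n}{3^n}=\dfrac{\ln\tau_3}{27}+\sum_{k=3}^{n-1}\dfrac{C_k}{3^{k+1}}$, so the existence of the limit reduces entirely to the behaviour of the driving term $C_k$.

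For convergence I would show that $C_k$ grows at most linearly in $k$, so that $\sum_k C_k/3^{k+1}$ converges absolutely; equivalently, since $C_k\to-\infty$, the sequence $\frac{\ln\tau_n}{3^n}$ is eventually strictly decreasing, and being bounded below by $0$ (because $\tau_n\ge 1$) it converges to some $K\ge 0$ by monotone convergence. The data in Table~\ref{MatchingNo} already exhibit this decay: $Z(\mathcal{A}_4)\approx0.471$ and $Z(\mathcal{A}_5)\approx0.459$ descend toward the claimed value. What remains is to turn the informal ``$C_k=O(k)$'' into a rigorous estimate, which is the technical core.

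The main obstacle is that the individual ratios misbehave: a closer look at Eqs.~\eqref{M10}--\eqref{M13} reveals $\eta_n\to0$ and $\lambda_n\to\infty$, so that neither $2\ln\eta_n$ nor $-3\ln\lambda_n$ stays bounded and Lemma~\ref{lem06} alone does not control $\epsilon_n$. The remedy is to track the two \emph{combined} ratios
\[
u_n=\frac{\alpha_n}{\eta_n}=\frac{\varphi_n\phi_n}{\theta_n^2},\qquad v_n=\lambda_n\alpha_n=\frac{\tau_n\varphi_n}{\phi_n\theta_n},
\]
which, after substitution of Eqs.~\eqref{M10}--\eqref{M13}, satisfy a \emph{closed} two-dimensional rational map. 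I would verify that this map has the attracting fixed point $(u^\ast,v^\ast)=(1,1)$ and that the iterates, started from the computable values at $n=3,4$, converge to it (for instance by exhibiting an invariant region in which $u_n,v_n$ increase monotonically to $1$, or by checking that the Jacobian at $(1,1)$ is a contraction). Then $\epsilon_n\to1$, and since $\alpha_{n+1}/\alpha_n=\frac{3}{4+2u_n}\to\tfrac12$, one gets $\alpha_n,\eta_n\sim c\,2^{-n}$ and $\lambda_n\sim c^{-1}2^{n}$; substituting into $C_n$ gives $C_n=-5(\ln 2)\,n+O(1)\to-\infty$, which is exactly the linear bound needed above and which simultaneously confirms the eventual monotonicity. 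Establishing this two-dimensional fixed-point convergence rigorously, rather than merely reading it off numerically, is the hardest part.

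Finally, with $K=\lim\frac{\ln\tau_n}{3^n}$ secured, the value $Z(\mathcal{A})=2K$ is evaluated numerically: because $\sum_k C_k/3^{k+1}$ converges geometrically, iterating the recursion to moderately large $n$ (equivalently, computing $\frac{2\ln\tau_n}{3^n+5}$ from the integers generated by Eqs.~\eqref{M10}--\eqref{M13}) pins the constant down to $Z(\mathcal{A})=0.43017\ldots$. I do not expect a closed form for $K$, since it is an infinite accumulation of the singly-exponential correction factors $1+\epsilon_n$; the content of the theorem is the \emph{existence} of the limit together with this numerical value.
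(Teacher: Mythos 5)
Your algebraic reductions are all correct: Eq.~\eqref{M13} does factor as $\tau_{n+1}=12\theta_n^2\phi_n(1+\epsilon_n)$ with $\epsilon_n=\tfrac12\lambda_n\alpha_n+\tfrac12\alpha_n/\eta_n$, the telescoped identity $\frac{\ln\tau_n}{3^n}=\frac{\ln\tau_3}{27}+\sum_{k=3}^{n-1}C_k/3^{k+1}$ is exact, and the pair $(u_n,v_n)=(\alpha_n/\eta_n,\lambda_n\alpha_n)$ does satisfy a closed rational map, namely
\begin{equation*}
u_{n+1}=\frac{3u_nv_n+24u_n+9}{4(u_n+2)^2},\qquad v_{n+1}=\frac{9u_n(u_n+v_n+2)}{(u_n+2)(u_nv_n+8u_n+3)},
\end{equation*}
which fixes $(1,1)$. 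But there is a genuine gap exactly at the step you label the ``hardest part'': every route you offer to the existence of the limit --- $C_k=O(k)$, the monotone-convergence fallback via $C_k\to-\infty$, and the asymptotics $\alpha_n,\eta_n\sim c2^{-n}$, $\lambda_n\sim c^{-1}2^n$ --- rests on the unproven claim that the orbit started at $(u_3,v_3)\approx(0.86,0.68)$ converges to $(1,1)$. The one concrete device you name, checking that the Jacobian at $(1,1)$ is a contraction (its eigenvalues are $0$ and $1/4$), gives only \emph{local} attraction and says nothing about whether the actual orbit ever enters that basin. Nor can Lemma~\ref{lem06} substitute: its bounds $\eta_k\le\frac23$, $\lambda_k\ge\frac{12}{11}$ only yield $C_k\le\ln 12+2\ln\frac23-3\ln\frac{12}{11}+\ln(1+\epsilon_k)$, whose constant part is about $1.41>0$, so with $\epsilon_k>0$ one cannot even conclude $C_k\le0$, let alone summability of $C_k/3^{k+1}$. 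As written, the existence claim is therefore unsupported.

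For contrast, the paper avoids the dynamics entirely: using only $\alpha_n\downarrow$, $\eta_n\le\frac23$, $\lambda_n\ge\frac{12}{11}$ and $\varphi_n\le\tau_n$ (Lemma~\ref{lem05}), it traps $Z(\mathcal{A}_n)$ for all $n>m$ between $2\frac{\log\varphi_m}{3^m}+\frac{\log p_m}{3^m}$ and $2\frac{\log\tau_m}{3^m}+\frac{\log q}{3^m}$, and then closes the gap with the single inequality $\tau_{n+1}/\varphi_{n+1}\le 8(\tau_n/\varphi_n)^2$, which forces $\ln(\tau_m/\varphi_m)=O(2^m)=o(3^m)$; no fixed-point analysis is needed. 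Your route is salvageable, and in fact along the line you mention only in passing: the square $0<u\le1$, $0<v\le1$ is invariant under the map above (the two inequalities needed reduce to $(4u-7)(u-1)\ge0$ and $6(uv-1)\le0$), and invariance alone --- without ever proving convergence to $(1,1)$ --- gives $\epsilon_n\le1$ together with the recursive bounds $1/\alpha_{n+1}=\frac{2}{3\eta_n}+\frac{4}{3\alpha_n}$, $1/\eta_{n+1}\le\frac{1}{4\eta_n}+\frac{2}{\eta_n}+\frac{3}{4\alpha_n}$, $\lambda_{n+1}\le\frac34\lambda_n+\frac{6}{\eta_n}$, so that $M_n=\max\{1/\alpha_n,1/\eta_n,\lambda_n\}$ grows at most geometrically and $|C_k|=O(k)$ follows, giving absolute convergence of $\sum_k C_k/3^{k+1}$. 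That verification is the actual mathematical content of the theorem, and it is missing from your write-up.
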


\begin{proof}
By Eqs.~\eqref{M10},~\eqref{M11}, ~\eqref{M12}, ~\eqref{M13}, we obtain
\begin{equation}\label{varphi01}
\varphi_{n+1}= \frac{3}{\alpha_n}\varphi_n^3
\end{equation}
and
\begin{equation}\label{tau01}
\tau_{n+1}= 6\tau_n^3 \left(\frac{\eta_n}{\lambda_n}\cdot\frac{\alpha_n\eta_n}{\lambda_n}+\frac{\alpha_n\eta_n}{\lambda_n^3}+2\left(\frac{\eta_n}{\lambda_n}\right)^2 \frac{1}{\lambda_n}\right).
\end{equation}
Denote $p_n=\frac{3}{\alpha_n}$ and $q=6\left(\frac{\alpha_3\eta_0^2}{\lambda_0^2}+\frac{\alpha_3\eta_0}{\lambda_0^3}+2\left(\frac{\eta_0^2}{\lambda_0^3}\right)\right) \geq 6\left(\frac{\alpha_n\eta_n^2}{\lambda_n^2}+\frac{\alpha_n\eta_n}{\lambda_n^3}+2\left(\frac{\eta_n^2}{\lambda_n^3}\right)\right)$.
According to  Lemma~\ref{lem06},  one has $p_n\geq p_{n-1}\geq \cdots \geq p_3=\frac{3}{\alpha_3}$. From  Eqs.~\eqref{varphi01} and~\eqref{tau01},  we obtain that  for  $n>m\geq3$,
\begin{equation}\label{varphi02}
\varphi_{n}\geq  p_m \varphi_{n-1}^3\geq p_m  p_m^3  \varphi_{n-2}^9\geq \cdots\geq p_m ^{\frac{3^{n-m}-1}{2}} \varphi_m^{3^{n-m}},
\end{equation}
\begin{equation}\label{tau02}
\tau_{n}\leq  \cdots\leq  q^{\frac{3^{n-m}-1}{2}} \tau_m^{3^{n-m}}.
\end{equation}
Taking logarithm on both sides of Eqs.~\eqref{varphi02} and~\eqref{tau02} gives rise to
\begin{equation}
\log{\varphi_n}  \geq 3^{n-m}\log{\varphi_m}+\frac{3^{n-m}-1}{2}\log{p_m},
\end{equation}
\begin{equation}
\log{\tau_n}  \leq 3^{n-m}\log{\tau_m}+\frac{3^{n-m}-1}{2}\log{q}.
\end{equation}
Notice that for $n\geq3$,  $\varphi_n\leq \tau_n$ as stated in Lemma~\ref{lem05}. By definition, $\lim_{n\rightarrow\infty}Z(\mathcal{A}_n)=\lim_{ n\rightarrow\infty}\frac{\ln {\tau_n}}{V_n}=\lim_{n\rightarrow\infty}{2\frac{\ln{\tau_n}}{3^n}}$. Then,  for  $n\geq 3$,
\begin{equation}\label{Cons}
2\frac{\log{\varphi_m}}{3^m}+\frac{\log{p_m}}{3^m}\leq  Z(\mathcal{A}_n)   \leq2\frac{\log{\tau_m}}{3^m}+\frac{\log{q}}{3^m}.
\end{equation}
Because $\frac{\tau_{n+1}}{\varphi_{n+1}}=2\frac{\tau_n}{\phi_n}+2\frac{\phi_n^2}{\theta_n\varphi_n}+4\frac{\theta_n\phi_n}{\varphi_n^2}\leq 8(\frac{\tau_n}{\varphi_n})^2$,  then $2\frac{\log{\tau_m}-\log{\varphi_m}}{3^m}\rightarrow0$ as $m\rightarrow\infty$.  Therefore, the  difference of the leftmost  and  rightmost sides in  Eq.~\eqref{Cons} converges to zero as $m\rightarrow\infty$, meaning that  $Z(\mathcal{A}_n)$ has a limit when  $n >m\rightarrow\infty$.

The convergence of the upper and lower bounds for $Z(\mathcal{A}_n)$ in Eq.~\eqref{Cons} is rapid. For example, when $m$ is 7, the difference between the upper and lower bounds is less than $10^{-2}$.  Note that when $n=7$,  the difference of the exact values for  the upper and lower bounds is approximately equal to $ 0.4\times 10^{-2}$,  implying that  both bounds are a good approximate of the limit.  In other words, even for small $m$, the upper and lower bounds in Eq.~\eqref{Cons} converge to the quoted value $Z(\mathcal{A}_n)=0.43017\ldots$. 
\end{proof}






\subsection{Matching number and the number of maximum matchings in extended Tower of Hanoi graphs }

We proceed to study the matching number and the number of maximum matchings in extended Tower of Hanoi graphs, the dual of Apollonian networks. We will show that  in  contrast to Apollonian networks, there always exists a perfect matching in extended Tower of Hanoi graphs.   Moreover,  we will determine the number of  perfect matchings in  extended Tower of Hanoi graphs.

\subsubsection{Matching number }

In the case without inducing confusion,   we employ the same notation as those for $\mathcal{A}_n$ studied in the preceding subsection.  Let $\beta_n$ stand for the  matching number of $\mathcal{S}^+_n$.  Let $\mathcal{U}$ be a subset of $\mathcal{V}$ for graph $\mathcal{G}=(\mathcal{V},\mathcal{E})$, we use $\mathcal{G} \setminus \mathcal{U}$ to denote  a subgraph of $\mathcal{G}$, which is obtained from $\mathcal{G}$ by deleting those vertices in  $\mathcal{U}$ and edges adjacent to any vertex in $\mathcal{U}$.  Then,  $\mathcal{S}^+_n \setminus \{s\}$ is isomorphic to  $\mathcal{H}_n$.

\begin{theo}\label{Thm11}
In the extended Tower of Hanoi graph $\mathcal{S}^+_n$,  $n \geq 2$, a perfect matching always exists. Thus, the matching number of $\mathcal{S}^+_n$  is
\begin{equation}\label{MD6}
\beta_n=\frac{V_n}{2}=\frac{3^n+1}{2}.
\end{equation}
\end{theo}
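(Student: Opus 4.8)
The plan is to reduce the existence of a perfect matching in $\mathcal{S}^+_n$ to a single statement about \emph{near-perfect} matchings of the plain Hanoi graph $\mathcal{H}_m$, and then to prove that statement by induction along the recursive decomposition of $\mathcal{H}_{m+1}$ into three copies of $\mathcal{H}_m$. Since $\mathcal{S}^+_n$ has the even order $V_n=3^n+1$, the matching number satisfies $\beta_n\le V_n/2$, so it suffices to exhibit one matching covering all vertices; the value in Eq.~\eqref{MD6} then follows at once.

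The auxiliary claim I would isolate is
\begin{equation*}
(*)\qquad \text{for every } m\ge1,\ \mathcal{H}_m \text{ has a matching covering all vertices except the corner } a_m=0\ldots0 .
\end{equation*}
Because $\mathcal{H}_m$ has the odd order $3^m$, every matching leaves at least one vertex vacant, and $(*)$ asserts that this unique vacant vertex can be forced onto a prescribed extreme vertex. I would prove $(*)$ by induction on $m$. The base case $m=1$ is the triangle $\mathcal{H}_1$, where the single edge $b_1c_1$ covers everything but $a_1$. For the step, I use that $\mathcal{H}_{m+1}$ is three copies of $\mathcal{H}_m$, one for each peg on which the largest disk rests, joined by three bridging edges; by the Hanoi move rule the bridging edge between the two copies with the largest disk on pegs $i,j$ sits at the states in which all smaller disks occupy the third peg $k$. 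In particular one bridging edge joins the two corners $10\ldots0$ and $20\ldots0$. Applying the induction hypothesis inside each copy so as to vacate its own all-zero corner leaves exactly three vacant vertices, namely $0\ldots0=a_{m+1}$ in the first copy and $10\ldots0,\,20\ldots0$ in the other two; adding the bridging edge $10\ldots0$--$20\ldots0$ covers the latter two, yielding a matching of $\mathcal{H}_{m+1}$ whose only vacant vertex is $a_{m+1}$. This establishes $(*)$.

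With $(*)$ in hand the theorem is immediate for $n\ge2$. Decompose $\mathcal{H}_n$ into three copies of $\mathcal{H}_{n-1}$ together with their bridging edges, and recall that $\mathcal{S}^+_n=\mathcal{H}_n\cup\{s\}$ with $s$ adjacent to the three corners $a_n,b_n,c_n$. Choose in each of the three copies, via $(*)$, a matching vacating its all-zero corner; the three leftover vertices are then $a_n=0\ldots0$ together with $10\ldots0$ and $20\ldots0$. Adding the edge $s$--$a_n$ (which covers $s$ and $a_n$) and the bridging edge $10\ldots0$--$20\ldots0$ (which covers the remaining two leftovers) covers every one of the $3^n+1$ vertices. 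Hence this is a perfect matching, and $\beta_n=V_n/2=(3^n+1)/2$.

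The only delicate point is bookkeeping: one must pin down exactly which extreme vertices the bridging edges join, so that the two leftover corners of the second and third copies are precisely the endpoints of a single bridging edge, while the third leftover $a_n$ is precisely the corner reachable from $s$. Once this labelling is fixed the induction is entirely routine. (As a sanity check one could instead observe that $\mathcal{S}^+_n$ is $3$-regular and bridgeless and invoke Petersen's theorem; I prefer the explicit construction, since the same near-perfect matchings of the copies reappear in the subsequent enumeration of perfect matchings.)
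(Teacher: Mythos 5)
Your proposal is correct and takes essentially the same approach as the paper: both arguments glue near-perfect matchings of the three copies of $\mathcal{H}_{m}$ inside $\mathcal{H}_{m+1}$, each vacating one designated extreme vertex, then cover two of the vacant corners by the single bridging edge between them and the remaining corner by the edge to $s$. The only difference is packaging --- you induct directly on the auxiliary statement $(*)$ about $\mathcal{H}_m$ (which makes the bookkeeping self-contained and avoids any appeal to peg-permutation symmetry), whereas the paper phrases the induction as ``$\mathcal{S}^+_k$ has a perfect matching'' and extracts from it perfect matchings of $\mathcal{H}_k\setminus\{a_k\}$, $\mathcal{H}_k\setminus\{b_k\}$, $\mathcal{H}_k\setminus\{c_k\}$ before running the identical gluing step.
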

\begin{proof}
It suffices to show that the graph $\mathcal{S}^+_n$, $n \geq 2$,  has a perfect matching. Then the second result immediately follows from the fact that $V_n=3^n+1$.

We first prove the existence of a perfect matching in $\mathcal{S}^+_n$,  $n \geq 2$, by induction on $n$. When $n=2$, it is easy to check that a perfect matching  exists in $\mathcal{S}^+_2$. Thus the result holds for the base case. Assume that for $n=k  \geq 2$, there exists a perfect matching  in $\mathcal{S}^+_k$, which means that there exists at least a perfect matching  in  the three subgraphs $\mathcal{H}_k\setminus \{a_k\}$, $\mathcal{H}_k\setminus \{b_k\}$, and $\mathcal{H}_k\setminus \{c_k\}$ of $\mathcal{H}_k$. Then there exists  a perfect matching $\mathcal{M}$ in  $\mathcal{H}_{k+1}\setminus \{a_{k+1}\}$,  $\mathcal{H}_{k+1}\setminus \{b_{k+1}\}$, and  $\mathcal{H}_{k+1}\setminus \{c_{k+1}\}$, see Fig.~\ref{fig:11}, where $a^i_{k}$, $b^i_{k}$, and $c^i_{k}$ denote the extreme vertices of $\mathcal{H}^i_k$, $i=1,2,3$,   forming  $\mathcal{H}_{k+1}$. By adding to  $\mathcal{M}$ the  edge connected the special vertex $s$ and the vacant extreme vertex $a_{k+1}$,  $b_{k+1}$, or  $c_{k+1}$, leads to a perfect matching of  $\mathcal{S}^+_{k+1}$,  as shown in Fig.~\ref{fig:11}.
\end{proof}

\begin{figure}
\centering
 \includegraphics[width=0.25\textwidth]{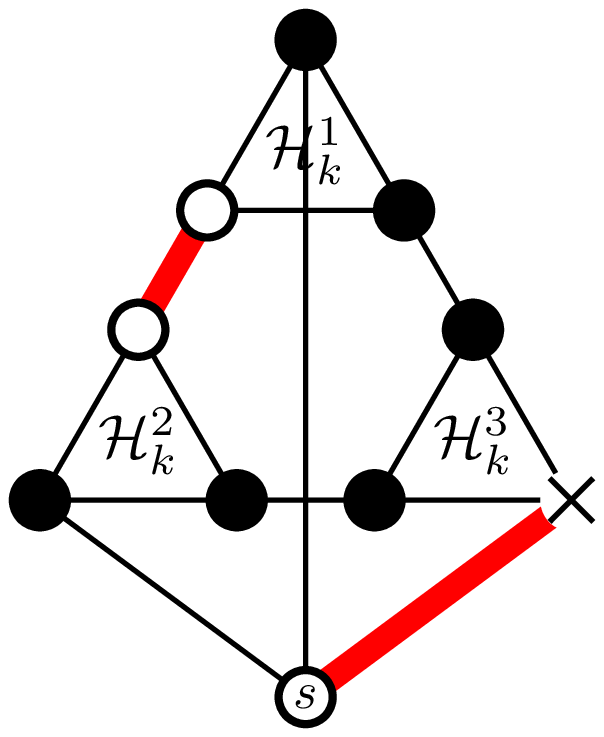}
\caption{\label{fig:11} Illustration of existence of a perfect matching in $\mathcal{S}^+_{k+1}$ for $k \geq 2$. The cross vertex denotes the removed extreme vertex $c_{k+1}$ from $\mathcal{H}_{k+1}$.  A perfect matching for $\mathcal{H}_{k+1}\setminus \{c_{k+1}\}$ can be obtained by adding one thick edges linked to the two extreme vertices $c^1_{k}$ in $\mathcal{H}^{1}_k$ and $c^2_{k}$  and $\mathcal{H}^{2}_k$ to the union of perfect matchings for  $\mathcal{H}^{1}_k\setminus \{c^1_{k}\}$,    $\mathcal{H}^{2}_k\setminus \{c^2_{k}\}$, and   $\mathcal{H}^{3}_k\setminus \{c^3_{k}\}$.  Adding to a perfect matching of  $\mathcal{H}_{k+1}\setminus \{c_{k+1}\}$ one thick edge connecting to the  extreme vertex $c_{k+1}$  and the special vertex $s$ gives a perfect matching for  $\mathcal{S}^+_{k+1}$.}
\end{figure}


\subsubsection{Number of perfect matchings}


Let $\beta^k_n$, $k=0,1,2,3$, be the maximum size of matchings of the subgraph of $\mathcal{H}_n$ obtained from $\mathcal{H}_n$ by deleting  exactly $k$ extreme vertices and the edges attaching to these  $k$ extreme vertices. For any matching $\mathcal{M}$ of $\mathcal{H}_n$ with at least one vacant extreme vertex, we can obtain a matching $\mathcal{M}'$ of  $\mathcal{S}^+_n$ from $\mathcal{M}$ by adding to it  an edge adjacent to a vacant extreme vertex and the special vertex $s$.

Note that for any perfect matching $\mathcal{M}'$ in $\mathcal{S}^+_{n+1}$, the number of matching edges in its subgraph $\mathcal{H}^i_n$, $i=1,2,3$, must be  $\beta^0_n$ or  $\beta^2_n$. Otherwise, we suppose that in some $\mathcal{H}^i_n$, the number of matching edges is $\beta^1_n$ or  $\beta^3_n$, then there must be one vacant non-extreme vertex in  $\mathcal{H}^i_n$,  meaning that $\mathcal{M}'$ cannot be a perfect matching  of $\mathcal{S}^+_{n+1}$.


Next, we show that in graph $\mathcal{H}_n\setminus \{a_n,b_n,c_n\}$, any matching with size $\beta^0_n$ is indeed  a perfect matching.
\begin{lemma}\label{lem11}
For $n \geq 2$, there exists a perfect matching in the subgraph $\mathcal{H}_n\setminus \{a_n,b_n,c_n\}$ of the extended Tower of Hanoi graph $\mathcal{H}_n$, and its matching number is
\begin{equation}\label{MD6}
\beta^0_n=\frac{3^n-3}{2}.
\end{equation}
\end{lemma}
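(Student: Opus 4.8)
The plan is to prove both claims by induction on $n$. First, note that the matching-number formula is an immediate consequence of the existence of a perfect matching: the subgraph $\mathcal{H}_n \setminus \{a_n, b_n, c_n\}$ has exactly $3^n-3$ vertices, so any perfect matching of it has size $\frac{3^n-3}{2}$, and no matching can be larger. Hence the entire content of the lemma reduces to exhibiting a perfect matching for each $n \geq 2$. For the base case $n=2$, the graph $\mathcal{H}_2 \setminus \{a_2, b_2, c_2\}$ has six vertices, and one checks directly that three suitably chosen edges cover them all.

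For the inductive step I would exploit the recursive decomposition recalled earlier, in which $\mathcal{H}_{n+1}$ is assembled from three copies $\mathcal{H}^1_n$, $\mathcal{H}^2_n$, $\mathcal{H}^3_n$ joined by three bridge edges between extreme vertices. The key structural fact is that the three extreme vertices $a_{n+1}, b_{n+1}, c_{n+1}$ of $\mathcal{H}_{n+1}$ are, up to labeling, one extreme vertex taken from each copy---say $a^1_n$, $b^2_n$, $c^3_n$---while the three bridge edges pair off the remaining six copy-extreme vertices, with exactly one bridge incident to each surviving extreme vertex. Since none of the deleted vertices is a bridge endpoint, all three bridges persist after the deletion of $a_{n+1}, b_{n+1}, c_{n+1}$; placing them into the matching therefore covers, in every copy $\mathcal{H}^i_n$, both of its extreme vertices that were not deleted. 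Inside each copy this leaves precisely its $3^n-3$ non-extreme vertices, i.e. the vertices of $\mathcal{H}^i_n \setminus \{a^i_n, b^i_n, c^i_n\}$, which by the induction hypothesis admit a perfect matching. The union of the three bridge edges with these three internal matchings is then a perfect matching of $\mathcal{H}_{n+1} \setminus \{a_{n+1}, b_{n+1}, c_{n+1}\}$, as the count $6 + 3(3^n-3) = 3^{n+1}-3$ confirms.

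The main obstacle is the bookkeeping in the structural fact above: one must verify, from the precise identification rule of the recursive construction, that the deleted extreme vertices fall one per copy and that each surviving copy-extreme vertex is covered by exactly one bridge. This alignment is exactly what allows the bridges to mesh with the induction hypothesis; were two deletions to land in a single copy, the clean decomposition into bridges plus internal matchings would break down. Once this alignment is confirmed the remainder is only a routine vertex count, so I expect the verification of the distribution of extreme vertices to be the single nontrivial point of the argument.
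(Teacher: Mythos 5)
Your proposal is correct and follows essentially the same route as the paper's own proof: induction on $n$, with the base case $\mathcal{H}_2\setminus\{a_2,b_2,c_2\}$ matched by the three bridge edges, and the inductive step combining the three bridge edges between copies with the perfect matchings of $\mathcal{H}^i_n\setminus\{a^i_n,b^i_n,c^i_n\}$ supplied by the induction hypothesis. The structural alignment you flag as the nontrivial point (one deleted extreme vertex per copy, each surviving copy-extreme vertex met by exactly one bridge) is precisely what the paper establishes via its recursive construction of $\mathcal{H}_{n+1}$ and illustrates in its accompanying figure.
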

\begin{proof}
We prove this lemma  by induction on $n$. When $n=2$,  the three edges  connecting the three pairs of extreme vertices in $\mathcal{H}^i_1$, $i=1,2,3$, form a  perfect matching of $\mathcal{H}_2\setminus \{a_2,b_2,c_2\}$.
Thus,  the base case  holds.  Suppose that for $n=k\ge 2$, there exists a perfect matching in $\mathcal{H}_k\setminus \{a_k,b_k,c_k\}$. Then for $n=k+1$, we can obtain a  perfect matching of $\mathcal{H}_{k+1}\setminus \{a_{k+1},b_{k+1},c_{k+1}\}$ by adding to the perfect matchings of $\mathcal{H}^i_k\setminus \{a^i_{k},b^i_{k},c^i_{k}\}$, $i=1,2,3$, three  edges connecting three pairs of extreme  vertices in $\mathcal{H}^i_n$,  See Fig.~\ref{fig:12}. Then, the number of edges in a perfect matching of $\mathcal{H}_n\setminus \{a_n,b_n,c_n\}$ is $\beta^0_n=\frac{3^n-3}{2}$.
\end{proof}

\begin{figure}
\centering
  \includegraphics[width=0.25\textwidth]{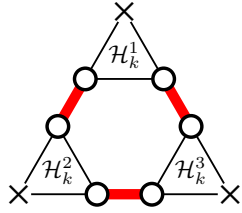}
\caption{\label{fig:12} Illustration of existence of a perfect matching in $\mathcal{H}_{k+1}\setminus \{a_{k+1},b_{k+1},c_{k+1}\}$ for $k\ge 2$. The cross vertices denote those removed extreme vertices from $\mathcal{H}_{k+1}$.}
\end{figure}

Let $\varphi_n$ be the number of maximum matchings of $\mathcal{H}_n\setminus \{a_n,b_n,c_n\}$. Let $\phi_n$ be the number of maximum matchings of  $\mathcal{H}_n\setminus\{a_n\}$, which  is equal to the  number of maximum matchings of $\mathcal{H}_n\setminus\{b_n\}$ or $\mathcal{H}_n\setminus\{c_n\}$.



\begin{lemma}\label{lem15}
For the  Tower of Hanoi graph $\mathcal{H}_n$, $n \geq2$,  the quantities $\varphi_n$ and $\phi_n$ satisfy the following relations:
\begin{equation}\label{MD7}
\varphi_{n+1} = \varphi_n^3+\phi_n^3,
\end{equation}
\begin{equation}\label{MD9}
\phi_{n+1} = \phi_n^2\varphi_n+\phi_n^3.
\end{equation}
\end{lemma}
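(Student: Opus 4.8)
The plan is to exploit the recursive construction of $\mathcal{H}_{n+1}$ from three copies $\mathcal{H}^1_n,\mathcal{H}^2_n,\mathcal{H}^3_n$ joined by three bridge edges. First I would fix notation for the extreme vertices, writing $a^i_n,b^i_n,c^i_n$ for the extreme vertices of $\mathcal{H}^i_n$. Exactly one extreme vertex of each copy is a global extreme vertex of $\mathcal{H}_{n+1}$ (say $a^1_n=a_{n+1}$, $b^2_n=b_{n+1}$, $c^3_n=c_{n+1}$), while the remaining six extreme vertices are paired off by the three bridges, say $\{c^1_n,c^2_n\}$, $\{b^1_n,b^3_n\}$, and $\{a^2_n,a^3_n\}$. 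A perfect matching $\mathcal{M}$ of the relevant deleted graph is then completely described by two pieces of data: which of the three bridge edges lie in $\mathcal{M}$, and the restriction of $\mathcal{M}$ to each copy. An extreme vertex of a copy is covered from outside precisely when its incident bridge is chosen, or when it is globally deleted; in either case it plays no role inside the copy, so the restriction of $\mathcal{M}$ to $\mathcal{H}^i_n$ is a perfect matching of $\mathcal{H}^i_n$ with a prescribed subset of its extreme vertices removed. These restrictions are chosen independently across the three copies, so the number of completions factorizes as a product.

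The key step---and the one that makes the casework collapse---is a parity observation. Each copy has $3^n$ vertices, an odd number, so a perfect matching of a copy with $r$ of its extreme vertices removed can exist only when $3^n-r$ is even, that is, only when $r$ is odd. Hence each copy must have exactly one or exactly three of its extreme vertices removed (counting both globally deleted and bridge-covered ones); removing zero or two is impossible. Combined with Lemma~\ref{lem11} and the peg-permutation symmetry of $\mathcal{H}_n$, a copy with all three extreme vertices removed admits $\varphi_n$ internal perfect matchings, and a copy with exactly one extreme vertex removed admits $\phi_n$ of them, the count being independent of which single extreme vertex is removed.

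With this in place I would run the two cases. For $\varphi_{n+1}$ all three global extreme vertices $a_{n+1},b_{n+1},c_{n+1}$ are deleted, one in each copy; imposing the parity constraint per copy forces the three bridge indicators to be equal, leaving only ``no bridge used'' and ``all three bridges used.'' The former removes a single extreme vertex from each copy, contributing $\phi_n^3$; the latter removes all three extreme vertices from each copy, contributing $\varphi_n^3$; summing gives Eq.~\eqref{MD7}. For $\phi_{n+1}$ only the global extreme vertex $a_{n+1}$ is deleted; the per-copy parities now force two of the bridge indicators to coincide and the third to differ from them, which leaves exactly two admissible bridge patterns. One removes a single extreme vertex from every copy, contributing $\phi_n^3$; the other removes all three extreme vertices from the copy containing $a_{n+1}$ and a single extreme vertex from each of the other two, contributing $\varphi_n\phi_n^2$; summing gives Eq.~\eqref{MD9}.

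The main obstacle I anticipate is not the arithmetic but rigorously justifying the bijection implicit in the factorization: I must check that every perfect matching of the deleted graph decomposes uniquely into a bridge-pattern together with internal perfect matchings of the copies, and conversely that any choice consistent with the parity constraint reassembles into a genuine perfect matching. The parity lemma is exactly what guarantees consistency---the admissible bridge patterns are precisely those leaving each copy with an even number of internal vertices to cover---so once it is established the two recursions follow by the straightforward enumeration above, with the base values obtainable by direct computation.
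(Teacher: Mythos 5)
Your proof is correct, and it rests on the same decomposition the paper uses: $\mathcal{H}_{n+1}$ viewed as three copies of $\mathcal{H}_n$ joined by three bridge edges, matchings classified by their bridge pattern, and the counts factorizing as products over the copies. Where you genuinely differ is in how the crucial enumeration step is justified. The paper's proof is declared ``graphical'': the completeness of the list of contributing configurations (all-bridges or no-bridges for Eq.~\eqref{MD7}; the two mixed patterns for Eq.~\eqref{MD9}) is read off from Figs.~\ref{fig:13} and~\ref{fig:14} and not argued further. You replace that appeal to pictures with a parity argument: each copy has $3^n$ (odd) vertices, so the number of extreme vertices it loses---whether to global deletion or to a chosen bridge edge---must be odd, which forces the three bridge indicators to be all equal in the $\varphi_{n+1}$ case, and to have exactly one dissenter (the bridge opposite the copy containing the deleted vertex) in the $\phi_{n+1}$ case. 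This buys a figure-free and fully rigorous completeness proof of exactly the fact the paper takes as visually evident, at the cost of a little bookkeeping the figures make immediate. Two points you should make explicit in a final write-up: (i) $\varphi_n$ and $\phi_n$ are defined in the paper as numbers of \emph{maximum} matchings, so you need the already-established facts that $\mathcal{H}_n\setminus\{a_n,b_n,c_n\}$ and $\mathcal{H}_n\setminus\{a_n\}$ possess perfect matchings (Lemma~\ref{lem11}, and the induction inside the proof of Theorem~\ref{Thm11}) to conclude that maximum and perfect matchings coincide, so that your perfect-matching counts are the right quantities; (ii) the independence of $\phi_n$ from which single extreme vertex is deleted is exactly the symmetry the paper builds into its definition of $\phi_n$, so you may simply cite it rather than reprove it.
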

\begin{proof}
We prove this lemma graphically. Fig.~\ref{fig:13} provides all configurations of maximum matchings in  $\mathcal{H}_{n+1}\setminus \{a_{n+1},b_{n+1},c_{n+1}\}$  that contribute to $\varphi_{n+1} $. Columns 1 and 4 in Fig.~\ref{fig:14} provide all configurations of maximum matchings in  $\mathcal{H}_{n+1} \setminus \{a_{n+1}\}$  that contribute to $\phi_{n+1}$.

Here we only given an explanation for  Eq.~\eqref{MD7},  while  Eq.~\eqref{MD9} can be accounted for analogously.

Fig.~\ref{fig:13} shows that in  graph $\mathcal{H}_{n+1}\setminus \{a_{n+1},b_{n+1},c_{n+1}\}$, all possible configurations of maximum matchings with size $\beta^0_{n+1}$ are those having size $3\beta^0_n+3$ or $3\beta^2_n$. From Fig.~\ref{fig:13}, we can see that the number of all maximum matchings with size $3\beta^0_n+3$ and $3\beta^2_n$ is equal to $\varphi_n^3$ and $\phi_n^3$, respectively. By definition of $\varphi_{n+1}$,  Eq.~\eqref{MD7} holds.
\end{proof}

\begin{figure}
\centering
  \includegraphics[width=0.4\textwidth]{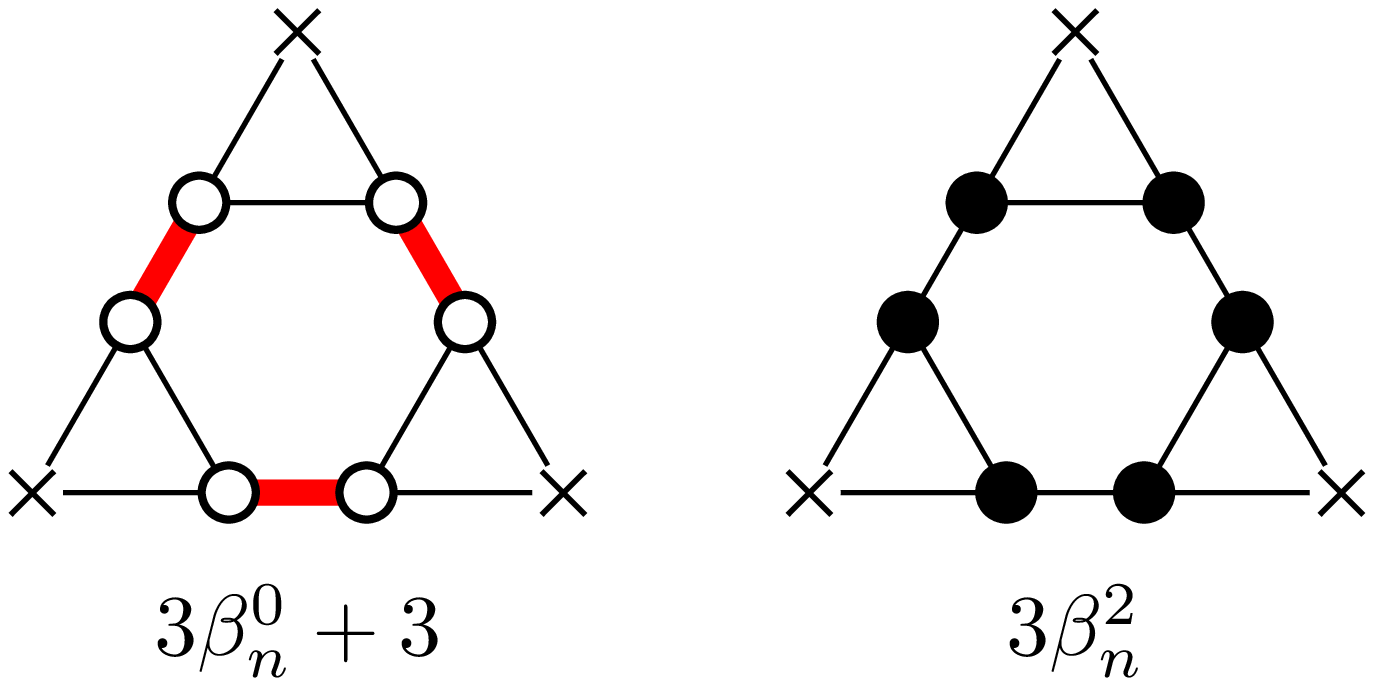}
\caption{\label{fig:13} Configurations of maximum matchings for $\mathcal{H}_{n+1}\setminus \{a_{n+1},b_{n+1},c_{n+1}\}$ .}
\end{figure}

\begin{figure}
\centering
   \includegraphics[width=0.85\textwidth]{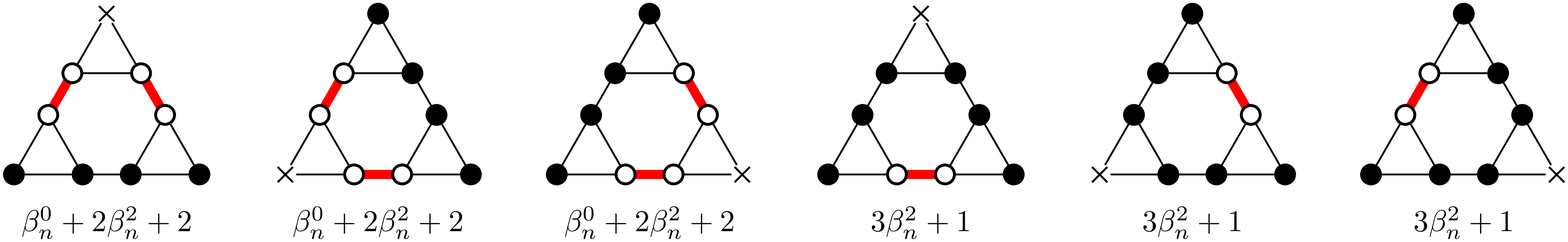}
\caption{\label{fig:14} Configurations of maximum matchings for $\mathcal{H}_{n+1}\setminus \{a_{n+1}\}$, $\mathcal{H}_{n+1}\setminus \{b_{n+1}\}$, and $\mathcal{H}_{n+1}\setminus \{c_{n+1}\}$.}
\end{figure}



The recursive relations stated in Lemma~\ref{lem15} allow to determine explicitly the quantities $\varphi_n$ and $\phi_n$.

\begin{lemma}\label{lem16}
For  the  Tower of Hanoi graph $\mathcal{H}_n$, $n \geq2$,
\begin{equation}\label{MD11}
\varphi_{n} = \phi_{n} = 2^{\frac{3^{n-1}-1}{2}}.
\end{equation}
\end{lemma}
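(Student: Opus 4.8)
The plan is to prove the identity by induction on $n$, relying on the recursions $\varphi_{n+1}=\varphi_n^3+\phi_n^3$ and $\phi_{n+1}=\phi_n^2\varphi_n+\phi_n^3$ established in Lemma~\ref{lem15}. The governing observation is that the equality $\varphi_n=\phi_n$ is \emph{self-propagating}: once it holds at level $n$, both recursions collapse to the same expression, so it automatically holds at level $n+1$. Concretely, if $\varphi_n=\phi_n$, then $\varphi_{n+1}=\varphi_n^3+\phi_n^3=2\varphi_n^3$ and $\phi_{n+1}=\phi_n^2\varphi_n+\phi_n^3=2\varphi_n^3$, whence $\varphi_{n+1}=\phi_{n+1}=2\varphi_n^3$. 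This reduces the whole problem to tracking the single sequence $\varphi_n$ through the one-variable recursion $\varphi_{n+1}=2\varphi_n^3$.

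For the base case I would take $n=2$ and verify directly that $\varphi_2=\phi_2=2$. Here $\mathcal{H}_2\setminus\{a_2,b_2,c_2\}$ is a small graph (a six-vertex cycle obtained from the three triangles $\mathcal{H}^i_1$ after deleting the three global extreme vertices), whose two perfect matchings give $\varphi_2=2$; an equally short enumeration of $\mathcal{H}_2\setminus\{a_2\}$ gives $\phi_2=2$. As a sanity check, feeding $\varphi_2=\phi_2=2$ into the recursions reproduces the values $\varphi_3=\phi_3=16$ quoted after Lemma~\ref{lem15}.

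The inductive step then amounts to verifying that the closed form is preserved under $\varphi_{n+1}=2\varphi_n^3$. Assuming $\varphi_n=\phi_n=2^{(3^{n-1}-1)/2}$, I would compute
\begin{equation}
\varphi_{n+1}=\phi_{n+1}=2\varphi_n^3=2\cdot 2^{3(3^{n-1}-1)/2}=2^{1+(3^n-3)/2}=2^{(3^n-1)/2},
\end{equation}
and observe that $(3^n-1)/2=(3^{(n+1)-1}-1)/2$, which is precisely the claimed expression at level $n+1$. This closes the induction and yields the stated formula for all $n\geq 2$.

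There is no serious obstacle once Lemma~\ref{lem15} is available: the only real content is the structural remark that the two sequences remain equal, after which the statement reduces to a one-line exponent computation. The most hands-on part is the finite base-case verification that $\varphi_2=\phi_2=2$, and the only place demanding care is the exponent bookkeeping $1+3(3^{n-1}-1)/2=(3^n-1)/2$.
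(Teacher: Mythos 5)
Your proof is correct and follows essentially the same route as the paper: the paper also establishes $\varphi_n=\phi_n$ by induction from the recursions of Lemma~\ref{lem15} (writing the difference as $\varphi_{n+1}-\phi_{n+1}=\varphi_n(\varphi_n^2-\phi_n^2)$, which is just your substitution argument in factored form), anchored at $\varphi_2=\phi_2=2$, and then solves $\varphi_{n+1}=2\varphi_n^3$ to get the closed form. The only quibble is your aside that the values $\varphi_3=\phi_3=16$ are ``quoted after Lemma~\ref{lem15}'' --- the paper states no such values there --- but this is immaterial to the argument.
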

\begin{proof}
Eqs.~\eqref{MD7} and~\eqref{MD9} implies $\varphi_{n+1}-\phi_{n+1}=\varphi_{n}({\varphi^2_{n}}-{\phi^2_{n}})$, which together with $\varphi_{2} = \phi_{2}=2$ shows that $\varphi_{n} = \phi_{n}$  for all $n\ge2$. Then, considering Eq.~\eqref{MD7}, we get the recursive relation for $\varphi_{n}$ as $\varphi_{n+1}=2{\varphi^3_{n}}$ that can be easily solved to yield $\varphi_{n} = \phi_{n} = 2^{\frac{3^{n-1}-1}{2}}$.
\end{proof}

\begin{theo}\label{Thm12}
The number of perfect matchings in the extended Tower of Hanoi graph $\mathcal{S}^+_n$, $n \geq2$, is $3\cdot2^{\frac{3^{n-1}-1}{2}}$.
\end{theo}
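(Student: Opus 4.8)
The plan is to reduce the count of perfect matchings of $\mathcal{S}^+_n$ to the already-controlled quantity $\phi_n$. The first observation is structural: the special vertex $s$ has degree $3$, its only neighbors being the three extreme vertices $a_n$, $b_n$, $c_n$ of $\mathcal{H}_n$. Hence in any perfect matching $\mathcal{M}'$ of $\mathcal{S}^+_n$ the vertex $s$ is covered by exactly one of the edges $sa_n$, $sb_n$, $sc_n$. This partitions all perfect matchings of $\mathcal{S}^+_n$ into three classes according to which extreme vertex is matched to $s$. I would then invoke the threefold symmetry of $\mathcal{S}^+_n$ coming from permuting the pegs $0,1,2$, which cyclically permutes $a_n$, $b_n$, $c_n$ while fixing $s$; this is an automorphism of $\mathcal{S}^+_n$, so the three classes are equinumerous, and it suffices to count the perfect matchings containing $sa_n$ and multiply by $3$.

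For the class containing $sa_n$, I would use the evident bijection $\mathcal{M}' \mapsto \mathcal{M}' \setminus \{sa_n\}$. Deleting the edge $sa_n$ leaves $s$ and $a_n$ uncovered by the remaining edges, every one of which lies inside $\mathcal{H}_n$ and avoids $a_n$; thus $\mathcal{M}' \setminus \{sa_n\}$ is exactly a perfect matching of $\mathcal{H}_n \setminus \{a_n\}$, and the map is inverted by re-adding $sa_n$. Consequently the number of perfect matchings of $\mathcal{S}^+_n$ using $sa_n$ equals the number of perfect matchings of $\mathcal{H}_n \setminus \{a_n\}$.

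The step that needs care, and which I expect to be the only genuine subtlety, is identifying this last number with $\phi_n$. Recall that $\phi_n$ counts the \emph{maximum} matchings of $\mathcal{H}_n \setminus \{a_n\}$, so I must argue that every maximum matching of $\mathcal{H}_n \setminus \{a_n\}$ is in fact perfect. This follows from Theorem~\ref{Thm11}: the perfect matching of $\mathcal{S}^+_n$ guaranteed there matches $s$ to some extreme vertex, and by the symmetry above we may take that vertex to be $a_n$, so $\mathcal{H}_n \setminus \{a_n\}$ does admit a perfect matching. Since $\mathcal{H}_n \setminus \{a_n\}$ has $3^n - 1$ vertices, its matching number is therefore $\frac{3^n-1}{2}$ and its maximum matchings coincide with its perfect matchings, whence their number is exactly $\phi_n$.

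Combining the three pieces gives $3\phi_n$ perfect matchings of $\mathcal{S}^+_n$. Substituting the closed form $\phi_n = 2^{\frac{3^{n-1}-1}{2}}$ established in Lemma~\ref{lem16} then yields the claimed value $3\cdot 2^{\frac{3^{n-1}-1}{2}}$, valid for all $n \geq 2$. Everything outside the maximum-equals-perfect identification is routine bookkeeping using the degree of $s$, the peg symmetry, and the edge-removal bijection.
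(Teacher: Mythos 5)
Your proof is correct and takes essentially the same route as the paper's: both partition the perfect matchings of $\mathcal{S}^+_n$ according to which of the edges $sa_n$, $sb_n$, $sc_n$ covers $s$, use the deletion/re-addition bijection with perfect matchings of $\mathcal{H}_n\setminus\{a_n\}$ (and its symmetric counterparts), and conclude that the count is $3\phi_n = 3\cdot 2^{\frac{3^{n-1}-1}{2}}$ via Lemma~\ref{lem16}. The only difference is that you make explicit two points the paper leaves implicit---the peg-permutation symmetry equating the three classes, and the fact that maximum matchings of $\mathcal{H}_n\setminus\{a_n\}$ are perfect (so that they are counted by $\phi_n$)---which strengthens rather than changes the argument.
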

\begin{proof}
Because for any perfect matching of $\mathcal{S}^+_n$, the special vertex $s$ is covered simultaneously with an extreme vertex in $\mathcal{H}_n$, there is a one-to-one correspondence between perfect matchings of $\mathcal{S}^+_n$ and perfect matchings in $\mathcal{H}_n\setminus\{a_n\}$, $\mathcal{H}_n\setminus\{b_n\}$ or $\mathcal{H}_n\setminus\{c_n\}$.  Thus, the  number of perfect matchings in $\mathcal{S}^+_n$ is  $3\phi_{n}=3\cdot2^{\frac{3^{n-1}-1}{2}}$. 
\end{proof}

It should be mentioned that the problem of all matchings in the Tower of Hanoi graphs has been previously studied~\cite{ChWuHuDe15}. However, here we address the problem of perfect matching in  the Hanoi graphs  with one extreme vertex removed. Moreover, our technique is different from that in~\cite{ChWuHuDe15}, but is partially similar to those in~\cite{ChCh08} and~\cite{TeWa07}.

\section{Domination number and the number of minimum  dominating sets in Apollonian Networks and extended Tower of Hanoi graphs}

In this section, we study the domination number and the number of minimum dominating sets in Apollonian Networks and extended Tower of Hanoi graphs.



\subsection{Domination number and the number of minimum  dominating sets in Apollonian Networks}

We first address the domination number and the number of minimum  dominating sets in Apollonian networks, by using their structural self-similarity.

\subsubsection{Domination number}

Let $\gamma_n$ be the domination number of $\mathcal{A}_n$.  All dominating sets of $\mathcal{A}_n$ can be sorted into four classes: $\Theta_n^0$, $\Theta_n^1$, $\Theta_n^2$, and $\Theta_n^3$, where $\Theta_n^k$, $k = 0,1,2,3$, denotes the set of  those dominating sets, each containing exactly $k$ outmost vertices. Let $\Upsilon_n^k$, $k = 0,1,2,3$, be the subsets of $\Theta_n^k$, each of which has the smallest cardinality, denoted by $\gamma_n^k$.  The  dominating sets in $\Upsilon_{n+1}^k$, $k = 0,1,2,3$, can be further sorted into two groups: in one group, each dominating set contains the center vertex, while in the other group, every dominating set excludes the center vertex.

\begin{lemma}\label{lem22}
The domination number of Apollonian network $\mathcal{A}_n$, $n\geq3$, is
$\gamma_n = \min\{\gamma^0_n, \gamma^1_n, \gamma^2_n, \gamma^3_n\}$.
\end{lemma}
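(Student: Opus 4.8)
The plan is to recognize that Lemma~\ref{lem22} is, like Lemma~\ref{lem01} for matchings, essentially a bookkeeping identity: the four classes $\Theta_n^0,\Theta_n^1,\Theta_n^2,\Theta_n^3$ form a partition of the collection of all dominating sets of $\mathcal{A}_n$, and the domination number is by definition the smallest cardinality attained over that whole collection. First I would observe that every dominating set $\mathcal{D}$ of $\mathcal{A}_n$ contains some number of the three outmost vertices $X$, $Y$, $Z$, and this number is a well-defined integer $k\in\{0,1,2,3\}$; hence $\mathcal{D}$ lies in exactly one class $\Theta_n^k$. This makes $\{\Theta_n^k\}_{k=0}^{3}$ a disjoint cover of the (finite) set of all dominating sets of $\mathcal{A}_n$.

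Then I would simply unpack the definitions. By definition $\gamma_n=\min\{\,|\mathcal{D}| : \mathcal{D}\text{ is a dominating set of }\mathcal{A}_n\,\}$, whereas $\gamma_n^k=\min\{\,|\mathcal{D}| : \mathcal{D}\in\Theta_n^k\,\}$ for each $k$, this minimum being attained precisely on the elements of $\Upsilon_n^k$. Since minimizing a function over a finite disjoint union equals the minimum of the minima over the parts, one obtains
\begin{equation}
\gamma_n=\min_{\mathcal{D}}|\mathcal{D}|=\min_{0\le k\le 3}\ \min_{\mathcal{D}\in\Theta_n^k}|\mathcal{D}|=\min\{\gamma_n^0,\gamma_n^1,\gamma_n^2,\gamma_n^3\},
\end{equation}
which is exactly the claimed identity. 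Along the way one should note that each $\Theta_n^k$ is nonempty, so that each $\gamma_n^k$ is finite and the outer minimum is genuinely over four real numbers; this is clear, since for any prescribed choice of outmost vertices one can always extend to a dominating set (in the extreme case the whole vertex set dominates).

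The only genuine point to verify---the \emph{hard part}, though it is mild---is the well-definedness of the classification: that the outmost-vertex count does partition \emph{all} dominating sets, and that each class actually contains a minimum-cardinality dominating set so that the per-class quantities $\gamma_n^k$ carry meaning. Once this is in place the identity is immediate, and the substantive work is deferred to the subsequent lemmas, namely the recursions relating the $\gamma_n^k$ across successive levels $n$ and $n+1$ (obtained from the self-similar construction of Definition~\ref{Def:AP02} by splitting according to whether the center vertex $O$ is in the dominating set), together with the eventual closed form for $\gamma_n$. This mirrors exactly the role played by Lemma~\ref{lem01} for the matching number, where the reduction to $\max\{\beta_n^0,\beta_n^1,\beta_n^2,\beta_n^3\}$ is the trivial step and the recursions of Lemma~\ref{lem02} carry the real content.
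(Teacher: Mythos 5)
Your proposal is correct and coincides with the paper's treatment: the paper states Lemma~\ref{lem22} without proof, regarding it as immediate precisely because the classes $\Theta_n^0,\dots,\Theta_n^3$ partition the dominating sets of $\mathcal{A}_n$, so the global minimum is the minimum of the per-class minima --- exactly the bookkeeping argument you spell out. Your additional check that each $\Theta_n^k$ is nonempty (so each $\gamma_n^k$ is well defined) is a sound and worthwhile detail that the paper leaves implicit.
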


Thus, the problem of  determining $\gamma_n$ is reduced to evaluating $\gamma^0_n$, $\gamma^1_n$, $\gamma^2_n$, $\gamma^3_n$.  We next estimate these quantities.  
\begin{lemma}\label{lem23}
For  Apollonian network $\mathcal{A}_n$, $n\geq3$,
\begin{equation}\label{D1}
\gamma^0_{n+1}=\min\{3\gamma^0_n,3\gamma^1_n-2\},
\end{equation}
\begin{equation}\label{D2}
\gamma^1_{n+1}=\min\{2\gamma^1_n-1,2\gamma^2_n+\gamma^1_n-3\},
\end{equation}
\begin{equation}\label{D3}
\gamma^2_{n+1}=\min\{\gamma^2_n+2\gamma^1_n-2,\gamma^3_n+2\gamma^2_n-4\},
\end{equation}
\begin{equation}\label{D4}
\gamma^3_{n+1}=\min\{3\gamma^2_n-3,3\gamma^3_n-5\}.
\end{equation}
\end{lemma}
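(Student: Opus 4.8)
The plan is to prove all four identities \eqref{D1}--\eqref{D4} at once by the self-similar case analysis already used for the matching recursions in Lemma~\ref{lem02}, now bookkeeping domination rather than covering. By Definition~\ref{Def:AP02}, $\mathcal{A}_{n+1}$ is the union of three copies $\mathcal{A}_n^1,\mathcal{A}_n^2,\mathcal{A}_n^3$ in which the three vertices $X_i$ are glued to form the center $O$ (so $O$ is shared by all three copies) and the remaining outmost vertices are glued in pairs to form $X,Y,Z$ (so each of $X,Y,Z$ is shared by exactly two copies). The starting observation is that every non-shared (internal) vertex of a copy has all its neighbours inside that copy, hence must be dominated from within its own copy; only the four shared vertices $O,X,Y,Z$ may be dominated from a neighbouring copy. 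Consequently a dominating set $\mathcal{D}$ of $\mathcal{A}_{n+1}$ restricts on each copy to a set that dominates at least all of that copy's internal vertices, and $|\mathcal{D}|$ equals the sum of the three restricted sizes minus the overcount of shared members of $\mathcal{D}$: each of $X,Y,Z$ lying in $\mathcal{D}$ is counted twice and so contributes $-1$, while $O\in\mathcal{D}$ is counted three times and contributes $-2$. These overcounts are exactly the subtracted constants $-1,\dots,-5$ appearing in \eqref{D1}--\eqref{D4}; for instance the $-5$ in \eqref{D4} is the accumulated overcount when $O,X,Y,Z$ all lie in $\mathcal{D}$.

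Next I would record the adjacency facts that drive the enumeration: $X,Y,Z$ form a triangle and each is joined to $O$, so placing any single one of $X,Y,Z$ into $\mathcal{D}$ immediately dominates the other two together with $O$. This is what allows a shared outmost vertex to be dominated ``for free'' from a neighbouring copy, and it fixes, configuration by configuration, the class $\gamma_n^k$ to which each copy's restriction belongs. Following the two-group partition stated just before the lemma, I would split $\Upsilon_{n+1}^k$ according to whether $O\in\mathcal{D}$ or $O\notin\mathcal{D}$; within each group the possibilities for how $X,Y,Z$ sit in $\mathcal{D}$ and how $O$ is dominated are finite and correspond to the domination analogues of the configurations in Figs.~\ref{fig:01}--\ref{fig:04}. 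Summing the three per-copy optima $\gamma_n^{k_1}+\gamma_n^{k_2}+\gamma_n^{k_3}$ and applying the overcount correction yields one candidate value per configuration; collecting the candidates for each target class $k=0,1,2,3$ and minimizing produces the right-hand sides of \eqref{D1}--\eqref{D4}.

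The main obstacle is the domination-specific feature that has no analogue in the matching argument: the restriction of $\mathcal{D}$ to a copy need not be a genuine dominating set of that copy, since a shared outmost vertex of the copy may be left undominated internally and dominated instead from an adjacent copy. I therefore have to argue that in every minimal configuration the per-copy cost still coincides with one of $\gamma_n^0,\gamma_n^1,\gamma_n^2,\gamma_n^3$ — that relaxing the requirement ``dominate the shared outmost vertices'' never lets a copy beat its genuine class-$k$ optimum inside a minimum dominating set of $\mathcal{A}_{n+1}$ — using the triangle-plus-center adjacencies above to show that whenever a shared vertex is dominated externally it is dominated by a member of $\mathcal{D}$ that is itself a shared vertex already booked in the overcount. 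The secondary task is exhaustiveness and pruning: I must verify that the drawn configurations are the only ones that can be minimal and discard the rest, which requires a few elementary monotonicity relations among $\gamma_n^0,\gamma_n^1,\gamma_n^2,\gamma_n^3$ (the domination counterpart of Lemma~\ref{lem04}), so that each $\min$ in \eqref{D1}--\eqref{D4} indeed keeps only the two surviving candidates.
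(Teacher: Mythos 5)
Your strategy is the same as the paper's: decompose $\mathcal{A}_{n+1}$ into the three copies $\mathcal{A}_n^{1},\mathcal{A}_n^{2},\mathcal{A}_n^{3}$, classify configurations by which of the shared vertices $O,X,Y,Z$ lie in $\mathcal{D}$, add the per-copy optima, and correct the overcount ($-1$ for each of $X,Y,Z$ in $\mathcal{D}$, $-2$ for $O$); your bookkeeping of these constants is correct, and you rightly isolate the one feature that has no matching analogue, namely that the restriction of $\mathcal{D}$ to a copy dominates the copy's internal vertices but need not dominate the copy's own outmost vertices, so its size is not a priori bounded below by any $\gamma^{k}_n$. The genuine gap is that the mechanism you propose to close this is false: you claim that a shared vertex dominated externally is always dominated by a shared vertex ``already booked in the overcount.'' Take any $\mathcal{D}\in\Upsilon^0_{n+1}$ with $O\notin\mathcal{D}$ (the configuration behind the term $3\gamma^0_n$ in \eqref{D1}): none of $O,X,Y,Z$ is in $\mathcal{D}$, so each of these four vertices is necessarily dominated by an \emph{internal} vertex of some copy, and your claim fails outright; the same happens in every configuration that leaves some shared vertex outside $\mathcal{D}$. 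Hence the lower-bound direction of \eqref{D1}--\eqref{D4} remains unproved in your proposal.

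The correct way to close the gap is a hereditary property of Apollonian networks that the paper states only later, in the remark after Theorem~\ref{Thm22}: any subset of $\mathcal{V}(\mathcal{A}_n)$ dominating all non-outmost vertices of $\mathcal{A}_n$ automatically dominates the outmost vertices as well. Indeed, an outmost vertex $v$ lies in a triangle created at iteration $n-1$; the vertex $u$ placed inside that triangle at iteration $n$ has neighbourhood $\{v,w_1,w_2\}$ with $w_1,w_2$ adjacent to $v$, and since $u$ is internal the set must meet $\{u,v,w_1,w_2\}$, every element of which is $v$ or a neighbour of $v$. With this, each restriction $\mathcal{D}\cap\mathcal{V}(\mathcal{A}_n^{i})$ is a genuine dominating set of its copy and the recursion follows. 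Two further remarks. Your ``pruning via monotonicity'' step is both unnecessary and circular: the relations you want are Lemma~\ref{lem25}, which the paper derives \emph{from} the present lemma; no pruning is needed because, once the class $k$ fixes which of $X,Y,Z$ lie in $\mathcal{D}$, the only remaining freedom is whether $O\in\mathcal{D}$, so the class of every copy is forced and each right-hand side has exactly two candidates. Finally, honest bookkeeping for $\Upsilon^1_{n+1}$ with $O\notin\mathcal{D}$ yields the candidate $2\gamma^1_n+\gamma^0_n-1$ (the third copy must still be dominated), not $2\gamma^1_n-1$ as printed in \eqref{D2}; this is evidently a typo in the paper (harmless, since $2\gamma^2_n+\gamma^1_n-3$ is the smaller term once $\gamma^0_n\geq\gamma^1_n\geq\gamma^2_n$), but it means your enumeration cannot literally ``produce the right-hand side'' of \eqref{D2} as stated.
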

\begin{proof}
By definition, $\gamma_{n+1}^k$, $k = 0,1,2,3$, is the cardinality of a set in $\Upsilon_{n+1}^k$.  The four sets $\Upsilon_{n+1}^k$, $k = 0,1,2,3$, can be iteratively constructed from $\Upsilon_n^0$, $\Upsilon_n^1$, $\Upsilon_n^2$, and $\Upsilon_n^3$, as will be shown below. Then, $\gamma_{n+1}^k$, $k = 0,1,2,3$, can be derived from $\gamma_n^0$, $\gamma_n^1$, $\gamma_n^2$, and $\gamma_n^3$.

Eqs.~\eqref{D1}-\eqref{D4}  can all be proved graphically. We first prove Eq.~\eqref{D1}.

By Definition~\ref{Def:AP02},   $\mathcal{A}_{n+1}$  consists  of three copies of $\mathcal{A}_n$, $\mathcal{A}_{n}^{i}$ with $i=1,2,3$. By definition, for an arbitrary dominating set $\mathcal{D}$ belonging to $\Upsilon_{n+1}^0$, the three outmost vertices of $\mathcal{A}_{n+1}$ are not in $\mathcal{D}$, implying that the outmost vertices $Y_{i}$ and $Z_{i}$ of $\mathcal{A}_{n}^{i}$, $i=1,2,3$, do not belong to $\mathcal{D}$, see Fig.~\ref{FigApo02}. Then, we can construct $\mathcal{D}$ from $\Upsilon_n^0$, $\Upsilon_n^1$, $\Upsilon_n^2$, and $\Upsilon_n^3$ by considering whether the outmost vertices $X_{i}$ of $\mathcal{A}_{n}^{i}$, $i=1,2,3$, are in $\mathcal{D}$ or not.  Fig.~\ref{fig:21} illustrates all possible configurations of dominating sets in $\Theta_{n+1}^0$ that contains all dominating sets in $\Upsilon_{n+1}^0$. In Fig.~\ref{fig:21}, only the outmost vertices of $\mathcal{A}_{n}^{i}$,  $i=1,2,3$, are shown, with solid vertices being in the dominating sets, while open vertices not. From Fig.~\ref{fig:21}, we obtain
\begin{equation*}
\gamma^0_{n+1}=\min\{3\gamma^0_n,3\gamma^1_n-2\}.
\end{equation*}

Similarly, we can prove Eqs.~\eqref{D2},~\eqref{D3},  and~\eqref{D4} . In Figs.~\ref{fig:22},~\ref{fig:23}, and~\ref{fig:24},  we provide graphical representations of them.
\end{proof}


\begin{figure}
\centering
    \includegraphics[width=0.3\textwidth]{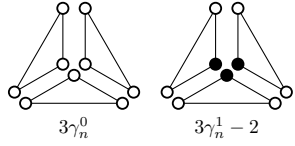}
	\caption{\label{fig:21} All possible configurations of dominating sets in  $\Theta_{n+1}^0$ for $\mathcal{A}_{n+1}$,  which contains the elements in $\Upsilon_{n+1}^0$ . }
\end{figure}
\begin{figure}
\centering
    \includegraphics[width=0.85\textwidth]{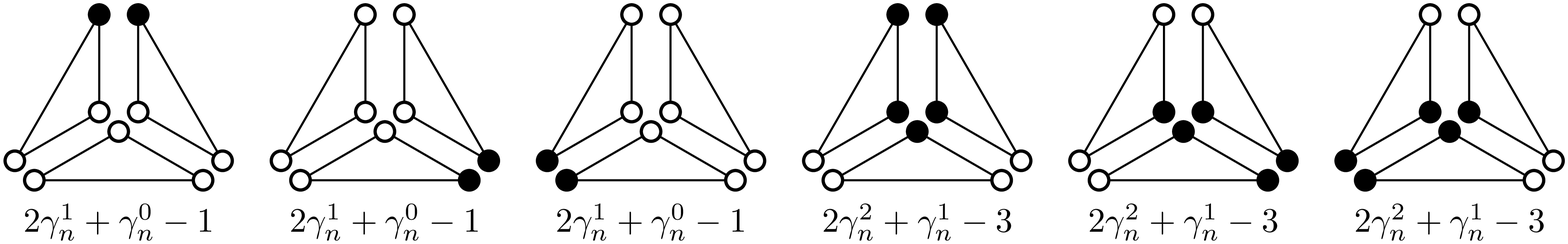}
	\caption{\label{fig:22} All possible configurations of dominating sets in  $\Theta_{n+1}^1$ for $\mathcal{A}_{n+1}$,  which contains  the elements in $\Upsilon_{n+1}^1$ . }
\end{figure}
\begin{figure}
\centering
    \includegraphics[width=0.85\textwidth]{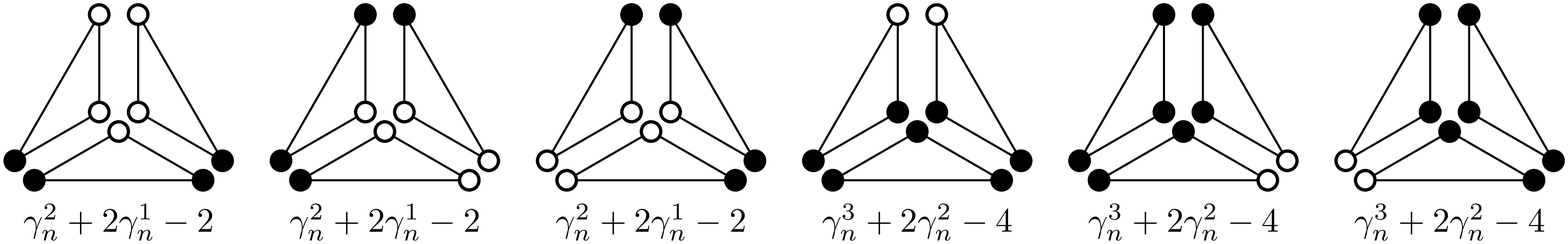}
	\caption{\label{fig:23} All possible configurations of dominating sets in  $\Theta_{n+1}^2$ for $\mathcal{A}_{n+1}$,  which contains the elements in $\Upsilon_{n+1}^2$ . }
\end{figure}
\begin{figure}
\centering
    \includegraphics[width=0.3\textwidth]{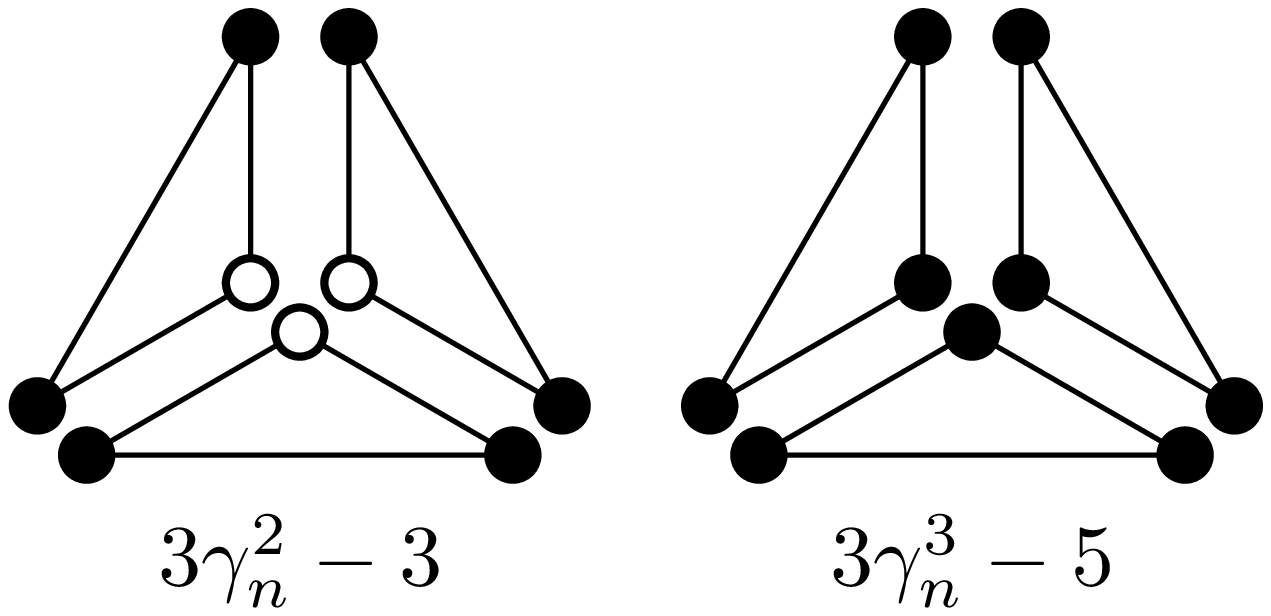}
	\caption{\label{fig:24} All possible configurations of dominating sets in  $\Theta_{n+1}^3$ for $\mathcal{A}_{n+1}$,  which contains the elements in $\Upsilon_{n+1}^3$.}
\end{figure}

For small $n$, $\gamma_n^0$, $\gamma_n^1$, $\gamma_n^2$, and $\gamma_n^3$ can be determined by hand.  For example,  for  $\mathcal{A}_n$, $n=1,2,3$, we have $(\gamma^0_1,\gamma^1_1,\gamma^2_1,\gamma^3_1)=(1,1,2,3)$, $(\gamma^0_2,\gamma^1_2,\gamma^2_2,\gamma^3_2)=(1,2,2,3)$, and $(\gamma^0_3,\gamma^1_3,\gamma^2_3,\gamma^3_3)=(3,3,3,3)$.

For $n \geq 3$, these quantities follow some relation as stated in the following lemma.
\begin{lemma}\label{lem25}
For Apollonian network $\mathcal{A}_n$, $n \geq 3$, the four quantities $\gamma_n^0$, $\gamma_n^1$, $\gamma_n^2$, and $\gamma_n^3$ obey the following relations:
\begin{equation}\label{D5}
\gamma^0_n\geq\gamma^1_n\geq\gamma^2_n\geq\gamma^3_n.
\end{equation}
\end{lemma}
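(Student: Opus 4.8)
The plan is to prove the chain $\gamma^0_n\ge\gamma^1_n\ge\gamma^2_n\ge\gamma^3_n$ by induction on $n$, using the recursions of Lemma~\ref{lem23}. The base case $n=3$ is immediate from the explicitly computed values $(\gamma^0_3,\gamma^1_3,\gamma^2_3,\gamma^3_3)=(3,3,3,3)$, for which the three inequalities hold with equality. For the inductive step I would assume $\gamma^0_n\ge\gamma^1_n\ge\gamma^2_n\ge\gamma^3_n$ and read off the four quantities at level $n+1$ from Eqs.~\eqref{D1}--\eqref{D4}. Each of these is a minimum of two affine functions of $\gamma^0_n,\dots,\gamma^3_n$, and the induction hypothesis is exactly what is needed to decide, in each case, which of the two branches is the active (smaller) one; once the active branches are identified the target inequalities become elementary comparisons.

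Concretely, I would dispatch the two outer inequalities first. For $\gamma^0_{n+1}\ge\gamma^1_{n+1}$: since $\gamma^0_n\ge\gamma^1_n$ we have $3\gamma^0_n\ge 3\gamma^1_n>3\gamma^1_n-2$, so Eq.~\eqref{D1} gives $\gamma^0_{n+1}=3\gamma^1_n-2$, while Eq.~\eqref{D2} gives $\gamma^1_{n+1}\le 2\gamma^1_n-1$; subtracting yields $\gamma^0_{n+1}-\gamma^1_{n+1}\ge\gamma^1_n-1\ge 0$. For $\gamma^2_{n+1}\ge\gamma^3_{n+1}$: from $\gamma^2_n\ge\gamma^3_n$ the active branch of Eq.~\eqref{D4} is $\gamma^3_{n+1}=3\gamma^3_n-5$, and using $\gamma^1_n\ge\gamma^2_n\ge\gamma^3_n$ both branches of Eq.~\eqref{D3} are at least $3\gamma^3_n-4>3\gamma^3_n-5$, whence $\gamma^2_{n+1}\ge\gamma^3_{n+1}$.

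The delicate step, and the one I expect to be the main obstacle, is the middle inequality $\gamma^1_{n+1}\ge\gamma^2_{n+1}$. Using $2\gamma^1_n\ge\gamma^2_n+\gamma^3_n$ one checks that the active branch of Eq.~\eqref{D3} is the second one, so $\gamma^2_{n+1}=\gamma^3_n+2\gamma^2_n-4$, and the goal becomes $\min\{2\gamma^1_n-1,\,2\gamma^2_n+\gamma^1_n-3\}\ge\gamma^3_n+2\gamma^2_n-4$. The branch $2\gamma^2_n+\gamma^1_n-3\ge\gamma^3_n+2\gamma^2_n-4$ reduces to $\gamma^1_n\ge\gamma^3_n-1$, which is immediate from the hypothesis. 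The remaining branch, however, reduces to $2(\gamma^1_n-\gamma^2_n)\ge\gamma^3_n-3$, which is \emph{not} a consequence of the bare ordering $\gamma^1_n\ge\gamma^2_n$ alone.

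I would therefore strengthen the inductive hypothesis to carry, alongside the ordering, a quantitative control on the gaps between consecutive quantities, in particular a lower bound on $\gamma^1_n-\gamma^2_n$ in terms of $\gamma^3_n$, and verify that this refined invariant is itself preserved by Eqs.~\eqref{D1}--\eqref{D4}. Isolating the correct self-sustaining gap invariant, and checking that it propagates through all four recursions, is the crux of the argument; the three ordering inequalities then follow as above. Since the feasibility of the whole induction hinges on this quantitative gap control rather than on the ordering per se, this is precisely the point at which I would expect the argument to demand the most care.
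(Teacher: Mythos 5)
Your treatment of the base case and of the two outer inequalities is correct, and you have put your finger on exactly the right spot: the middle inequality $\gamma^1_{n+1}\ge\gamma^2_{n+1}$ does not follow from the bare ordering together with Eqs.~\eqref{D1}--\eqref{D4} as printed, because the branch $2\gamma^1_n-1$ of Eq.~\eqref{D2} may be the active one. This is precisely where the paper's own proof is unjustified: it simply asserts $\gamma^1_{k+1}=2\gamma^2_k+\gamma^1_k-3$, i.e.\ that the second branch of Eq.~\eqref{D2} is active, which the ordering hypothesis does not give (at $k=3$ the printed recursion yields $\min\{5,6\}=5$, the \emph{first} branch). However, the repair you propose --- strengthening the induction by a lower bound on $\gamma^1_n-\gamma^2_n$ in terms of $\gamma^3_n$ --- cannot be carried out. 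The invariant you need, $2(\gamma^1_n-\gamma^2_n)\ge\gamma^3_n-3$, holds with equality at $n=3$ but fails one step later: iterating Eqs.~\eqref{D1}--\eqref{D4} literally from $(\gamma^0_3,\gamma^1_3,\gamma^2_3,\gamma^3_3)=(3,3,3,3)$ gives $(7,5,5,4)$ at $n=4$ (so the required bound reads $0\ge1$), and then $(13,9,10,7)$ at $n=5$, where $\gamma^1_5<\gamma^2_5$. Thus the lemma is outright incompatible with Lemma~\ref{lem23} as printed, and no strengthened induction hypothesis, however cleverly chosen, can close the argument from those recursions.

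The actual resolution lies one level down: the first branch of Eq.~\eqref{D2} is evidently misstated. In the configuration it describes (center vertex excluded from the dominating set, the one covered outmost vertex shared by two copies), the third copy must still dominate its own interior, contributing $\gamma^0_n$ vertices, so that branch should read $2\gamma^1_n+\gamma^0_n-1$ rather than $2\gamma^1_n-1$. With this correction the second branch of Eq.~\eqref{D2} is always the active one under the ordering hypothesis (the difference of the two branches is $\gamma^1_n+\gamma^0_n-2\gamma^2_n+2\ge2$), and the bare-ordering induction closes exactly as the paper intends: one gets $\gamma^0_{k+1}-\gamma^1_{k+1}=2(\gamma^1_k-\gamma^2_k)+1$, $\gamma^1_{k+1}-\gamma^2_{k+1}=\gamma^1_k-\gamma^3_k+1$, and $\gamma^2_{k+1}-\gamma^3_{k+1}=2(\gamma^2_k-\gamma^3_k)+1$, all positive, and the resulting values agree with the closed forms of Corollary~\ref{lem26}. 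So your diagnosis of the obstruction is sound --- indeed sharper than the paper's own treatment, which glosses over it --- but the obstruction is an error in the recursion, not a missing gap invariant; working from the printed Eq.~\eqref{D2}, neither your plan nor the paper's proof can succeed.
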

\begin{proof}
We  prove this lemma by induction on $n$.  When $n=3$, $\gamma^0_3=\gamma^1_3=\gamma^2_3=\gamma^3_3=3$. Thus, the basis step holds.

Let us suppose that the relation holds true for $n=k$, that is, $\gamma^0_k\geq \gamma^1_k\geq \gamma^2_k\geq \gamma^3_k$.  Then, from Eq.~\eqref{D1}, we obtain
\begin{equation*}
\gamma^0_{k+1}=\min\{3\gamma^0_k,3\gamma^1_k-2\}=3\gamma^1_{k}-2.
\end{equation*}
Analogously, we can derive
$$
\begin{aligned}
\gamma^1_{k+1}= & 2\gamma^2_{k}+\gamma^1_{k}-3,\\
\gamma^2_{k+1}= & \gamma^3_{k}+2\gamma^2_{k}-4,\\
\gamma^3_{k+1}= & 3\gamma^3_{k}-5.
\end{aligned}
$$
Comparing the above-obtained four relations and using induction hypothesis $\gamma^0_k\geq \gamma^1_k\geq \gamma^2_k\geq \gamma^3_k$, we arrive at $\gamma^0_{k+1}\geq \gamma^1_{k+1}\geq \gamma^2_{k+1}\geq \gamma^3_{k+1}$. This completes the proof.
\end{proof}


\begin{theo}\label{Thm21}
The domination number of Apollonian network $\mathcal{A}_n$, $n \geq 3$,  is
\begin{equation}\label{D6}
\gamma_n=\frac{3^{n-3}+5}{2}.
\end{equation}
\end{theo}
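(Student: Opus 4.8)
The plan is to funnel the four quantities $\gamma^0_n,\gamma^1_n,\gamma^2_n,\gamma^3_n$ down to a single one and then solve an elementary linear recurrence. First I would invoke Lemma~\ref{lem22}, which gives $\gamma_n=\min\{\gamma^0_n,\gamma^1_n,\gamma^2_n,\gamma^3_n\}$, together with the monotonicity established in Lemma~\ref{lem25}, namely $\gamma^0_n\geq\gamma^1_n\geq\gamma^2_n\geq\gamma^3_n$ for $n\geq3$. These two facts combine immediately to yield $\gamma_n=\gamma^3_n$, so the entire task collapses to computing $\gamma^3_n$.

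Next I would extract a self-contained recurrence for $\gamma^3_n$ from Eq.~\eqref{D4} of Lemma~\ref{lem23}, which reads $\gamma^3_{n+1}=\min\{3\gamma^2_n-3,\,3\gamma^3_n-5\}$. The key observation is that the ordering $\gamma^2_n\geq\gamma^3_n$ from Lemma~\ref{lem25} forces $3\gamma^2_n-3\geq 3\gamma^3_n-3>3\gamma^3_n-5$, so the second term always realizes the minimum. Hence for $n\geq3$ the recurrence simplifies to $\gamma^3_{n+1}=3\gamma^3_n-5$, a first-order linear recurrence with constant coefficients.

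Finally I would solve this recurrence using the initial value $\gamma^3_3=3$ recorded just before Lemma~\ref{lem25}. Writing the general solution as the sum of the fixed point $\tfrac{5}{2}$ (obtained from $x=3x-5$) and a homogeneous part $c\,3^{n}$, the condition $\gamma^3_3=3$ pins down $c=\tfrac{1}{54}$, giving $\gamma^3_n=\tfrac{3^{n-3}}{2}+\tfrac{5}{2}=\tfrac{3^{n-3}+5}{2}$, which is exactly Eq.~\eqref{D6}.

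I do not expect a genuine obstacle here once Lemmas~\ref{lem22},~\ref{lem23}, and~\ref{lem25} are in hand; the only point requiring care is justifying that the minimum in Eq.~\eqref{D4} is realized by $3\gamma^3_n-5$ uniformly for all $n\geq3$, which is precisely where the monotonicity of Lemma~\ref{lem25} does the essential work. All the real combinatorial content---the graphical enumeration behind the four recurrences and the ordering of the $\gamma^k_n$---has already been discharged in the preceding lemmas, so this final step amounts to a short deduction followed by an elementary calculation.
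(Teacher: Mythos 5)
Your proposal is correct and follows essentially the same route as the paper: both reduce $\gamma_n$ to $\gamma^3_n$ via Lemmas~\ref{lem22} and~\ref{lem25}, extract the recurrence $\gamma^3_{n+1}=3\gamma^3_n-5$ from Eq.~\eqref{D4} (the paper establishes this identity inside the proof of Lemma~\ref{lem25}, using the same ordering argument you spell out), and solve it with the initial value $\gamma^3_3=3$. Your write-up is merely more explicit than the paper's about why the minimum in Eq.~\eqref{D4} is always attained at $3\gamma^3_n-5$; no substantive difference or gap.
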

\begin{proof}
Lemmas~\ref{lem22} and~\ref{lem25} imply  that $\gamma_n=\gamma^3_n$.  Then, we have the following recursion relation for
\begin{equation}\label{Fgamma05}
\gamma_{n+1}=\gamma^3_{n+1}=3 \gamma^3_n-5=3 \gamma_n-5.
\end{equation}
With the initial value $\gamma_3=3$, this relation is solved to  yield $\gamma_n=\frac{3^{n-3}+5}{2}$.
\end{proof}

Therefore, in a large Apollonian network $\mathcal{A}_n$,  $n\rightarrow\infty$, $\frac{\gamma_n}{V_n}=\frac{3^{n-3}+5}{3^n+5}\approx  \frac{1}{27}$.

\begin{coro}\label{lem26}
For Apollonian network $\mathcal{A}_n$, $n\geq3$,
\begin{equation}\label{D7}
\gamma^0_{n}=\frac{3^{n-3}+3\cdot2^{n-2}-1}{2},
\end{equation}
\begin{equation}\label{D8}
\gamma^1_{n}=\frac{3^{n-3}+2^{n-1}+1}{2},
\end{equation}
\begin{equation}\label{D9}
\gamma^2_{n}=\frac{3^{n-3}+2^{n-2}+3}{2}.
\end{equation}
\end{coro}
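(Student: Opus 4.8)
The plan is to exploit the triangular structure of the recursions already isolated in the proof of Lemma~\ref{lem25}. There it is shown that, once the ordering $\gamma^0_n\geq\gamma^1_n\geq\gamma^2_n\geq\gamma^3_n$ holds (valid for all $n\geq3$), the four minima in Eqs.~\eqref{D1}--\eqref{D4} collapse and the recursions simplify to
\begin{equation*}
\gamma^0_{n+1}=3\gamma^1_n-2,\quad \gamma^1_{n+1}=2\gamma^2_n+\gamma^1_n-3,\quad \gamma^2_{n+1}=\gamma^3_n+2\gamma^2_n-4,\quad \gamma^3_{n+1}=3\gamma^3_n-5.
\end{equation*}
The key observation is that this system is decoupled in cascade: $\gamma^3$ evolves on its own, $\gamma^2$ is driven only by $\gamma^3$, $\gamma^1$ only by $\gamma^2$, and $\gamma^0$ only by $\gamma^1$. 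Since Theorem~\ref{Thm21} already furnishes the closed form $\gamma^3_n=\frac{3^{n-3}+5}{2}$, I can solve the three remaining first-order linear recurrences one at a time, working down the chain, each with a fully explicit forcing term inherited from the level below.

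First I would substitute $\gamma^3_n$ into the recurrence for $\gamma^2_{n+1}$, obtaining $\gamma^2_{n+1}=2\gamma^2_n+\frac{3^{n-3}-3}{2}$. This is an inhomogeneous first-order linear recurrence whose homogeneous part contributes a term proportional to $2^n$ and whose forcing $3^{n-3}$ contributes a particular solution proportional to $3^{n-3}$; fixing the homogeneous constant by $\gamma^2_3=3$ yields~\eqref{D9}. Substituting this into the recurrence for $\gamma^1_{n+1}$ gives the telescoping relation $\gamma^1_{n+1}=\gamma^1_n+3^{n-3}+2^{n-2}$, whose solution, pinned by $\gamma^1_3=3$, is~\eqref{D8}. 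Finally $\gamma^0_{n+1}=3\gamma^1_n-2$ is an explicit one-step formula; feeding in~\eqref{D8} and checking against $\gamma^0_3=3$ delivers~\eqref{D7}.

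In practice, since the three closed forms are already displayed in the statement, I would most likely dispatch the whole corollary by a single induction on $n$ rather than solving from scratch: the base case $n=3$ gives $\gamma^0_3=\gamma^1_3=\gamma^2_3=3$, which matches each formula, and the inductive step amounts to inserting the formulas~\eqref{D7}--\eqref{D9} at level $n$ (together with Theorem~\ref{Thm21} for $\gamma^3_n$) into the four simplified recurrences above and verifying the identities at level $n+1$. This verification route keeps the algebra shortest while still relying on exactly the same simplified recursions.

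There is no genuine obstacle here; the lemma is a routine consequence of the recurrences established earlier. The only point demanding minor care is the bookkeeping of two distinct geometric terms, $3^{n-3}$ and $2^{n-2}$, appearing simultaneously in each particular solution: the homogeneous $2^n$-mode first enters at the $\gamma^2$ stage and then propagates upward into $\gamma^1$ and $\gamma^0$, so the powers of $2$ must be tracked separately from the powers of $3$ throughout.
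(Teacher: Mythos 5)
Your proposal is correct and follows essentially the same route as the paper: the paper's own proof of this corollary is a one-line appeal to Lemmas~\ref{lem23} and~\ref{lem25} together with Eq.~\eqref{D6}, i.e., precisely the collapsed cascade $\gamma^0_{n+1}=3\gamma^1_n-2$, $\gamma^1_{n+1}=2\gamma^2_n+\gamma^1_n-3$, $\gamma^2_{n+1}=\gamma^3_n+2\gamma^2_n-4$, $\gamma^3_{n+1}=3\gamma^3_n-5$ that you solve (or verify by induction) from the initial values at $n=3$. You simply supply the elementary algebra that the paper leaves implicit.
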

\begin{proof}
By Lemmas~\ref{lem23}  and~\ref{lem25} and Eq.~\eqref{D6}, the result is concluded immediately.
\end{proof}

\subsubsection{Number of minimum dominating sets}

Let $w_n$ be the number of minimum dominating sets in Apollonian network $\mathcal{A}_n$.
\begin{theo}\label{Thm22}
The Apollonian network $\mathcal{A}_n$, $n\geq4$, has a unique minimum domination set, that is, $w_n=1$.
\end{theo}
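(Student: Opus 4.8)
The plan is to show that the minimum dominating set of $\mathcal{A}_n$ is forced to be unique for $n \geq 4$ by tracking, as in Lemma~\ref{lem25}, which of the quantities $\gamma^0_n, \gamma^1_n, \gamma^2_n, \gamma^3_n$ is actually achieved and by counting how many configurations attain it. Recall from the proof of Lemma~\ref{lem25} that for $k \geq 3$ the recursions collapse to the single-branch relations $\gamma^0_{k+1} = 3\gamma^1_k - 2$, $\gamma^1_{k+1} = 2\gamma^2_k + \gamma^1_k - 3$, $\gamma^2_{k+1} = \gamma^3_k + 2\gamma^2_k - 4$, and $\gamma^3_{k+1} = 3\gamma^3_k - 5$, and that by Theorem~\ref{Thm21} the domination number equals $\gamma^3_n$. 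So the first step is to introduce counting analogues $u^k_n$, the number of minimum dominating sets in the class $\Upsilon^k_n$ (i.e. those of size $\gamma^k_n$ containing exactly $k$ outmost vertices), with $w_n = \min$-weighted combination of these; the quantity of real interest is $u^3_n$, since $\gamma_n = \gamma^3_n$.

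The key step is a careful reading of the configuration figures (Figs.~\ref{fig:21}--\ref{fig:24}) to determine, for each class $\Upsilon^k_{n+1}$, exactly which sub-configuration of the three copies $\mathcal{A}^i_n$ achieves the minimum and how many ways it can be realized. For $n \geq 3$ the single-branch structure of the recursions means each $\gamma^k_{n+1}$ is attained by only one of its competing configurations, so the counting recursions should each reduce to a single product term: schematically $u^k_{n+1}$ equals a product of three of the $u^j_n$'s (one per copy) times a symmetry factor counting how the winning configuration can be distributed among the three copies. I would write these product-recursions out explicitly, then compute the base-case values at $n = 3$ (and possibly $n=4$) directly. The crux is to verify that for the class $\Upsilon^3_{n+1}$ governing $\gamma_{n+1} = \gamma^3_{n+1} = 3\gamma^3_n - 5$, the winning configuration forces all three copies into their $\Upsilon^3_n$-minimizing dominating set, so that $u^3_{n+1}$ is a pure power of $u^3_n$ with a symmetry factor equal to $1$ (the configuration being symmetric under the three copies), whence $u^3_{n+1} = (u^3_n)^3$ or simply $u^3_{n+1} = u^3_n$, giving $w_n = 1$ once the base case is established.

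The main obstacle I expect is the base case and the transition from $n = 3$ to $n = 4$: at $n = 3$ we have the degenerate equalities $\gamma^0_3 = \gamma^1_3 = \gamma^2_3 = \gamma^3_3 = 3$ from the data preceding Lemma~\ref{lem25}, so \emph{all four} classes tie at $n=3$ and there may be several minimum dominating sets in $\mathcal{A}_3$; uniqueness can only kick in at $n = 4$, once the strict inequalities $\gamma^0_n > \gamma^1_n > \gamma^2_n > \gamma^3_n$ (which I would first confirm hold strictly for $n \geq 4$ using the explicit formulas in Corollary~\ref{lem26} and Theorem~\ref{Thm21}) eliminate all competing branches. Thus the real work is to pin down $u^3_4 = 1$ by direct inspection of $\mathcal{A}_4$ via the $n=3$ data and the configuration figures, and then to argue inductively that the strictness of the inequalities for $n \geq 4$ keeps exactly one winning configuration at every subsequent step, each copy being forced into its unique $\Upsilon^3_n$-dominating set. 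Once $u^3_4 = 1$ is verified and the single-branch product recursion is in place, the induction $u^3_{n+1} = (u^3_n)^{\text{(something)}} = 1$ closes immediately, yielding $w_n = u^3_n = 1$ for all $n \geq 4$.
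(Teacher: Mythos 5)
Your proposal is correct and takes essentially the same route as the paper: the paper's own proof likewise uses the collapsed recursion $\gamma^3_{n+1}=3\gamma^3_n-5$ (Eq.~\eqref{Fgamma05}) together with Fig.~\ref{fig:24} to conclude that for $n\geq 4$ every minimum dominating set of $\mathcal{A}_{n+1}$ is forced to be the union of three minimum dominating sets from $\Upsilon_n^3$, one per copy of $\mathcal{A}_n$, and then inducts from the directly verified uniqueness at $n=4$. Your quantities $u^k_n$ and the product recursion $u^3_{n+1}=(u^3_n)^3$, together with the strict inequalities $\gamma^0_n>\gamma^1_n>\gamma^2_n>\gamma^3_n$ ruling out the other classes, are just an explicit bookkeeping of that same argument.
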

\begin{proof}
Equation~\eqref{Fgamma05} and Fig.~\ref{fig:24} indicate that for $n \geq 4$, any minimum dominating set of $\mathcal{A}_{n+1}$ is the union of three minimum dominating sets in $\Upsilon_n^3$, of the three copies of $\mathcal{A}_{n}$ (i.e., $\mathcal{A}_{n}^{1}$, $\mathcal{A}_{n}^{2}$, and $\mathcal{A}_{n}^{3}$) forming $\mathcal{A}_{n+1}$, with their identified outmost vertices being counted only four times. Since the minimum dominating set of $\mathcal{A}_4$ is unique, for $n \geq 4$, there is a unique  minimum dominating set in $\mathcal{A}_{n}$. In fact, the unique dominating set of $\mathcal{A}_n$, $n \geq 4$, is  the set of all vertices of $\mathcal{A}_{n-2}$.
\end{proof}


The Apollonian network $\mathcal{A}_n$ has some interesting properties. For example, a dominating set of any of its subgraphs, say $\mathcal{A}'_n$, obtained from $\mathcal{A}_n$ by deleting one or several outmost vertices is also a dominating set of the whole network $\mathcal{A}_n$. This can be explained as follows. For a new neighboring vertex (with degree 3) of an outmost vertex in $\mathcal{A}_n$ not in its subgraph $\mathcal{A}'_n$, it must be generated at iteration $n$ and its other neighbors are also neighbors of the outmost vertex. Thus, any dominating set of the subgraph $\mathcal{A}'_n$ must simultaneously dominate the new vertex and its neighboring outmost vertex. Therefore, any dominating set of $\mathcal{A}'_n$ is also a dominating set of $\mathcal{A}_n$.


In addition to the number of minimum dominating sets in Apollonian network $\mathcal{A}_n$ itself, the number of minimum dominating sets in some of its subgraphs can also be determined explicitly. Let $x_n$ be the number of minimum dominating sets of $\mathcal{A}_n\setminus \{X,Y,Z\}$, let $y_n$ be the number of minimum dominating sets of $\mathcal{A}_n\setminus \{X,Y\}$, which is equal to the number of minimum dominating sets of $\mathcal{A}_n\setminus \{Y,Z\}$ or $\mathcal{A}_n\setminus \{Z,X\}$, and let $z_n$ be the number of minimum dominating sets of $\mathcal{A}_n\setminus \{X\}$, which equals the number of dominating sets of $\mathcal{A}_n\setminus \{Y\}$ or $\mathcal{A}_n\setminus \{Z\}$.


\begin{lemma}\label{lem29}
For the Apollonian network  $\mathcal{A}_n$, $n\geq4$, $x_n=8$, $y_n=2$, and $z_n=1$.
\end{lemma}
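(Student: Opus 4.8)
The plan is to proceed by induction on $n$, exploiting the recursive ``three copies'' structure of $\mathcal{A}_{n+1}$ from Definition~\ref{Def:AP02}, exactly as in the proof of Theorem~\ref{Thm22}. The key structural fact I would use repeatedly is the one established there: for $n\geq 4$, any minimum dominating set of $\mathcal{A}_{n+1}$ is forced to be the disjoint union (up to the shared outmost vertices) of three minimum dominating sets from $\Upsilon_n^3$, one in each copy $\mathcal{A}_n^i$. In particular, the minimum dominating set of $\mathcal{A}_n$ is $\mathcal{A}_{n-2}$ itself and is \emph{unique}. I would combine this with the observation, quoted just before the lemma, that deleting outmost vertices does not change the domination requirement: any dominating set of a subgraph $\mathcal{A}'_n$ obtained by removing outmost vertices is automatically a dominating set of the full $\mathcal{A}_n$. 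This means $x_n,y_n,z_n$ count exactly those minimum dominating sets of $\mathcal{A}_n$ that \emph{avoid} the relevant deleted outmost vertices, while still having minimum cardinality $\gamma_n$.

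First I would handle the base case $n=4$ by direct inspection: since the unique minimum dominating set of $\mathcal{A}_4$ is the vertex set of $\mathcal{A}_2$ (which contains no outmost vertex of $\mathcal{A}_4$), one must recount how many minimum dominating sets survive each deletion, verifying $x_4=8$, $y_4=2$, $z_4=1$. For the inductive step, I would write $\mathcal{A}_{n+1}$ as three copies glued at the identified outmost vertices $X,Y,Z$ and the center $O$, and set up the recursions governing $x_{n+1}, y_{n+1}, z_{n+1}$ in terms of $x_n,y_n,z_n$ and $w_n=1$. Removing, say, $\{X,Y,Z\}$ from $\mathcal{A}_{n+1}$ frees up the corresponding outmost vertices in the three copies; for each copy I must decide which local minimum-cardinality configuration (from $\Upsilon_n^k$) can be used so that the global set still has size $\gamma_{n+1}$ and still dominates. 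The counting reduces to multiplying the number of admissible local choices across the three copies, compatibly with the gluing.

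The main obstacle I anticipate is the bookkeeping at the identified vertices: since $X,Y,Z$ and $O$ are shared between two copies each, a minimum dominating set of $\mathcal{A}_{n+1}$ is not simply a free product of local sets, and I must check carefully which local configurations agree on the shared vertices and still achieve the global minimum $\gamma_{n+1}=3\gamma_n-5$. Concretely, the tight constraint is that each copy must contribute a set of cardinality $\gamma_n^3$ (by the forcing argument from Theorem~\ref{Thm22}), so the deletion of outmost vertices must not force any copy into a strictly larger local configuration; verifying this compatibility, and then counting the admissible combinations, is where the real work lies. I expect the resulting recursions to collapse to constant values $x_n=8$, $y_n=2$, $z_n=1$ for all $n\geq 4$ because the forced uniqueness in the interior ($w_n=1$) kills any exponential growth, leaving only the boundary freedom at the deleted outmost vertices, which stabilizes after the base case.
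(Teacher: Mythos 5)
There is a genuine gap, and it sits at the foundation of your plan. You assert that $x_n,y_n,z_n$ ``count exactly those minimum dominating sets of $\mathcal{A}_n$ that avoid the relevant deleted outmost vertices, while still having minimum cardinality $\gamma_n$.'' This is false, and fatally so: for $n\geq 4$ one has $\gamma_n=\gamma_n^3<\gamma_n^2<\gamma_n^1<\gamma_n^0$ (by Corollary~\ref{lem26} the gaps are $2^{n-3}-1>0$), so \emph{every} dominating set of cardinality $\gamma_n$ lies in $\Upsilon_n^3$ and hence contains all three outmost vertices; indeed the unique minimum dominating set of $\mathcal{A}_n$ is the vertex set of a smaller Apollonian network, which contains the initial triangle $X,Y,Z$. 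Consequently no dominating set of size $\gamma_n$ avoids even one outmost vertex, and under your identification you would get $x_n=y_n=z_n=0$, contradicting the lemma. Your base case carries the same error: the outmost vertices of $\mathcal{A}_4$ are the same three initial vertices present in every $\mathcal{A}_k$, so the unique minimum dominating set of $\mathcal{A}_4$ certainly contains them and does not ``survive'' any deletion. What the paper's observation before the lemma actually gives is only that dominating sets of $\mathcal{A}_n\setminus\{X\}$ (say) coincide with dominating sets of $\mathcal{A}_n$ avoiding $X$; the minimum cardinality within that family is \emph{not} $\gamma_n$ but $\gamma_n^2$ (and $\gamma_n^1$ for two deletions, $\gamma_n^0$ for three). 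For the same reason your ``tight constraint'' that each copy contributes a set of cardinality $\gamma_n^3$ is wrong: the forcing argument of Theorem~\ref{Thm22} applies to minimum dominating sets of $\mathcal{A}_{n+1}$ itself, not to these subgraph problems.

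The paper's proof avoids all of this by working with the restricted optima directly: minimum dominating sets of $\mathcal{A}_{n+1}\setminus\{X,Y,Z\}$, $\mathcal{A}_{n+1}\setminus\{X,Y\}$, $\mathcal{A}_{n+1}\setminus\{X\}$ correspond to the appropriate elements of $\Upsilon_{n+1}^0$, $\Upsilon_{n+1}^1$, $\Upsilon_{n+1}^2$, and the configuration analysis of Figs.~\ref{fig:21}, \ref{fig:22} and \ref{fig:23} (combined with the ordering of Lemma~\ref{lem25}, which identifies the achieving branch of each minimum in Lemma~\ref{lem23}) yields the coupled recursions $x_{n+1}=y_n^3$, $y_{n+1}=z_n^2y_n$, $z_{n+1}=w_nz_n^2$. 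Note which local configurations actually occur: for $x_{n+1}$ each copy carries a $\Upsilon_n^1$-type set through the shared center; for $z_{n+1}$ one copy carries a genuine minimum dominating set (factor $w_n$) and the other two carry $\Upsilon_n^2$-type sets. Solving with $x_3=1$, $y_3=2$, $z_3=1$ and $w_n=1$ gives the constants $8$, $2$, $1$. If you wish to keep your induction framework, you must first replace your identification of what $x_n,y_n,z_n$ count by this one; as written, the proposal argues about a different (and empty) collection of sets.
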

\begin{proof}
Figs.~\ref{fig:21},~\ref{fig:22}, and~\ref{fig:23} show that the quantities $x_n$, $y_n$, and $z_n$  satisfy the following relations:
\begin{equation}\label{D10}
x_{n+1} = y_n^3,
\end{equation}
\begin{equation}\label{D11}
y_{n+1} = z_n^2y_n,
\end{equation}
and
\begin{equation}\label{D12}
z_{n+1} = w_nz_n^2.
\end{equation}
Using $x_3=1$, $y_3=2$, $z_3=1$, and $w_n=1$ for all $n\geq 3$, the above-obtained relations in Eqs.~\eqref{D10},~\eqref{D11}, and~\eqref{D12} are solved to give the result.
\end{proof}



\subsection{Domination  number and the number of minimum dominating sets   in extended Tower of Hanoi graphs}

We are now in a position to study the domination  number and the number of minimum dominating sets   in extended Tower of Hanoi graphs.

\subsubsection{Domination  number}

Let $\gamma_n$ denote  the domination number of the  extended Tower of Hanoi graph $\mathcal{S}^+_n$, $n  \geq 1$.
\begin{theo}\label{Thm31}
The domination number of the  extended Tower of Hanoi graph $\mathcal{S}^+_n$, $n  \geq 1$, is
$$  \gamma_n=\left\{
\begin{aligned}
\frac{3^n+1}{4},\ & n\ odd,\\
\frac{3^n+3}{4},\ & n\ even.
\end{aligned}
\right.
$$
\end{theo}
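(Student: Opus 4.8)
The plan is to mimic the self-similar decomposition that succeeded for the matching number and for the domination number of the Apollonian networks, but now working directly inside the extended Tower of Hanoi graph $\mathcal{S}^+_n$. Recall that $\mathcal{S}^+_n$ is built from $\mathcal{H}_n$, which decomposes into three copies $\mathcal{H}^1_{n-1},\mathcal{H}^2_{n-1},\mathcal{H}^3_{n-1}$ joined by three linking edges between extreme vertices, plus a special vertex $s$ adjacent to the three extreme vertices $a_n,b_n,c_n$. The main difficulty, and the reason the parity split appears, is that the special vertex $s$ and the three outer extreme vertices interact globally: a single well-placed vertex in a dominating set can cover $s$ together with one extreme vertex, and this ``shared coverage'' is exactly what breaks the clean multiplicative recursion one gets for an ordinary Hanoi graph. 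So the first step I would take is to set up auxiliary domination parameters that track the status of the three extreme vertices of $\mathcal{H}_n$ (for each extreme vertex: is it in the set, is it dominated from inside its own copy, or must it be dominated from outside), analogous to the classes $\Upsilon^k_n$ used for the Apollonian case.

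Concretely I would define, for the plain Hanoi graph $\mathcal{H}_n$, a family of ``boundary-conditioned'' minimum domination numbers indexed by the required state of $a_n,b_n,c_n$, and derive a system of recursions expressing the $(n+1)$-level parameters in terms of the $n$-level ones by enumerating, over the three sub-copies and the three linking edges, which boundary configurations of the sub-copies combine to realize each target configuration. This is the same graphical bookkeeping as in Lemmas~\ref{lem23} and~\ref{lem25}, just with more boundary states because here I must distinguish ``dominated but not chosen'' from ``dominated and chosen'' at each extreme vertex in order to account correctly for domination flowing across the linking edges. From these Hanoi-level recursions I would then obtain $\gamma_n$ for $\mathcal{S}^+_n$ by adding the special vertex: either $s$ itself lies in the dominating set (covering all three extreme vertices of $\mathcal{H}_n$ simultaneously at unit cost) or $s$ is dominated by one of $a_n,b_n,c_n$ being chosen. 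Taking the minimum over these options against the boundary-conditioned Hanoi numbers yields a single scalar recursion for $\gamma_n$.

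I expect the resulting recursion for $\gamma_n$ to be of the form $\gamma_{n+1}=3\gamma_n+c$ for a constant correction $c$ coming from the overlap/sharing at the boundary, but with the key twist that the constant toggles with parity because the most economical choice at the top level alternates between ``use $s$'' and ``let an extreme vertex cover $s$'' depending on whether $n$ is odd or even. The cleanest route is therefore to carry two coupled sequences (an even-indexed and an odd-indexed branch), verify the small base cases $n=1,2,3$ by hand to pin down the initial conditions, and then prove by induction on $n$ that the closed forms $\frac{3^n+1}{4}$ (odd $n$) and $\frac{3^n+3}{4}$ (even $n$) satisfy both the recursion and the base cases. A sanity check that I would build in: both formulas are integers exactly under the stated parity (since $3^n+1$ is a multiple of $4$ for odd $n$, as already noted in the text, and $3^n+3=3(3^{n-1}+1)$ is a multiple of $4$ for even $n$), which confirms the parity split is genuine rather than an artifact.

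The main obstacle will be the boundary-state bookkeeping in the Hanoi-level recursion: getting the list of boundary configurations complete and non-redundant, and in particular correctly accounting for the domination that crosses the three linking edges so that a vertex chosen in one copy near a linking edge is credited with dominating the adjacent extreme vertex of the neighboring copy. This cross-edge sharing is precisely the mechanism that produces the parity-dependent constant, and an off-by-one in how many extreme vertices can be ``covered for free'' from outside would shift the constant and destroy the closed form. Once that enumeration is pinned down and checked against the known values at $n=1,2,3$, the remaining induction is routine arithmetic, exactly parallel to the proof of Theorem~\ref{Thm21}.
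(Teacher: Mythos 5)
Your proposed route is genuinely different from the paper's, and it is worth being clear about what each costs and buys. The paper sets up no recursion for $\gamma_n$ at all. It writes down an explicit candidate set $\mathcal{D}=\{\xi\in\mathcal{V}(\mathcal{H}_n)\mid f^{0}_{\xi}\ \mathrm{and}\ f^{1}_{\xi}\ \mathrm{even}\}$ (a perfect-code-type set in the sense of \cite{LiNe98}), checks that distinct vertices of $\mathcal{D}$ are at mutual distance $3$ and that every vertex outside $\mathcal{D}$ has exactly one neighbour in $\mathcal{D}$, so that $|\mathcal{D}|=\frac{3^n+1}{4}$ for odd $n$ and $\frac{3^n+3}{4}$ for even $n$, and pairs this with the elementary counting lower bound: every vertex dominates at most four vertices (itself plus at most three neighbours), so any dominating set of $\mathcal{H}_n$ has size at least $\lceil 3^n/4\rceil$ and any dominating set of $\mathcal{S}^+_n$ has size at least $\lceil (3^n+1)/4\rceil$, which match the sizes above. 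Since $\mathcal{D}$ always contains at least one extreme vertex, it also dominates $s$, and the theorem follows with no induction on the self-similar decomposition. What the paper's argument buys is brevity, an explicit minimum dominating set, and the type bookkeeping (I/D/C states of extreme vertices) that is reused to count minimum dominating sets in Theorem~\ref{Thm32}; what your argument would buy is self-containedness (no appeal to the perfect-code structure of \cite{LiNe98}) and a mechanical scheme of exactly the kind the paper itself uses for the Apollonian network in Lemmas~\ref{lem23} and~\ref{lem25}, so it is certainly a legitimate alternative.

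Two caveats, one of substance and one conceptual. First, as written your proposal defers its entire mathematical content: the boundary-state recursion is where all the work lies. For domination you need three states per extreme vertex (chosen; dominated from inside its own copy; left to be dominated from across a linking edge or by $s$), which even after exploiting the threefold symmetry leaves on the order of ten coupled quantities whose transfer rules across the three linking edges must be enumerated exhaustively in both directions -- every admissible combination of sub-copy states yields a dominating set (upper bound), and every dominating set of $\mathcal{S}^+_{n+1}$ restricts to such a combination (lower bound). Until that table is produced and verified, the asserted recursion $\gamma_{n+1}=3\gamma_n+c$ with a parity-toggling $c$ is a conjecture, not a step. Second, your diagnosis of where the parity split comes from is wrong: it is not created by the interaction of $s$ with the extreme vertices. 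The split is intrinsic to $\mathcal{H}_n$ itself, arising from the ceiling $\lceil 3^n/4\rceil$, which equals $\frac{3^n+1}{4}$ for odd $n$ and $\frac{3^n+3}{4}$ for even $n$; adding $s$ changes neither value, i.e.\ $\gamma(\mathcal{S}^+_n)=\gamma(\mathcal{H}_n)$, which is precisely the remark the paper makes immediately after Theorem~\ref{Thm31}. This misreading does not invalidate your machinery -- the recursion, done correctly, would still produce the stated formulas -- but organizing the proof around a supposed top-level alternation between ``use $s$'' and ``let an extreme vertex cover $s$'' would send you looking for the parity in the wrong place.
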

\begin{proof}
The proof  technique is similar to that in~\cite{LiNe98}.  The basic idea  is to construct a minimum dominating set $\mathcal{D}$ of $\mathcal{H}_n$ ($n  \geq 1$) with size $\gamma_n$,  and then show that $\mathcal{D}$ is also a minimum dominating set of $\mathcal{S}^+_n$. 

For a vertex $\xi=\xi_1 \xi_2 \ldots \xi_{n-1} \xi_n$ in $\mathcal{H}_n$, Let $f^{k}_{\xi}$ denote the number of terms in the label for $\xi$ that are equal to $k$ for $k=0$, 1, or 2.    Define  $\mathcal{D}$ as a subset of  vertices in  $\mathcal{H}_n$ such that
 $\mathcal{D}=\{\xi \in \mathcal{V}(\mathcal{H}_n) | f^{0}_{\xi} \, {\rm and} \,  f^{1}_{\xi}\,    {\rm are\, even} \}$.  Then, for even $n$, all the three extreme vertices are in $\mathcal{D}$; while for odd $n$, only extreme vertex $c_n$ is in $\mathcal{D}$. By definition,  for any pair of nearest vertices in  $\mathcal{D}$, their distance is exactly 3, thus they have no neighbors in common~\cite{LiNe98}. Moreover, any vertex in $\mathcal{D}$ has 3 neighbors, except the extreme vertices, each having only two neighbors in  $\mathcal{H}_n$.  Note that any other vertex in $\mathcal{V}(\mathcal{H}_n) \setminus \mathcal{D}$ is adjacent to exactly one  vertex in $\mathcal{D}$~\cite{LiNe98}.  Therefore, the number of vertices in $\mathcal{D}$ is $| \mathcal{D}|=\frac{3^n+3}{4}$ for even $n$ or $| \mathcal{D}|=\frac{3^n+1}{4}$ for odd $n$. Hence $\mathcal{D}$ is a dominating set of $\mathcal{H}_n$.

We proceed to show that $\mathcal{D}$ is also a minimum dominating set  of  $\mathcal{H}_n$.  Since all vertices of $\mathcal{H}_n$  have a degree 3, except the three extreme vertices (each with a degree 2), every vertex in a minimum dominating set can dominate at most 4 vertices including itself. As a result,  any dominating set of  $\mathcal{H}_n$ contains at least $\lceil\frac{3^n}{4}\rceil=\frac{3^n+1}{4}$ vertices  for even $n$, or $\lceil\frac{3^n}{4}\rceil=\frac{3^n+3}{4}$  vertices for odd $n$.  Thus,  $\mathcal{D}$  is actually a minimum dominating set  of  $\mathcal{H}_n$.  Considering  that $\mathcal{D}$ includes at least one extreme vertex,  which is a neighbor of the special vertex $s$ in $\mathcal{S}^+_n$, which implies  that   $\mathcal{D}$ is  a  dominating set  of $\mathcal{S}^+_n$.  Because in any  dominating set  of  $\mathcal{S}^+_n$, there are  at least $\lceil\frac{3^n+1}{4}\rceil=\frac{3^n+1}{4}$ vertices for even $n$, or $\lceil\frac{3^n+1}{4}\rceil=\frac{3^n+3}{4}$ vertices for odd $n$,   $\mathcal{D}$ is also a minimum dominating set  of $\mathcal{S}^+_n$. This completes the proof.
\end{proof}

Theorem~\ref{Thm31} shows  that  the domination number of the  extended Tower of Hanoi graph $\mathcal{S}^+_n$, $n  \geq 1$, is equivalent to that of the corresponding  Tower of Hanoi graph $\mathcal{H}_n$~\cite{LiNe98,CuNe99,KlMiPe02,GrKlMo05}.

\subsubsection{Number of minimum dominating sets in $\mathcal{S}^+_n$ for odd $n$ }

We will show that except for the  domination number, the number of minimum dominating sets in the  extended Tower of Hanoi graph $\mathcal{S}^+_n$, $n  \geq 1$, can also be determined explicitly, at least for odd $n$.
 Hereafter,  we will only consider the  number of minimum dominating sets in $\mathcal{S}^+_n$ for odd $n$.

Given a minimum dominating  set $\mathcal{D}$ of $\mathcal{S}^+_n$,  the  vertices of any subset $\Delta \subset \mathcal{V}(\mathcal{S}^+_n)$   can be classified into three types. For any vertex  $y   \in  \Delta$,  either  $y$ is in  $ \mathcal{D}$, or $y$ is adjacent to  a vertex in $\mathcal{D}$ that is a disjoint union of $\mathcal{D}\cap \Delta$ and $\mathcal{D} \setminus \Delta$. If  $y   \in  \mathcal{D}$,  we say $y$ is  of type I; if  $y$ is adjacent to a vertex in $\mathcal{D}\cap \Delta$, then $y$ is  of type D; otherwise $y$ is  adjacent to a vertex in $\mathcal{D} \setminus \Delta$,  then $y$ is  of type C.  According to this definition,   for an arbitrary  minimum domination set  $\mathcal{D}$ of $\mathcal{S}^+_n$, if we set $\Delta= \mathcal{V}(\mathcal{H}_n)$, then  the possible types of the three extreme vertices $a_n$, $b_n$, and $c_n$ of   $\mathcal{H}_n$ are  in turn  D-D-I, D-I-D, I-D-D, or C-C-C.

Note that $\mathcal{H}_{n+2}$  consists of nine copies of $\mathcal{H}_{n}$, denoted as $\mathcal{H}_{n}^{i}$, $i=1,2,\ldots,9$. For any minimum dominating set $\mathcal{D}_2$ of $\mathcal{S}^+_{n+2}$,  which is obtained from  $\mathcal{H}_{n+2}$  by adding a special vertex $s$ connected to the three extreme vertices of   $\mathcal{H}_{n+2}$,
if we set   $\Delta= \mathcal{V}(\mathcal{H}_{n}^{i})$,  then with respect to  $\mathcal{D}_2$, the corresponding three extreme vertices  in  $\mathcal{H}_{n}^{i}$ could possibly belong to one of three  types:  (i) C-C-C,  (ii)  I-D-D,  (iii)  I-I-C, because any vertex in $\mathcal{D}_2$ dominates exactly four vertices including itself,  each without being dominated by  any other vertex in  $\mathcal{D}_2$.  However, the type  I-I-C actually does not exist as stated in the following lemma.

\begin{figure}
\centering
 \includegraphics[width=0.45\textwidth]{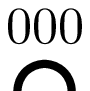}
\caption{\label{fig:31}Illustration of all four minimum dominating sets and four different disjoint classes of vertices or in $\mathcal{S}^+_{3}$. The vertices with the same color form a class or a  minimum dominating set.}
\end{figure}

 \begin{lemma}\label{lem42}
For any minimum domination set  of $\mathcal{S}^+_{n+2}$  with  odd  $n\geq 1$,  the type  I-I-C for  the three extreme vertices  in  any $\mathcal{H}_{n}^{i}$, $i=1,2,\ldots,9$,  is not possible to exist.
\end{lemma}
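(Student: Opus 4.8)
The plan is to exploit that, for odd $n+2$, every minimum dominating set $\mathcal{D}_2$ of $\mathcal{S}^+_{n+2}$ is forced to be a \emph{perfect} (efficient) dominating set. Since $\mathcal{S}^+_{n+2}$ is $3$-regular, each chosen vertex dominates at most four vertices, and because $4\gamma_{n+2}=3^{n+2}+1=V_{n+2}$ by Theorem~\ref{Thm31}, equality forces every vertex to be dominated \emph{exactly once} --- this is the property already quoted before the lemma. I would first record the incidence fact that each extreme vertex of a copy $\mathcal{H}^i_n$ has precisely one neighbour outside $\mathcal{V}(\mathcal{H}^i_n)$ (either the apex $s$ or the single join edge to a sibling copy), whereas every non-extreme vertex of $\mathcal{H}^i_n$ keeps all three neighbours inside; hence a vertex of type C must be an extreme vertex, dominated through its unique external edge. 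A plain parity count on a copy (balancing $3^n$ against $4$ times the number of internal dominators, corrected by the numbers of type-I and type-C extreme vertices) only narrows the admissible boundary triples to C-C-C, I-D-D, and I-I-C, so a genuinely structural argument is needed to remove I-I-C.

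The crux is a \emph{gluing rule} along any edge $u\sim v$ joining two copies, where $u,v$ are extreme vertices of their respective copies and this edge is the unique external edge of each. Because the domination is perfect and $u,v$ are adjacent, they cannot both lie in $\mathcal{D}_2$ (else $u$ would be dominated by itself and by $v$). Inspecting the remaining cases shows that, relative to the two copies, the ordered type pair of $(u,v)$ is one of $(\mathrm{I},\mathrm{C})$, $(\mathrm{C},\mathrm{I})$, or $(\mathrm{D},\mathrm{D})$: if $u\in\mathcal{D}_2$ then $v$ is dominated solely through the join and is type C, and if neither endpoint is chosen then each must be dominated internally and is type D. In particular no join edge can carry two type-I endpoints, and a type-C endpoint forces its partner to be type I.

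With these tools I would induct along the Hanoi recursion $\mathcal{H}_m=\mathcal{H}^A_{m-1}\cup\mathcal{H}^B_{m-1}\cup\mathcal{H}^C_{m-1}$, carrying the two-parity invariant that the boundary type-triple of an efficiently dominated $\mathcal{H}_m$ (external dominators permitted only at its extreme vertices) lies in $\{\text{I-D-D and its permutations},\ \text{C-C-C}\}$ for odd $m$ and in $\{\text{I-I-I},\ \text{C-D-D and its permutations}\}$ for even $m$. The base case $m=1$ is immediate, since $\mathcal{H}_1=K_3$ forbids two adjacent dominators. For the inductive step I would use that the three sibling copies are pairwise joined by exactly one edge each, so the three join edges pair the six non-global sub-extremes while the three \emph{global} extremes occupy precisely the unique unjoined slot of their own copies. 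Feeding the allowed sub-triples of $\mathcal{H}_{m-1}$ into the gluing rule leaves only a few globally consistent patterns. The decisive quantitative consequence in the odd step is that, since the even-level triples are I-I-I or C-D-D, two copies of the all-I type would share a join and produce a forbidden $(\mathrm{I},\mathrm{I})$ pair; hence at most one sibling copy is all-I, and only an all-I copy can place a type-I at its global slot. Therefore the glued global triple has at most one type-I extreme vertex, which already excludes two type-I extremes and yields exactly I-D-D or C-C-C (the even step is the symmetric computation, replacing the role of I by C). Specialising the odd case to $m=n$ and applying it to each copy $\mathcal{H}^i_n$ inside $\mathcal{S}^+_{n+2}$ gives Lemma~\ref{lem42}.

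The main obstacle is selecting the correct induction hypothesis: the statement about odd $n$ alone does not close under the order-one recursion, so one must thread a companion characterization for even $m$ through the argument and verify that the even-level triples never reintroduce a type-I at a join position. A secondary care point is the bookkeeping of the join incidences --- which sub-extreme is paired with which, and the fact that the three global extremes sit at the non-join slots --- since the exclusion of I-I-C ultimately rests on the bound ``at most one all-I sibling copy.'' The parity count is deliberately too weak here, which is exactly why the gluing rule together with the two-parity induction is indispensable.
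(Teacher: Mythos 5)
Your proposal is correct, and while it rests on the same two local ingredients as the paper --- the efficiency of any minimum dominating set of $\mathcal{S}^+_{n+2}$ for odd $n$ (every vertex dominated exactly once) and the gluing rule across a join edge (type C pairs only with type I, and D only with D, which the paper states as its rules (1) and (2)) --- the induction you build on top of them is genuinely different. The paper inducts in steps of two, staying at odd indices: it assumes no I-I-C triple occurs at level $n-2$, supposes one occurs on some $\mathcal{H}^i_n$, and propagates types around the nine level-$(n-2)$ constituents (Fig.~\ref{fig:TypeIIC}) until the topmost one is forced to carry two type-I extreme vertices, a contradiction. You instead run a one-step induction through both parities, carrying a complete classification of admissible boundary triples (I-D-D up to permutation or C-C-C at odd levels; I-I-I or C-D-D up to permutation at even levels) from the base case $K_3$, and the case analysis you sketch does close: at most one sibling copy can be all-I since two would meet in a forbidden (I,I) join, and a copy with no I anywhere must park its C (or its I, in the odd step) at the unjoined global slot, so the glued triple is forced. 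Your route buys three things. First, the invariant is stronger than the lemma itself, and its odd-level half is exactly what the proof of Theorem~\ref{Thm32} consumes afterwards. Second, your formulation of the hypothesis (``efficiently dominated copy, external dominators permitted only at its extreme vertices'') is closed under restriction to subcopies, which tightens a looseness in the paper, whose induction hypothesis is phrased for minimum dominating sets of $\mathcal{S}^+_{n}$ yet is applied to restrictions of a minimum dominating set of $\mathcal{S}^+_{n+2}$. Third, you justify efficiency explicitly from $4\gamma_{n+2}=V_{n+2}$ (Theorem~\ref{Thm31}) together with $3$-regularity, which the paper leaves implicit. The cost is the obligation to prove the even-level companion statement, which the paper's odd-only scheme never needs --- and, as your own parity count shows, that companion cannot be skipped, since parity alone leaves I-I-C (and I-C-C at even levels) on the table.
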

\begin{proof}
Let $\mathcal{D}_2$ be a minimum dominating set of $\mathcal{S}^+_{n+2}$.  As shown above,  if we restrict $\Delta= \mathcal{V}(\mathcal{H}_{n+2})$, then for the three  extreme  vertices of  $\mathcal{H}_{n+2}$, either all of them are of type C, or  one extreme vertex   is of type I, while the other two are of type D.  
For any pair of adjacent extreme  vertices  belonging to two different copies $\mathcal{H}_{n}^{i}$, their types under $\Delta= \mathcal{V}(\mathcal{H}_{n}^{i})$ obey the following two rules: (1) Type C is adjacent to only type I, and vice versa; (2) D is adjacent to only D.
According to these rules, by induction we can prove that in $\mathcal{S}^+_{n+2}$, the type  (iii)  I-I-C for  the three extreme vertices  in  any $\mathcal{H}_{n}^{i}$ actually doesn't exist.

For $n=1$, it is easy to check by hand that there are only four minimum dominating sets in $\mathcal{S}^+_{3}$, see Fig.~\ref{fig:31}, where  type I-I-C is not existent. Assume that  the result holds for $n-2$ ($n \geq 3$), i.e. for any minimum dominating set in $\mathcal{S}^+_{n}$  there is no type I-I-C  in any  $\mathcal{H}_{n-2}^{i}$, $i=1,2,\ldots,9$,   when setting $\Delta= \mathcal{V}(\mathcal{H}_{n-2}^{i})$.  For $n$, by contradiction, suppose that  type  I-I-C exists for the three extreme vertices of a subgraph $\mathcal{H}_{n}^{i}$ of $\mathcal{S}^+_{n+2}$, Fig.~\ref{fig:TypeIIC} shows that the extreme vertex $\bigstar$ of  the topmost $\mathcal{H}_{n-2}^{i'}$ as a constituent of some $\mathcal{H}_{n}^{i}$ should be of type I. This is not possible to exist, since for this case, at least two extreme vertices in  the topmost $\mathcal{H}_{n-2}^{i'}$  are of type I, which is in contradiction to the induction hypothesis. This completes the lemma.
\end{proof}

\begin{figure}
\centering
\includegraphics[width=0.4\textwidth]{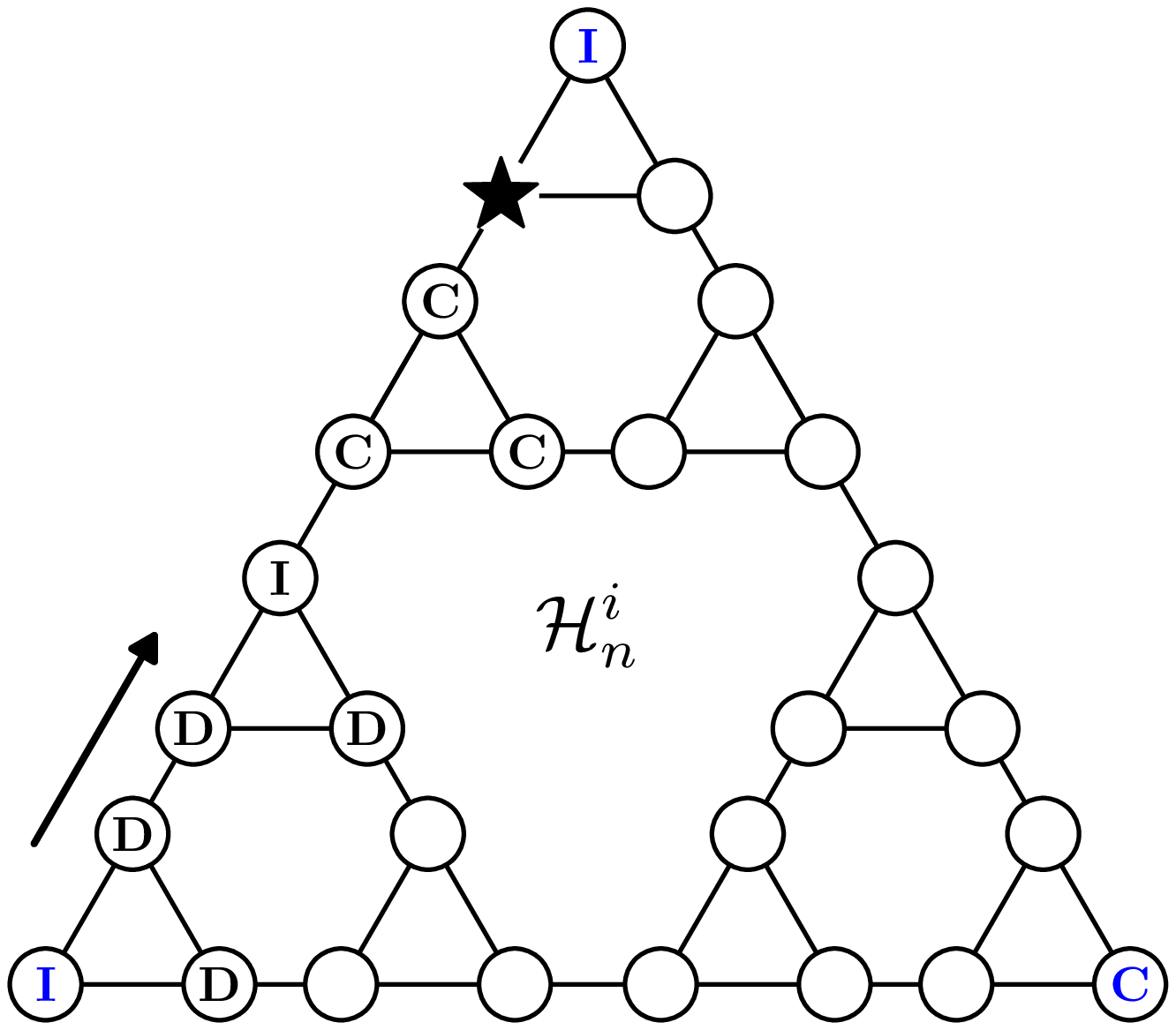}
\caption{\label{fig:TypeIIC} Types of extreme vertices for  a certain $\mathcal{H}_{n}^{i}$ as one of the nine constituents of   $\mathcal{H}_{n+2}$. Assume that the types of three extreme vertices are assumed to be I-I-C.  According to the rules governing the types of adjacent  extreme  vertices  belonging to two different copies  $\mathcal{H}_{n-2}^{i'}$, $i'=1,2,\ldots,9$, the extreme vertex $\bigstar$ of the topmost $\mathcal{H}_{n-2}^{i'}$ cannot have a legal type, when we label the vertex types along the arrow.}
\end{figure}

We now determine all minimum dominating sets in $\mathcal{S}^+_n$ for odd $n$.
Note that for  odd $n$, all vertices of  $\mathcal{S}^+_n$ can be categorized  into four disjoint classes:
$$ \mathcal{X}^1_n=\{\xi \in \mathcal{V}(\mathcal{H}_n) | f^{0}_{\xi} \, {\rm and} \,  f^{1}_{\xi}\,    {\rm are\, even} \},$$
$$\mathcal{X}^2_n=\{\xi \in \mathcal{V}(\mathcal{H}_n) | f^{0}_{\xi} \, {\rm and} \,  f^{2}_{\xi}\,    {\rm are\, even} \},$$
$$\mathcal{X}^3_n=\{\xi \in \mathcal{V}(\mathcal{H}_n) | f^{1}_{\xi} \, {\rm and} \,  f^{2}_{\xi}\,    {\rm are\, even} \},$$
$$\mathcal{X}^4_n=\{\xi \in \mathcal{V}(\mathcal{H}_n) | f^{0}_{\xi}, \,  f^{1}_{\xi}\,   {\rm and} \,    f^{2}_{\xi}\,{\rm are\, odd} \}\cup \{s\},$$
satisfying that $\mathcal{V}(\mathcal{S}^+_n)=\mathcal{X}^1_n\cup\mathcal{X}^2_n\cup \mathcal{X}^3_n \cup\mathcal{X}^4_n$ and $\mathcal{X}^1_n\cap \mathcal{X}^2_n\cap \mathcal{X}^3_n \cap\mathcal{X}^4_n= {\O} $.  Then $|\mathcal{X}^1_n |=| \mathcal{X}^2_n|=|\mathcal{X}^3_n|=|\mathcal{X}^4_n|=\frac{3^n+1}{4}$.   Fig.~\ref{fig:31} illustrates  the four distinct classes  of vertices in $\mathcal{S}^+_{3}$.

\begin{theo}\label{Thm32}
For each odd $n\geq 1$,  there are exactly  four distinct minimum dominating  sets, $\mathcal{X}^1_n$,  $\mathcal{X}^2_n$, $\mathcal{X}^2_n$, and $\mathcal{X}^4_n$, in the extended Tower of Hanoi graph $\mathcal{S}^+_n$.
\end{theo}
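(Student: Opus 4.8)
The plan is to reduce the enumeration to counting efficient dominating sets and then to exploit the recursive structure together with Lemma~\ref{lem42}. First I would observe that $\mathcal{S}^+_n$ is $3$-regular with $V_n=3^n+1$ vertices, and that Theorem~\ref{Thm31} gives $\gamma_n=\frac{3^n+1}{4}=V_n/4$ for odd $n$. Since a vertex of a $3$-regular graph dominates exactly $4$ vertices (itself and its three neighbours), any dominating set $\mathcal{D}$ satisfies $\mathcal{V}(\mathcal{S}^+_n)=\bigcup_{v\in\mathcal{D}}N[v]$, so $V_n\leq 4|\mathcal{D}|$; as $4\gamma_n=V_n$, equality is forced, and hence every minimum dominating set is an \emph{efficient} dominating set: the closed neighbourhoods of its members partition $\mathcal{V}(\mathcal{S}^+_n)$ and no vertex is dominated twice. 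Thus counting minimum dominating sets is the same as counting perfect codes, and this is precisely the ``each vertex dominates four vertices, none doubly'' property already invoked before Lemma~\ref{lem42}.

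Next I would confirm that the four candidates are genuine minimum dominating sets and are pairwise distinct. The set $\mathcal{X}^1_n$ is exactly the set $\mathcal{D}$ built in the proof of Theorem~\ref{Thm31}, whose verification there (pairwise distance $3$, each non-member adjacent to a unique member) is precisely the perfect-code property. The peg-relabelling action of $S_3$ on $\mathcal{H}_n$ fixes $s$ and permutes the parity signatures, carrying $\mathcal{X}^1_n$ to $\mathcal{X}^2_n$ and $\mathcal{X}^3_n$, so these are also efficient dominating sets. For $\mathcal{X}^4_n$ I would check directly, by the same parity/distance argument, that the all-odd-parity vertices dominate $\mathcal{V}(\mathcal{H}_n)\setminus\{a_n,b_n,c_n\}$ exactly once each while $s$ dominates itself and the three extreme vertices, again a perfect code. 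Distinctness is immediate from the parity signatures and from the fact that only $\mathcal{X}^4_n$ contains $s$.

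The main work is to show there are no others, which I would carry out by induction on odd $n$, classifying an arbitrary efficient dominating set $\mathcal{D}$ by the type pattern of $a_n,b_n,c_n$ relative to $\Delta=\mathcal{V}(\mathcal{H}_n)$. By the discussion preceding Lemma~\ref{lem42} this pattern is one of D-D-I, D-I-D, I-D-D, C-C-C, corresponding respectively to $\mathcal{X}^1_n,\mathcal{X}^2_n,\mathcal{X}^3_n,\mathcal{X}^4_n$ (the first three each containing one extreme vertex as type I, the last containing $s$). It therefore suffices to prove each pattern is realised by a \emph{unique} efficient dominating set. The base case $n=1$ is immediate: $\mathcal{S}^+_1=K_4$, whose four minimum dominating sets are the singletons $\{a_1\}=\mathcal{X}^3_1$, $\{b_1\}=\mathcal{X}^2_1$, $\{c_1\}=\mathcal{X}^1_1$, and $\{s\}=\mathcal{X}^4_1$. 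For the step I would decompose $\mathcal{H}_{n+2}$ into its nine copies $\mathcal{H}^i_n$; Lemma~\ref{lem42} forbids type I-I-C, so each copy carries boundary type C-C-C or a permutation of I-D-D. The adjacency rules that C meets only I and D meets only D (forced by efficiency across the connecting edges), together with the induction hypothesis applied inside each copy, then leave only the global assignments that reassemble into $\mathcal{X}^1_{n+2},\ldots,\mathcal{X}^4_{n+2}$.

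The hardest part is this inductive step. The delicate point is that the induction hypothesis must be phrased for $\mathcal{H}_n$ with \emph{prescribed} extreme-vertex types rather than merely for $\mathcal{S}^+_n$, so that a copy with external domination (type C-C-C) is recognised as forcing exactly the all-odd-parity interior, while a copy with one type-I extreme vertex forces exactly the matching class-restriction. Granting this, what remains is a finite but careful case check: using the two adjacency rules to propagate types around the nine copies and showing that the only nine-copy type-configurations compatible with a single global efficient dominating set are the four produced by $\mathcal{X}^1_{n+2},\ldots,\mathcal{X}^4_{n+2}$. Ruling out every mixed configuration via the no-double-domination constraint across the three connecting edges is where the real bookkeeping lies.
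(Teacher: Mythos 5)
Your proposal is correct and follows essentially the same route as the paper's own proof: verify the four parity-defined sets are minimum dominating sets, then induct on odd $n$ via the nine-copy decomposition of $\mathcal{H}_{n+2}$, using the I/D/C type classification, Lemma~\ref{lem42}, and the adjacency rules to show the boundary types of each copy uniquely determine its interior. Your two refinements --- making the perfect-code property explicit by the count $4\gamma_n = V_n$, and noting that the induction hypothesis must be stated for $\mathcal{H}_n$ with prescribed extreme-vertex types rather than for $\mathcal{S}^+_n$ alone --- are welcome clarifications of points the paper treats more informally, but they do not change the underlying argument.
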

\begin{proof}
We first prove that each $\mathcal{X}^k_n$, $k=1,2,3,4$,  is  a  minimum dominating  set of $\mathcal{S}^+_n$, following a similar approach  in~\cite{LiNe98}.
By definition,  the  distance between any pair of nearest vertices in  $\mathcal{X}^k_n$  is 3,  implying  any two vertices in $\mathcal{X}^k_n$ have no common neighbors~\cite{LiNe98}. In addition, the degree of any vertex in $\mathcal{X}^k_n$ is 3 with each neighbor belonging to three different subsets $\mathcal{X}^{k'}_n$ other than $\mathcal{X}^k_n$, and any vertex in $\mathcal{V}(\mathcal{H}_n) \setminus \mathcal{X}^k_n$ is linked to exactly one  vertex in $\mathcal{X}^k_n$.  Thus,  every  $\mathcal{X}^k_n$, $k=1,2,3,4$,  is  a minimum dominating  set of $\mathcal{S}^+_n$.


For the  four minimum domination sets  $\mathcal{X}^1_n$,  $\mathcal{X}^2_n$, $\mathcal{X}^3_n$,  $\mathcal{X}^4_n$ in $\mathcal{S}^+_n$, considering the vertex subset  $\mathcal{V}(\mathcal{H}_n)\subset \mathcal{V}(\mathcal{S}^+_n)$,  we have that  the types of the three extreme vertices $a_n$, $b_n$, and $c_n$ are  in succession D-D-I, D-I-D, I-D-D, and C-C-C, respectively.
We now prove that   $\mathcal{X}^1_n$,  $\mathcal{X}^2_n$, $\mathcal{X}^3_n$, and $\mathcal{X}^4_n$ are the only four possible  minimum dominating  sets of $\mathcal{S}^+_n$  by induction.  For $n=1$, $\mathcal{X}^1_1=\{ c_1\}$,  $\mathcal{X}^2_1=\{ b_1\}$, $\mathcal{X}^3_1=\{a_1\}$, and $\mathcal{X}^4_1=\{ s\}$, all of which are  the possible minimum dominating sets of $\mathcal{S}^+_1$ and are uniquely determined by the types of the three extreme vertices.

Suppose that  $\mathcal{X}^1_n$,  $\mathcal{X}^2_n$, $\mathcal{X}^3_n$, and $\mathcal{X}^4_n$ are the only minimum dominating sets of $\mathcal{S}^+_n$.  That means for any  minimum dominating set $\mathcal{X}^k_n$,  $k=1,2,3,4$,  the types of the three extreme vertices of $\mathcal{H}_n$ are determined, it is the same with the  types for extreme vertices of $\mathcal{H}_{n-2}^i$.  In other words,   the types of the three extreme vertices of $\mathcal{H}_n$ are sufficient to determine  all minimum dominating sets in $\mathcal{S}^+_n$.  For  $\mathcal{H}_{n+2}$, we next prove that  $\mathcal{X}^1_{n+2}$,  $\mathcal{X}^2_{n+2}$, $\mathcal{X}^3_{n+2}$, and $\mathcal{X}^4_{n+2}$ are the only possible  minimum dominating  sets of $\mathcal{S}^+_{n+2}$.

As shown above, with respect to any minimum dominating set $\mathcal{D}_2$  of $\mathcal{S}^+_{n+2}$,  for the three   extreme  vertices of  $\mathcal{H}_{n+2}$, either all  are of type C, or  one   is of type I, while the other two should be of type D. Given the initial types of the  three   extreme  vertices of  $\mathcal{H}_{n+2}$ as those in $\mathcal{X}^1_{n+2}$,  $\mathcal{X}^2_{n+2}$, $\mathcal{X}^3_{n+2}$, and $\mathcal{X}^4_{n+2}$,  the types of  the extreme  vertices of  its nine subgraphs $\mathcal{H}^i_{n}$, $i=1,2,\ldots,9$, are uniquely determined, belonging to D-D-I, D-I-D, I-D-D, or C-C-C,  see Fig.~\ref{fig:MiniSets}.  By induction hypothesis, given initial types the three extreme vertices $a_{n+2}$, $b_{n+2}$, and $c_{n+2}$,  the vertices  in $\mathcal{H}^i_{n}$ that are in a minimum dominating set are uniquely determined.  Consequently,  $\mathcal{X}^1_{n+2}$,  $\mathcal{X}^2_{n+2}$, $\mathcal{X}^3_{n+2}$, and $\mathcal{X}^4_{n+2}$ are the only four minimum dominating  sets of $\mathcal{S}^+_{n+2}$.
Thus the proof is completed.
\end{proof}

\begin{figure}
\centering
\includegraphics[width=0.65\textwidth]{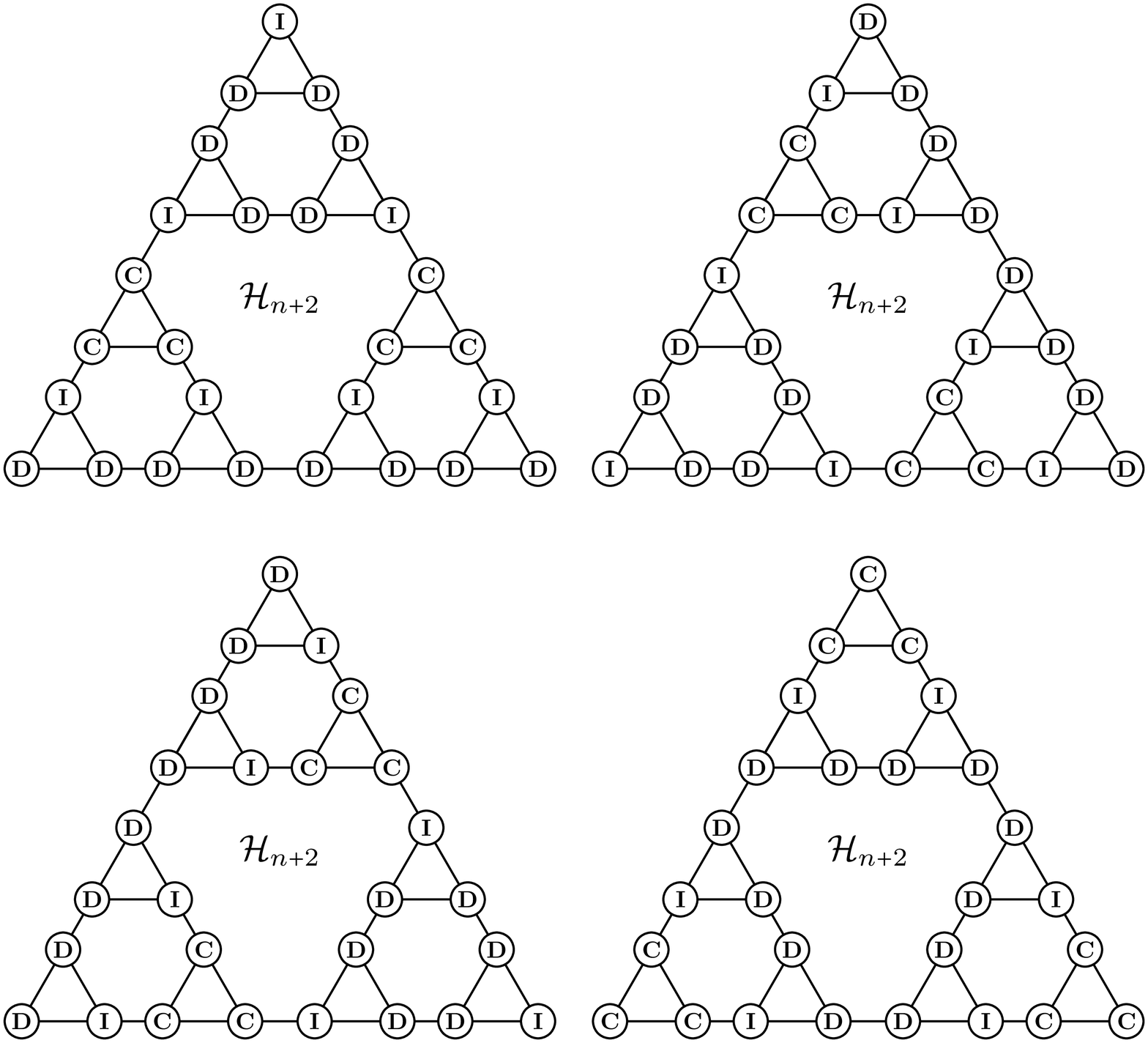}
\caption{\label{fig:MiniSets} Types of extreme vertices for  $\mathcal{H}_{n}^{i}$, $i=1,2,\ldots,9$ as  nine constituents of   $\mathcal{H}_{n+2}$,  given  the types of its three extreme vertices $a_{n+2}$,  $b_{n+2}$,  $c_{n+2}$.}
\end{figure}

Note that  1-perfect codes~\cite{KlMiPe02} and $L (2, 1)$-labelings~\cite{GrKlMo05} have been studied for Sierpi{\'n}ski graphs, which are generalization of  the Tower of  Hanoi graphs.   1-perfect codes are also called efficient dominating sets. For a graph $\mathcal{G}=(\mathcal{V},\mathcal{E})$, a subset $\mathcal{C}$ of  $\mathcal{V}$ is a  1-perfect code if and only if the neighbours of every vertex $v$ in $\mathcal{C}$, including $v$ itself,  form a partition of $\mathcal{V}$. In~\cite{KlMiPe02}, it was shown that there are exactly 3 1-perfect codes in $\mathcal{H}_n$ for odd $n$; while  in~\cite{GrKlMo05},  it was  proved that there exists an almost perfect code in  $\mathcal{H}_n$. In the first case the special vertex $s$ in not added to a perfect code, while in the second case $s$ is added to an almost perfect code. This yields  four minimum dominating sets of $\mathcal{S}^+_{n}$ as stated in Theorem~\ref{Thm32}. However, our proof for Theorem~\ref{Thm32} is short, since we only touch on the Hanoi graphs, while the two previous papers~\cite{KlMiPe02,GrKlMo05}  deal  with a broad class of Sierpi{\'n}ski graphs, subsuming the Hanoi graphs as particular cases. Thus, to some extent our work and previous ones~\cite{KlMiPe02,GrKlMo05} are  complementary to each other, at least for the Hanoi graphs $\mathcal{H}_{n}$ with odd $n$.  

Fig.~\ref{fig:31}  provides  the four different minimum dominating sets for $\mathcal{S}^+_{3}$.  Although for odd $n$, there are only four possible minimum dominating sets for $\mathcal{S}^+_{n}$, for even $n$, the number of minimum dominating sets  in  $\mathcal{S}^+_{n}$ is high. For example, numerical result shows that there are 22 different minimum dominating sets  in  $\mathcal{S}^+_{2}$. Future work should include explicit  determination of the number of minimum dominating sets  in  $\mathcal{S}^+_{n}$  for even $n$. 


\section*{Acknowledgements}

This work is supported by the National Natural Science Foundation of China under Grant No. 11275049. The authors are grateful to the anonymous reviewers for their valuable comments and suggestions, which have led to improvement of this paper.




\end{document}